\newif\iflong
\newcommand{\lng}[1]{\iflong#1\fi}
\newcommand{\shrt}[1]{\iflong\else#1\fi}
\newif\ifcomment
\DeclareMathOperator*{\argmax}{arg\,max}
\DeclareMathOperator*{\argmin}{arg\,min}
\DeclareMathOperator*{\sumT}{\text{sum}}
\DeclareMathOperator*{\ecc}{\text{ecc}}
\DeclareMathOperator*{\id}{\text{id}}
\title{Fast Hybrid Network Algorithms for Shortest Paths in Sparse Graphs}
\author{Michael Feldmann}{Paderborn University, Germany}{michael.feldmann@upb.de}{}{}
\author{Kristian Hinnenthal}{Paderborn University, Germany}{krijan@mail.upb.de}{}{}
\author{Christian Scheideler}{Paderborn University, Germany}{scheideler@upb.de}{}{}
\authorrunning{M. Feldmann, K. Hinnenthal, C. Scheideler}
\keywords{hybrid networks,
overlay networks,
sparse graphs,
cactus graphs}
\begin{document}

\maketitle

\begin{abstract}
We consider the problem of computing shortest paths in \emph{hybrid networks}, in which nodes can make use of different communication modes. For example, mobile phones may use ad-hoc connections via Bluetooth or Wi-Fi in addition to the cellular network to solve tasks more efficiently. Like in this case, the different communication modes may differ considerably in range, bandwidth, and flexibility. We build upon the model of Augustine et al. [SODA '20], which captures these differences by a \emph{local} and a \emph{global} mode. Specifically, the local edges model a fixed communication network in which $O(1)$ messages of size $O(\log n)$ can be sent over every edge in each synchronous round. The global edges form a clique, but nodes are only allowed to send and receive a total of at most $O(\log n)$ messages over global edges, which restricts the nodes to use these edges only very sparsely.

We demonstrate the power of hybrid networks by presenting algorithms to compute Single-Source Shortest Paths and the diameter very efficiently in \emph{sparse graphs}. Specifically, we present exact $O(\log n)$ time algorithms for cactus graphs (i.e., graphs in which each edge is contained in at most one cycle), and $3$-approximations for graphs that have at most $n + O(n^{1/3})$ edges and arboricity $O(\log n)$. For these graph classes, our algorithms provide exponentially faster solutions than the best known algorithms for general graphs in this model.
Beyond shortest paths, we also provide a variety of useful tools and techniques for hybrid networks, which may be of independent interest.
\end{abstract}

\section{Introduction} \label{sec:spa:intro}

The idea of \emph{hybrid networks} is to leverage multiple communication modes with different characteristics to deliver scalable throughput, or to reduce complexity, cost or power consumption.
In \emph{hybrid data center networks}~\cite{CGC16}, for example, the server racks can make use of optical switches \shrt{\cite{FPR+10}}\lng{\cite{FPR+10, WAK+10}} or wireless antennas \shrt{\cite{CWC11}}\lng{\cite{CWC11, CZL+13, HKPBW11, zzz+12}} to establish direct connections in addition to using the traditional electronic packet switches.
Other examples of hybrid communication are combining multipoint with standard VPN connections \cite{RS11}, hybrid WANs \cite{TBKC18}, or mobile phones using device-to-device communication in addition to cellular networks as in 5G \cite{KS18}.
As a consequence, several theoretical models and algorithms have been proposed for hybrid networks in recent years \cite{GHSS17, JKSS18, AGG+19, AHK+20}.

In this paper, we focus on the general hybrid network model of Augustine et al. \cite{AHK+20}.
The authors distinguish two different modes of communication, a \emph{local} mode, which nodes can use to send messages to their neighbors in an input graph $G$, and a \emph{global} mode, which allows the nodes to communicate with \emph{any} other node of $G$.
The model is parameterized by the number of messages $\lambda$ that can be sent over each local edge in each round, and the total number of messages $\gamma$ that each node can send and receive over global edges in a single round.
Therefore, the local network rather relates to \emph{physical} networks, where an edge corresponds to a dedicated connection that cannot be adapted by the nodes, e.g., a cable, an optical connection, or a wireless ad-hoc connection.
On the other hand, the global network captures characteristics of \emph{logical} networks, which are formed as overlays of a shared physical infrastructure such as the internet or a cellular network.
Here, nodes can in principle contact any other node, but can only perform a limited amount of communication in each round.

Specifically, we consider the hybrid network model with $\lambda = O(1)$ and $\gamma = O(\log n)$, i.e., the local network corresponds to the \textsf{CONGEST} model~\cite{Pel00}, whereas the global network is the so-called \emph{node-capacitated clique} (NCC) \cite{AGG+19, AAC+05, Rob20}.
Thereby, we only grant the nodes very limited communication capabilities for both communication modes, disallowing them, for example, to gather complete neighborhood information to support their computation.
With the exception of a constant factor SSSP approximation, none of the shortest paths algorithms of \cite{AHK+20}, for example, can be directly applied to this very restricted setting, since \cite{AHK+20} assumes the \textsf{LOCAL} model for the local network.
Furthermore, our algorithms do not even exploit the power of the NCC for the global network; in fact, they would also work if the nodes would initially only knew their neighbors in $G$ and had to learn new node identifiers via introduction (which has recently been termed the \emph{NCC$_0$} model~\cite{ACC+20}).

As in \cite{AHK+20}, we focus on \emph{shortest paths problems}.
However, instead of investigating general graphs, we present polylogarithmic time algorithms to compute Single-Source Shortest Paths (SSSP) and the diameter in \emph{sparse graphs}.
Specifically, we present randomized $O(\log n)$ time algorithms for \emph{cactus graphs}, which are graphs in which any two cycles share at most one node.
Cactus graphs are relevant for wireless communication networks, where they can model combinations of star/tree and ring networks (e.g., \cite{BDST12}), or combinations of ring and bus structures in LANs (e.g., \cite{LW00}).
However, research on solving graph problems in cactus graphs mostly focuses on the sequential setting\lng{ (e.g., \cite{BDST12, LW00, BH17, Das12, DP08, LWS99})}.

Furthermore, we present $3$-approximate randomized algorithms with runtime $O(\log^2 n)$ for graphs that contain at most $n + O(n^{1/3})$ edges and have arboricity\footnote{The arboricity of a graph $G$ is the minimum number of forests into which its edges can be partitioned.} $O(\log n)$.
Graphs with bounded arboricity, which include important graph families such as planar graphs, graphs with bounded treewidth, or graphs that exclude a fixed minor, have been extensively studied in the past years.
Note that although these graphs are very sparse, in contrast to cactus graphs they may still contain a polynomial number of (potentially nested) cycles.
Our algorithms are exponentially faster than the best known algorithms for general graphs for shortest paths problems~\cite{AGG+19, KS20}.

For the \emph{All-Pairs Shortest Paths} (APSP) problem, which is not studied in this paper, there is a lower bound of $\widetilde{\Omega}(\sqrt{n})$ \cite[Theorem 2.5]{AHK+20} that even holds for $\widetilde{O}(\sqrt{n})$-approximations\footnote{The $\widetilde{O}$-notation hides polylogarithmic factors.}. 
Recently, this lower bound was shown to be tight up to polylogarithmic factors~\cite{KS20}.
The bound specifically also holds for trees, which, together with the results in this paper, shows an exponential gap between computing the diameter and solving APSP in trees.
Furthermore, the results of \cite{KS20} show that computing (an approximation of) the diameter in general graphs takes time roughly $\Omega(n^{1/3})$ (even with unbounded local communication).
Therefore, our paper demonstrates that sparse graphs allow for an exponential improvement.



\subsection{Model and Problem Definition} \label{sec:spa:model}
We consider a \emph{hybrid network model} in which we are given a fixed node set $V$ consisting of $n$ nodes that are connected via \emph{local} and \emph{global} edges.
The local edges form a fixed, undirected, and weighted graph $G = (V,E,w)$ (the \emph{local network}), where the edge weights are given by $w: E \rightarrow \{1, \ldots, W\} \subset \mathbb{N}$ and $W$ is assumed to be polynomial in $n$.
We denote the degree of a node $v$ in the local network by $\deg(v)$.
Furthermore, every two nodes $u, v \in V$ are connected via a global edge, i.e., the \emph{global network} forms a clique.
Every node $v \in V$ has a unique identifier $\id(v)$ of size $O(\log n)$, and, since the nodes form a clique in the global network, every node knows the identifier of every other node.
Although this seems to be a fairly strong assumption, our algorithms would also work in the NCC$_0$ model \cite{ACC+20} for the global network, in which each node initially only knows the identifiers of its neighbors in $G$, and new connections need to be established by sending node identifiers (which is very similar to the overlay network models of \cite{GHSS17, AS18, GHS19}).
We further assume that the nodes know $n$ (or an upper bound polynomial in $n$).

We assume a synchronous message passing model, where in each round every node can send messages of size $O(\log n)$ over both local and global edges.
Messages that are sent in round $i$ are collectively received at the beginning of round $i+1$.
However, we impose different communication restrictions on the two network types.
Specifically, every node can send $O(1)$ (distinct) messages over each of its incident local edges, which corresponds to the \textsf{CONGEST} model for the local network \cite{Pel00}.
Additionally, it can send and receive at most $O(\log n)$ many messages over global edges (where, if more than $O(\log n)$ messages are sent to a node, an arbitrary subset of the messages is delivered), which corresponds to the NCC model \cite{AGG+19}.
Therefore, our hybrid network model is precisely the model proposed in \cite{AHK+20} for parameters $\lambda = O(1)$ and $\gamma = O(\log n)$.
Note that whereas \cite{AHK+20} focuses on the much more generous \textsf{LOCAL} model for the local network, our algorithms do not require nor easily benefit from the power of unbounded communication over local edges.

We define the \emph{length} of a path $P \subseteq E$ as $w(P) := \sum_{e \in P} w(e)$.
A path $P$ from $u$ to $v$ is a \emph{shortest path}, if there is no path $P'$ from $u$ and $v$ with $w(P') < w(P)$.
The \emph{distance} between two nodes $u$ and $v$ is defined as $d(u,v) := w(P)$, where $P$ is a shortest path from $u$ to $v$.

In the \emph{Single-Source Shortest Paths Problem} (SSSP), there is one node $s \in V$ and every node $v \in V$ wants to compute $d(s,v)$.
In the \emph{Diameter Problem}, every node wants to learn the \emph{diameter} $D := \max_{u,v \in V} d(u,v)$.
An algorithm computes an $\alpha$-approximation of SSSP, if every node $v \in V$ learns an estimate $\widetilde{d}(s,v)$ such that $d(s,v) \le \widetilde{d}(s,v) \le \alpha \cdot d(s,v)$.
Similarly, for an $\alpha$-approximation of the diameter, every node $v \in V$ has to compute an estimate $\widetilde{D}$ such that $D \le \widetilde{D} \le \alpha \cdot D$.

\subsection{Contribution and Structure of the Paper} \label{sec:spa:contribution}

The first part of the paper revolves around computing SSSP and the diameter on cactus graphs (i.e., connected graphs in which each edge is only contained in at most one cycle).
For a more comprehensive presentation, we establish the algorithm in several steps.
First, we consider the problems in path graphs (i.e., connected graphs that contain exactly two nodes with degree 1, and every other node has degree $2$; see Section~\ref{sec:spa:path}), then in cycle graphs (i.e., connected graphs in which each node has degree $2$, see Section~\ref{sec:spa:cycle}), trees (Section~\ref{sec:spa:tree}), and pseudotrees (Section~\ref{sec:spa:pseudotree}), which are graphs that contain at most one cycle.
For each of these graph classes, we present deterministic algorithms to solve both problems in $O(\log n)$ rounds, each relying heavily on the results of the previous sections.
We then extend our results to cactus graphs (Section~\ref{sec:spa:cactus_graph}) and present randomized algorithms for SSSP and the diameter with a runtime of $O(\log n)$, w.h.p.\footnote{An event holds with high probability (w.h.p.) if it holds with probability at least $1 - 1/n^c$ for an arbitrary but fixed constant c > 0.}

In Section~\ref{sec:spa:sparse_graph}, we consider a more general class of sparse graphs, namely graphs with at most $n + O(n^{1/3})$ edges and arboricity $O(\log n)$.
By using the techniques established in the first part and leveraging the power of the global network to deal with the additional $O(n^{1/3})$ edges, we obtain algorithms to compute $3$-approximations for SSSP and the diameter in time $O(\log^2 n)$, w.h.p.
As a byproduct, we also derive a deterministic $O(\log^2 n)$-round algorithm for computing a (balanced) hierarchical tree decomposition of the network.

We remark that our algorithms heavily use techniques from the PRAM literature.
For example, \emph{pointer jumping} \cite{JJ92}, and the \emph{Euler tour} technique (e.g., \cite{TV85, AV84}), which extends pointer jumping to certain graphs such as trees, have been known for decades, and are also used in distributed algorithms (e.g., \cite{GHSS17, AS18}).
As already pointed out in \cite{AGG+19}, the NCC in particular has a very close connection to PRAMs.
In fact, if $G$ is very sparse, PRAM algorithms can efficiently be simulated in our model even if the edges are very unevenly distributed (i.e., nodes have a very high degree).
We formally prove this in \shrt{the full version of this paper \cite{full}.}\lng{Appendix~\ref{sec:spa:pram_simulation}.}
This allows us to obtain some of our algorithms for path graphs, cycle graphs, and trees by PRAM simulations (see Section~\ref{sec:spa:related}).
We nonetheless present our distributed solutions without using PRAM simulations, since (1) a direct simulation \lng{as in Appendix~\ref{sec:spa:pram_simulation} }only yields randomized algorithms, (2) the algorithms of the later sections heavily build on the basic algorithms of the first sections, (3) a simulation exploits the capabilities of the global network more than necessary.
As already pointed out, \emph{all} of our algorithms would also work in the weaker NCC$_0$ model for the global network, or if the nodes could only contact $\Theta(\log n)$ random nodes in each round.\footnote{We remark that for the algorithms in Section~\ref{sec:spa:sparse_graph} this requires to setup a suitable overlay network like a butterfly in time $O(\log^2 n)$, which can be done using well-known techniques.}
Furthermore, if we restrict the degree of $G$ to be $O(\log n)$, our algorithms can be modified to run in the NCC$_0$ without using the local network.

Beyond the results for sparse graphs, this paper contains a variety of useful tools and results for hybrid networks in general, such as Euler tour and pointer jumping techniques for computation in trees, a simple load-balancing framework for low-arboricity graphs, an extension of the recent result of Götte et al. \cite{GHSW20} to compute spanning trees in the NCC$_0$, and a technique to perform matrix multiplication.
In combination with sparse spanner constructions (see, e.g., \cite{BS07}) or skeletons (e.g., \cite{UY91}), our algorithms may lead to efficient shortest path algorithms in more general graph classes.
Also, our algorithm to construct a hierarchical tree decomposition may be of independent interest, as such constructions are used for example in routing algorithms for wireless networks (see, e.g., \cite{GZ05, KMRS18}).

\shrt{Due to space constraints, all proofs and figures, as well as the detailed description and some lemmas of our algorithms, are deferred to the full version of this paper~\cite{full}.}

\subsection{Further Related Work} \label{sec:spa:related}
As theoretical models for hybrid networks have only been proposed recently, only few results for such models are known at this point \cite{GHSS17,AGG+19,AHK+20}.
Computing an exact solution for SSSP in arbitrary graphs can be done in $\widetilde{O}(\sqrt{\textsf{SPD}})$ rounds~\cite{AHK+20}, where $\textsf{SPD}$ is the so-called \emph{shortest path diameter} of $G$.
For large $\textsf{SPD}$, this bound has recently been improved to $\widetilde{O}(n^{2/5})$~\cite{KS20}.
The authors of~\cite{AHK+20} also present several approximation algorithms for SSSP: A $(1+\varepsilon)$-approximation with runtime $\widetilde{O}(n^{1/3}/\varepsilon^6)$, a $(1/\varepsilon)^{O(1/\varepsilon)}$-approximation running in $\widetilde{O}(n^{\varepsilon})$ rounds and a $2^{O(\sqrt{\log n \log \log n})}$-approximation with runtime $2^{O(\sqrt{\log n \log \log n})}$.
For APSP there is an exact algorithm that runs in $\widetilde{O}(n^{2/3})$ rounds, a $(1+\varepsilon)$-approximation running in $\widetilde{O}(\sqrt{n/\varepsilon})$ rounds (only for unweighted graphs) and a $3$-approximation with runtime $\widetilde{O}(\sqrt{n})$~\cite{AHK+20}.
In \cite{KS20}, the authors give a lower bound of $\widetilde{\Omega}(n^{1/3})$ rounds for computing the diameter in arbitrary graphs in our model.
They also give approximation algorithms with approximation factors $(3/2 + \varepsilon)$ and $(1+\varepsilon)$ that run in time $\widetilde{O}(n^{1/3}/\varepsilon)$ and $\widetilde{O}(n^{0.397}/\varepsilon)$, respectively.
Even though APSP and the diameter problem are closely related, we demonstrate that the diameter can be computed much faster in our hybrid network model for certain graphs classes.

As already pointed out, the global network in our model has a close connection to overlay networks.
The NCC model, which has been introduced in \cite{AGG+19}, mainly focuses on the impact of node capacities, especially when the nodes have a high degree.
Since, intuitively, for many graph problems the existence of \emph{each} edge is relevant for the output, most algorithms in \cite{AGG+19} depend on the arboricity $a$ of $G$ (which is, roughly speaking, the time needed to efficiently distribute the load of all edges over the network).
The authors present $\widetilde{O}(a)$ algorithms for local problems such as MIS, matching, or coloring, an $\widetilde{O}(D+a)$ algorithm for BFS tree, and an $\widetilde{O}(1)$ algorithm to compute a minimum spanning tree (MST).
Recently, $\widetilde{O}(\Delta)$-time algorithms for graph realization problems have been presented \cite{ACC+20}, where $\Delta$ is the maximum node degree; notably, most of the algorithms work in the NCC$_0$ variant.
Furthermore, Robinson \cite{Rob20} investigates the information the nodes need to learn to jointly solve graph problems and derives a lower bound for constructing spanners in the NCC.
For example, his result implies that spanners with constant stretch require polynomial time in the NCC, and are therefore harder to compute than MSTs.
Since our global network behaves like an overlay network, we can make efficient use of the so-called \emph{shortest-path diameter reduction technique} \cite{Nan14}.
By adding shortcuts between nodes in the global network, we can bridge large distances quickly throughout our computations.

\lng{
Our work also relates to the literature concerned with \emph{overlay construction} \cite{AW07, AAC+05, GHSS17, AS18, GHS19, GHSW20}, where the goal is to transform a low-degree weakly-connected graph into a low-depth overlay network such as a binary tree using node introductions.
Recently, \cite{GHSW20} showed a randomized $O(\log n)$ time overlay construction algorithm for the NCC$_0$ model, if the initial degree is constant.
Our algorithms directly yield a \emph{deterministic} $O(\log n)$ time alternative for pseudotrees.
It may be interesting to see whether our results can be used to compute deterministic overlay construction algorithms for sparse graphs.

A problem closely related to SSSP is the computation of short routing paths between any given nodes.
The problem has, for example, been studied in mobile ad-hoc networks \cite{JKSS18}, in which constant-competitive routing paths can be computed in $O(\log^2 n)$ rounds \cite{CKS19}.
The authors consider a hybrid network model similar to \cite{AHK+20}, where nodes can communicate using either their WiFi-interface (similar to the local edges) or the cellular infrastructure (similar to global edges).

In the classical \textsf{CONGEST} model there is a lower bound of $\widetilde{\Omega}(\sqrt{n} + D)$ rounds to approximate SSSP with a constant factor~\cite{SHKKNPPW12}.
This bound is tight, as there is a $(1+\varepsilon)$-approximation algorithm by Becker et al. that runs in $\widetilde{O}(\sqrt{n} + D)$ rounds~\cite{BKKL17}.
The best known algorithms for computing exact SSSP in the \textsf{CONGEST} model are the ones by Ghaffari and Li~\cite{GL18} and by Forster and Nanongkai~\cite{FN18} which have runtimes of $\widetilde{O}(\sqrt{n \cdot D})$ and $\widetilde{O}(\sqrt{n} D^{1/4} + n^{3/5} + D)$, respectively.
Computing the diameter can be done in $O(n)$ rounds in the \textsf{CONGEST} model~\cite{PRT12}, which is also tight to the lower bound~\cite{FHW12}.
This lower bound even holds for very sparse graphs \cite{ACK16}.
In addition to that, the obvious lower bound of $\Omega(D)$ for shortest paths problems also always holds if the graph is sparse.
Therefore, algorithms for sparse graphs have been proposed mainly for \emph{local} problems such as vertex coloring, maximal matching or maximal independent set.
There exists an abundance of literature that studies such problems, for example, in graphs with bounded arboricity~\cite{GhaffariL17,BE10,KothapalliP11}, planar graphs~\cite{AboulkerBBE18,ChechikM19,DBLP:conf/podc/GhaffariH16, DBLP:conf/soda/GhaffariH16} or degree-bounded graphs~\cite{PanconesiR01}.

Somewhat related to the NCC model, although much more powerful, is the \emph{congested clique} model, which has received quite some attention in recent years.
A $(1+\varepsilon)$-approximation for SSSP can be computed in $O(\text{polylog}(n))$ rounds in this model~\cite{Censor_Hillel_2019}.
In~\cite{CKK19}, techniques for faster matrix multiplication in the congested clique model are presented, resulting in a $O(n^{1-2/\omega})$-round algorithm, where $\omega < 2.3728639$ is the exponent of matrix multiplication.
Our algorithm for sparse graphs also uses matrix multiplication in order to compute APSP between $O(n^{1/3})$ nodes in the network in $O(\log^2 n)$ rounds.
In general, the results in the congested clique model are of no help in our setting because due to the restriction that a node can only send or receive $O(\log n)$ messages per round via global edges, we cannot effectively emulate congested clique algorithms in the NCC model.
}

As argued before, we could apply some of the algorithms for PRAMs to our model instead of using native distributed solutions by using \lng{Lemma~\ref{lem:spa:pram_sim} in Appendix~\ref{sec:spa:pram_simulation} for }PRAM simulations.
For example, we are able to use the algorithms of \cite{DPZ91} to solve SSSP and diameter in trees in time $O(\log n)$, w.h.p.
Furthermore, we can compute the distance between any pair $s$ and $t$ in \emph{outerplanar graphs} in time $O(\log^3 n)$ by simulating a CREW PRAM.
For planar graphs, the distance between $s$ and $t$ can be computed in time $O(\log^3 n (1 + M(q))/n)$, w.h.p., where the nodes know a set of $q$ faces of a planar embedding that covers all vertices, and $M(q)$ is the number of processors required to multiply two $q \times q$ matrices in $O(\log q)$ time in the CREW PRAM.

For graphs with polylogarithmic arboricity, a $(1+\varepsilon)$-approximation of SSSP can be computed in polylog time using \cite{Li20} and our simulation framework (with huge polylogarithmic terms).
For general graphs, the algorithm can be combined with well-known spanner algorithms for the \textsf{CONGEST} model (e.g., \cite{BS07}) to achieve constant approximations for SSSP in time $\widetilde{O}(n^\varepsilon)$ time in our hybrid model. 
This yields an alternative to the SSSP approximation of \cite{AHK+20}, which also requires time $\widetilde{O}(n^\varepsilon)$ but has much smaller polylogarithmic factors.

\section{Path Graphs} \label{sec:spa:path}

To begin with an easy example, we first present a simple algorithm to compute SSSP and the diameter of path graphs.
The simple idea of our algorithms is to use \emph{pointer jumping} to select a subset of global edges $S$, which we call \emph{shortcut edges}, with the following properties: 
$S$ is a weighted connected graph with degree $O(\log n)$ that contains all nodes of $V$, and for every $u, v \in V$ there exists a path $P \subseteq S$, $|P| = O(\log n)$ (where $|P|$ denotes the number of edges of $P$), such that $w(P) = d(u,v)$, and no path $P$ such that $w(P) < d(u,v)$.
Given such a graph, SSSP can easily be solved by performing a broadcast from $s$ in $S$ for $O(\log n)$ rounds:
In the first round, $s$ sends a message containing $w(e)$ over each edge $e \in S$ incident to $s$.
In every subsequent round, every node $v \in V$ that has already received a message sends a message $k + w(e)$ over each edge $e \in S$ incident to $v$, where $k$ is the smallest value $v$ has received so far.
After $O(\log n)$ rounds, every node $v$ must have received $d(s, v)$, and cannot have received any smaller value.
Further, the diameter of the line can easily be determined by performing SSSP from both of its endpoints $u,v$, which finally broadcast the diameter $d(u,v)$ to all nodes using the global network.


We construct $S$ using the following simple \emph{Introduction Algorithm}.
$S$ initially contains all edges of $E$.
Additional shortcut edges are established by performing \emph{pointer jumping}:
Every node $v$ first selects one of its at most two neighbors as its \emph{left} neighbor $\ell_1$; if it has two neighbors, the other is selected as $v$'s \emph{right} neighbor $r_1$.
In the first round of our algorithm, every node $v$ with degree $2$ establishes $\{\ell_1, r_1\}$ as a new shortcut edge of weight $w(\{\ell_1, r_1\}) = w(\{\ell_1, v\}) + w(\{v, r_1\})$ by sending the edge to both $\ell_1$ and $r_1$.
Whenever at the beginning of some round $i>1$ a node $v$ with degree $2$ receives shortcut edges $\{u, v\}$ and $\{v, w\}$ from $\ell_{i-1}$ and $r_{i-1}$, respectively, it sets $\ell_i := u$, $r_i := w$, and establishes $\{\ell_{i}, r_i\}$ by adding up the weights of the two received edges and informing $\ell_{i}$ and $r_i$.
The algorithm terminates after $\lfloor \log (n - 1) \rfloor$ rounds.
Afterwards, for every simple path in $G$ between $u$ and $v$ with $2^k$ hops for any $k \le \lfloor \log (n - 1) \rfloor$ we have established a shortcut edge $e \in S$ with $w(e) = d(u,v)$.
Therefore, $S$ has the desired properties, and we conclude the following theorem.

\begin{theorem}\label{thm:spa:path}
    SSSP and the diameter can be computed in any path graph in time $O(\log n)$.
\end{theorem}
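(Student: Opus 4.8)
The plan is to show that the Introduction Algorithm outputs a set $S$ of (weighted) shortcut edges with the three properties announced just before the theorem --- connectivity, degree $O(\log n)$, and, for every pair $u,v$, the existence of a path of $O(\log n)$ edges and weight exactly $d(u,v)$ together with the absence of any lighter path --- and then to check that the SSSP broadcast and the diameter routine built on top of $S$ are correct and respect the bandwidth limits $\lambda = O(1)$ and $\gamma = O(\log n)$.

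The core is a structural claim proved by induction on the round number $i$: fixing a left/right orientation of the path, after round $i$ every node $v$ that has a node $2^{i-1}$ hops away on each side holds pointers $\ell_i, r_i$ equal to those two nodes, and $S$ contains the edge $\{\ell_i, r_i\}$ with weight equal to the length of the $2^i$-hop subpath it spans. The base case $i = 1$ is immediate from the description of the algorithm. For the inductive step, note that at round $i$ node $v$ receives from $\ell_{i-1}$ and $r_{i-1}$ exactly the shortcut edges those two nodes established at round $i-1$; since $v$ is the right pointer of its own left pointer (and symmetrically), these incoming edges reach the nodes $2^{i-1}$ hops away from $v$, and their weights add up --- by the induction hypothesis --- to the length of the $2^i$-hop subpath centered at $v$. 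One point to verify here is that all $2^{i-1} - 1$ intermediate midpoint nodes used to build a given shortcut lie strictly between its endpoints, hence have degree $2$ and genuinely participate; this is exactly where one uses that a path graph has only two nodes of degree $1$. Running the algorithm for $\lfloor \log(n-1) \rfloor$ rounds therefore yields, together with the original edges (which serve as the $2^0$-hop shortcuts), an edge of $S$ of the correct weight for every subpath whose hop length is a power of two at most $n-1$.

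From this the three properties follow quickly. Connectivity holds because $S \supseteq E$, and the degree bound holds because $v$ is an endpoint of at most two shortcut edges per level plus its at most two original edges. For the distance property, expand the hop length $h \le n-1$ of the unique path from $u$ to $v$ in binary: this cuts the path into $O(\log n)$ consecutive power-of-two subpaths, each carrying a matching shortcut edge, whose concatenation is a path $P \subseteq S$ with $|P| = O(\log n)$ and $w(P) = d(u,v)$. Conversely, every shortcut edge $\{x,y\}$ has weight $d(x,y)$, so any path from $u$ to $v$ inside $S$ projects to a walk from $u$ to $v$ in $G$ of equal total weight, which is at least $d(u,v)$ since all weights are positive; hence $S$ contains no path from $u$ to $v$ lighter than $d(u,v)$.

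It remains to check the two algorithms. For SSSP, an induction on the broadcast round shows that after $r$ rounds every node holds the minimum weight over all paths from $s$ in $S$ using at most $r$ edges; by the distance property this equals $d(s,v)$ after $O(\log n)$ rounds. For the diameter, one uses that in a path graph the diameter is $d(a,b)$ for the two degree-$1$ endpoints $a,b$ (each node knows its own degree, so $a$ and $b$ identify themselves), runs the SSSP broadcast from $a$, and then lets $b$ disseminate $d(a,b)$ to all nodes by a standard $O(\log n)$-round broadcast over global edges. Each node sends $O(1)$ messages over any single local edge (only the round-$1$ introductions and the broadcast along the original edges use local edges) and $O(\log n)$ messages per round over global edges (at most two introductions per round, and one message per incident shortcut edge during the broadcast), so both constraints hold. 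I expect the induction establishing the structure of $S$ --- keeping the hop indices and the left/right bookkeeping consistent and handling the endpoints of the path --- to be the only mildly delicate step; everything else is routine.
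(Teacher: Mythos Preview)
Your proposal is correct and follows exactly the approach the paper sketches: establish power-of-two shortcut edges via pointer jumping, argue the three properties of $S$, and run the $O(\log n)$-round broadcast for SSSP and a single SSSP from an endpoint for the diameter. The paper itself gives only the algorithm description and a one-line assertion that it yields the required shortcuts, so your induction on the round index, the binary-expansion argument for the $O(\log n)$-hop shortest path in $S$, and the bandwidth verification are simply the details the paper omits rather than a different argument.
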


\section{Cycle Graphs} \label{sec:spa:cycle}

In cycle graphs, there are two paths between any two nodes that we need to distinguish.
For SSSP, this can easily be achieved by performing the SSSP algorithm for path graphs in both directions along the cycle, and let each node choose the minimum of its two computed distances.
Formally, let $v_1, v_2, \ldots, v_n$ denote the $n$ nodes along a \emph{left} traversal of the cycle starting from $s = v_1$ and continuing at $s$'s neighbor of smaller identifier, i.e., $\id(v_2) < \id(v_n)$.
For any node $u$, a shortest path from $s$ to $u$ must follow a left or right traversal along the cycle, i.e., $(v_1, v_2, \ldots, u)$ or $(v_1, v_n, \ldots, u)$ is a shortest path from $s$ to $u$.
Therefore, we can solve SSSP on the cycle by performing the SSSP algorithm for the path graph on $\mathcal{L} := (v_1, v_2, \ldots, v_n)$ and $\mathcal{R} := (v_1, v_n, v_{n-1}, \ldots, v_2)$.
Thereby, every node $v$ learns $d_\ell(s,v)$, which is the distance from $s$ to $v$ in $\mathcal{L}$ (i.e., along a left traversal of the cycle), and $d_r(s,v)$, which is their distance in $\mathcal{R}$.
It is easy to see that $d(s,v) = \min\{d_\ell(s,v), d_r(s,v)\}$.

Using the above algorithm, $s$ can also easily learn its \emph{eccentricity} $\ecc(s) := \max_{v \in V}\{d(s,v)\}$, as well as its \emph{left and right farthest nodes} $s_\ell$ and $s_r$.
The left farthest node $s_\ell$ of $s$ is defined as the farthest node $v_i$ along a left traversal of the cycle such that the subpath in $\mathcal{L}$ from $s = v_1$ to $v_i$ is still a shortest path.
Formally, $s_\ell = \argmax_{v \in V, d_\ell(s, v) \le \lfloor W/2 \rfloor} d_\ell(s, v)$, where $W = \sum_{e \in E}w(e)$.
The right farthest node $s_r$ is the successor of $s_\ell$ in $\mathcal{L}$ (or $s$, if $s_\ell$ is the last node of $\mathcal{L})$, for which it must hold that $d_r(s,s_r) \le \lfloor W/2 \rfloor$.
Note that $d_\ell(s,s_\ell) = d(s,s_\ell)$, $d_r(s,s_r) = d(s,s_r)$, and $\ecc(s) = \max\{d_\ell(s,s_\ell), d_r(s,s_r)\}$.

To determine the diameter of $G$, for every node $v \in V$ our goal is to compute $\ecc(v)$; as a byproduct, we will compute $v$'s left and right farthest nodes $v_\ell$ and $v_r$.
The diameter can then be computed as $\max_{v \in V} \ecc(v)$.
A simple way to compute these values is to employ a binary-search style approach from all nodes in parallel, and use load balancing techniques from \cite{AGG+19} to achieve a runtime of $O(\log^2 n)$, w.h.p.
Coming up with a deterministic $O(\log n)$ time algorithm, however, is more complicated.

\shrt{Due to space constraints, we defer the description of the algorithm to the full version.}

\lng{
\begin{figure}[t]
    \centering
    \includegraphics{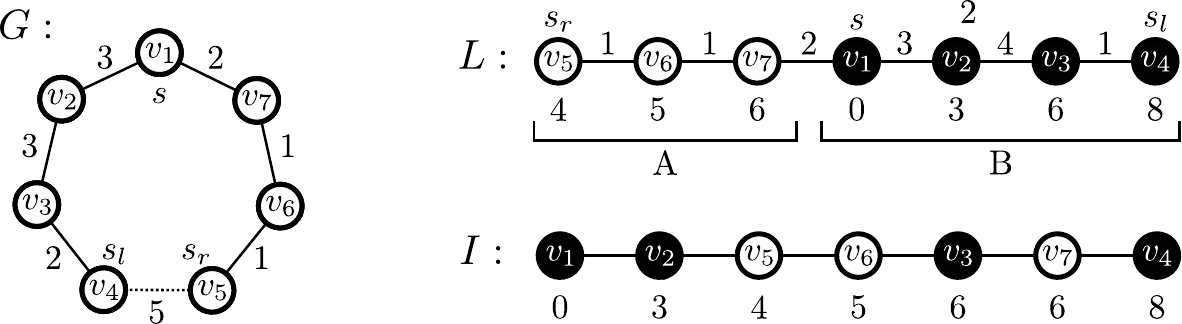}
    \caption{An example of diameter computation in a cycle $G$.
    The algorithm begins with $s = v_1$.
    In $L$, $s_{\ell} = v_4$ is the farthest node from $s$ along a left traversal of $G$, and $s_r = v_5$ is the farthest node along a right traversal.
    The white nodes are the nodes of $A$, and the black nodes are $B$.
    Each node is annotated with its budget.
    In $I$, the nodes are sorted by their budget, and learn their nearest black nodes.
    For example, for $v_3$, $x=v_1$ and $y = v_7$.
    }
    \label{fig:diam_cycle}
\end{figure}

Our algorithm works as follows.
Let $s$ be the node with highest identifier\footnote{In the NCC$_0$, this node can easily be determined by performing pointer jumping in the cycle.}.
First, we perform the SSSP algorithm as described above from $s$ in $\mathcal{L}$ and $\mathcal{R}$, whereby $s$ learns $s_\ell$ and $s_r$ as defined above.
Let $L$ be graph that results from removing the edge $\{s_\ell, s_r\}$ from $G$ \lng{(see Figure~\ref{fig:diam_cycle}) }.
Let $A \subseteq V$ be the set of nodes between $s_r$ and $s$ (excluding $s$), and $B \subseteq V$ be the set of nodes between $s$ and $s_\ell$ (including $s$).

In its first execution, our algorithm ensures that each node $v \in A$ learns its left farthest node $v_\ell$; a second execution will then handle all other nodes.
Note that $v_\ell$ for all $v \in A$ must be a node of $B$, since otherwise the path from $v$ to $v_\ell$ along $\mathcal{L}$ is longer than $\lfloor W/2 \rfloor$, in which case it cannot be a shortest path anymore.

We assign each node $v$ a budget $\phi(v)$, which is $\lfloor W /2 \rfloor - d_r(s,v) \ge 0$, if $v \in A$, and $d_\ell(s, v)$, if $v \in B$.
Roughly speaking, the budget of a node $v \in A$ determines how far you can move from $v$ beyond $s$ along a left traversal of $G$ until reaching $v$'s left farthest node $v_\ell$.
Then, we sort the nodes of $L$ by their budget.
Note that since we consider positive edge weights, no two nodes of $A$ and no two nodes of $B$ may have the same budget, but there may be nodes $u \in A$, $v \in B$ with $\phi(u) = \phi(v)$.
In this case, we break ties by assuming that $\phi(u) > \phi(v)$.
More specifically, the outcome is a sorted list $I = (s = v_{i_1}, v_{i_2}, \ldots, v_{i_n})$ with first node $s$ that contains all nodes of $A$ (and $B$) in the same order they appear in $L$, respectively.
Such a list can be constructed in time $O(\log n)$, e.g., by using Aspnes and Wu's algorithm \cite{AW07}.\footnote{Note that the algorithm of \cite{AW07} is actually randomized. 
However, since we can easily arrange the nodes as a binary tree, we can replace the randomized pairing procedure of \cite{AAC+05} by a deterministic strategy, and, together with the pipelining approach of \cite{AW07}, also achieve a runtime of $O(\log n)$ .}

Let $v = v_{i_k} \in A$, and let $x = v_{i_j} \in B$ be the node with maximum index $j$ in $I$ such that $j < k$ (i.e., the last node of $B$ in $I$ that is still before $v$).
Since the nodes in $I$ are sorted by their potential, among all nodes of $B$, $x$ maximizes $\phi(x)$ such that $\phi(x) \le \phi(v)$.
By definition of $\phi(x)$ and $\phi(v)$, this implies that 
\[
    j = \max\{j \in \{1, \ldots, n\} \mid v_{i_j} \in B, j < k \text{ and } d_r(s,v_{i_k}) + d_\ell(s,v_{i_j}) \le \lfloor W /2 \rfloor\}.
\]

\begin{lemma} \label{lem:spa:circ_dia_1}
    We have $x = v_\ell$.
\end{lemma}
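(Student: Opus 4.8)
The plan is to match the description of $x$ given just before the lemma with the definition of $v$'s left farthest node $v_{\ell}$, relying on two elementary facts about distances along the cycle. \emph{Step~1: decomposing left-distances from $A$ into $B$.} Fix $v = v_{i_k} \in A$. Following the left traversal of the cycle from $v$, one first passes the remaining nodes of $A$, then reaches $s$, and only afterwards enters $B$ in the order $v_1 = s, v_2, \dots, s_{\ell}$. I would note that the left walk from $v$ to $s$ consists exactly of the edges counted by $d_r(s,v)$ (it is the reverse of the path from $s$ to $v$ along $\mathcal{R}$), and that the left walk continuing from $s$ to any $u \in B$ consists exactly of the edges counted by $d_{\ell}(s,u)$ (the path from $s$ to $u$ along $\mathcal{L}$). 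Since these two edge sets are disjoint and together form the left path from $v$ to $u$, that path has length $d_r(s,v) + d_{\ell}(s,u)$. As all edge weights are positive, $d_{\ell}(s,\cdot)$ strictly increases along $B = (v_1, \dots, s_{\ell})$, so among the nodes of $B$ the length of this left path strictly increases as $u$ moves away from $s$.

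\emph{Step~2: shortest paths via $\lfloor W/2 \rfloor$.} In a cycle of total weight $W$ the two paths between $v$ and $u$ have lengths summing to $W$, so the shorter has length at most $\lfloor W/2 \rfloor$ and the longer at least $\lceil W/2 \rceil$; hence a path between $v$ and $u$ is a shortest path if and only if its length is at most $\lfloor W/2 \rfloor$. Combined with Step~1, for $u \in B$ the left path from $v$ to $u$ is a shortest path exactly when $d_r(s,v) + d_{\ell}(s,u) \le \lfloor W/2 \rfloor$. By the remark preceding the lemma, $v_{\ell}$ lies in $B$; hence $v_{\ell}$ is the node of $B$ that is farthest from $v$ along the left traversal among those whose left path is a shortest path, and by the monotonicity from Step~1 this is exactly the node of $B$ maximising $d_{\ell}(s,\cdot)$ subject to $d_r(s,v) + d_{\ell}(s,\cdot) \le \lfloor W/2 \rfloor$ (this constrained maximum is attained, since $d_r(s,v) \le \lfloor W/2 \rfloor$ for every $v \in A$, so already $u = s$ qualifies).

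\emph{Step~3: identifying $x$.} As already observed just before the lemma, $x$ is the node of $B$ maximising $\phi(x)$ subject to $\phi(x) \le \phi(v)$; since $\phi(u) = d_{\ell}(s,u)$ for $u \in B$ and $\phi(v) = \lfloor W/2 \rfloor - d_r(s,v)$, this says that $x$ is the node of $B$ maximising $d_{\ell}(s,x)$ subject to $d_{\ell}(s,x) \le \lfloor W/2 \rfloor - d_r(s,v)$, that is, subject to $d_r(s,v) + d_{\ell}(s,x) \le \lfloor W/2 \rfloor$. By Step~2 this is precisely $v_{\ell}$, so $x = v_{\ell}$.

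I expect the only genuinely delicate point to be Step~1 together with the invocation of $v_{\ell} \in B$: one must make sure that the left walk from a node of $A$ really reaches $s$ while still being a shortest path, and does not wrap back around past $s_r$, so that the clean decomposition $d_r(s,v) + d_{\ell}(s,u)$ is valid for the nodes that matter, and --- most importantly --- so that the farthest admissible node of $B$ is also the farthest admissible node overall. This last point is exactly what the remark $v_{\ell} \in B$ supplies, so apart from this bookkeeping the argument is routine.
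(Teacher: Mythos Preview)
Your proof is correct and follows essentially the same route as the paper's: both hinge on the decomposition $d_\ell(v,u) = d_r(s,v) + d_\ell(s,u)$ for $v\in A$, $u\in B$, and then identify the budget constraint $\phi(x)\le\phi(v)$ with the shortest-path condition $d_\ell(v,x)\le\lfloor W/2\rfloor$. The paper's proof is a three-line compression of your Steps~1--3; you are more explicit about invoking the prior remark that $v_\ell\in B$ (which the paper uses only implicitly when it passes from ``farthest in $B$'' to ``farthest overall''), so your version is arguably cleaner on that point.
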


\begin{proof}
    By the definition of our algorithm, $x\in B$ is the farthest node from $v\in A$ along a left traversal of the cycle such that $d_r(s,v) + d_\ell(s,x) \le \lfloor W /2 \rfloor$.
    Note that $d_r(s,v) + d_\ell(s,x) = d_\ell(s,v)$, since $s$ lies between $v$ and $x$ in $L$.
    Therefore, $x$ is also farthest from $v$ along a left traversal such that $d_\ell(v,x) \le \lfloor W /2 \rfloor$, which is the definition of $v_\ell$.
\end{proof}

Node $v$ can easily learn $x$, $d_\ell(s,x)$, and the neighbors of $x$ in the cycle (to infer $v_r$) by performing the Introduction Algorithm on each connected segment of nodes of $A$ in $I$.
To let all remaining nodes learn their farthest nodes, we restart the algorithm at node $s_\ell$ (instead of $s$).
Since $d_\ell(s,s_\ell) = d_r(s_\ell,s) \le \lfloor W/2 \rfloor$, all nodes between $s$ and $s_\ell$ in $L$ (except $s_\ell$), which previously were in set $B$, will be in set $A$ and learn their farthest nodes.
Finally, $s_r$ learns its farthest nodes by performing SSSP.
We conclude the following theorem.
}

\begin{theorem} \label{thm:spa:sssp:cycle}
    SSSP and the diameter can be computed in any cycle graph $G$ in time $O(\log n)$.
\end{theorem}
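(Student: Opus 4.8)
The plan is to handle SSSP and the diameter separately, in both cases reducing to the path-graph machinery of Section~\ref{sec:spa:path}. For SSSP, note that any shortest path from $s$ to a node $u$ traverses the cycle either clockwise or counterclockwise, so it suffices to delete one of the two edges incident to $s$ in each way, run the path-graph SSSP algorithm of Theorem~\ref{thm:spa:path} on the two resulting orientations $\mathcal{L}=(v_1,\dots,v_n)$ and $\mathcal{R}=(v_1,v_n,\dots,v_2)$, and let every node output $d(s,v)=\min\{d_\ell(s,v),d_r(s,v)\}$. The same two executions also let $s$ learn $\ecc(s)$ together with its left/right farthest nodes $s_\ell$ (the last node on $\mathcal{L}$ still reached by a shortest path of length $\le\lfloor W/2\rfloor$) and $s_r$ (its successor). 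All of this costs $O(\log n)$ rounds.

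For the diameter we must compute $\ecc(v)$ for \emph{every} $v$, since $D=\max_{v\in V}\ecc(v)$; equivalently, each $v$ should learn its left and right farthest nodes $v_\ell,v_r$, from which $\ecc(v)=\max\{d_\ell(v,v_\ell),d_r(v,v_r)\}$. First I would designate $s$ to be the node of maximum identifier (found by pointer jumping around the cycle), run the SSSP computation above so that all nodes know their distances to $s$ and $s$ knows $s_\ell,s_r$, delete the edge $\{s_\ell,s_r\}$ to obtain a path $L$, and split $V$ into the segment $A$ lying strictly between $s_r$ and $s$ and the segment $B$ from $s$ to $s_\ell$. The crucial observation is that $v_\ell\in B$ for every $v\in A$, and that $v_\ell$ is pinned down by a comparison of \emph{budgets}: set $\phi(v)=\lfloor W/2\rfloor-d_r(s,v)$ for $v\in A$ and $\phi(v)=d_\ell(s,v)$ for $v\in B$, sort all nodes by $\phi$ into a list $I$ starting at $s$ (breaking $\phi(u)=\phi(v)$ ties with $u\in A$ before $v\in B$). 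Sorting $n$ elements can be done deterministically in $O(\log n)$ rounds by arranging the nodes as a balanced binary tree and using the pipelined, de-randomized variant of the algorithm of \cite{AW07}. By Lemma~\ref{lem:spa:circ_dia_1}, the $B$-node immediately preceding $v$ in $I$ is exactly $v_\ell$. To actually deliver to each $v\in A$ the identifier of $v_\ell$ and its two cycle-neighbors (so $v$ can also infer $v_r$), I would run the Introduction Algorithm of Section~\ref{sec:spa:path} \emph{inside each maximal block of consecutive $A$-nodes of $I$}, propagating the value held by the bordering $B$-node to the whole block in $O(\log n)$ rounds.

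This first pass equips every node of $A$ with its farthest nodes; to cover the rest, rerun the identical procedure with $s_\ell$ playing the role of $s$. Since $d_\ell(s,s_\ell)=d_r(s_\ell,s)\le\lfloor W/2\rfloor$, all former $B$-nodes except $s_\ell$ now fall into the new $A$-segment and hence learn $v_\ell,v_r$; one additional SSSP call from $s_r$ handles $s_r$ itself, after which each node broadcasts $\ecc(v)$ over the global network and outputs the maximum. The whole algorithm is a constant number of phases — pointer jumping for $s$, two SSSP computations, one sort, two rounds of block-wise Introduction Algorithm, a global broadcast — each running in $O(\log n)$ rounds while every node sends and receives only $O(\log n)$ global messages per round, giving total time $O(\log n)$. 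The main obstacle is the diameter part: establishing that the budget comparison correctly identifies $v_\ell$ (this is Lemma~\ref{lem:spa:circ_dia_1}, resting on the identity $d_\ell(v,x)=d_r(s,v)+d_\ell(s,x)$ when $s$ lies between $v$ and $x$ on $L$), and arguing that the two passes plus the special treatment of $s_r$ really cover all $n$ nodes while keeping the sort and the segment-wise introductions within the $O(\log n)$ round and message budgets.
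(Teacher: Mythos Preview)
Your proposal is correct and follows essentially the same route as the paper: SSSP via two path-graph runs on $\mathcal{L}$ and $\mathcal{R}$, and the diameter via the budget/sort trick with two passes (first from $s$, then from $s_\ell$) plus a final SSSP for $s_r$. One small slip: your tie-breaking rule in the sort is reversed---the paper places a $B$-node \emph{before} an $A$-node when their budgets coincide (equivalently, treats $\phi(u)>\phi(v)$ for $u\in A$, $v\in B$), so that in the boundary case $d_\ell(u,v)=\lfloor W/2\rfloor$ the node $v$ is still found as the last $B$-node preceding $u$ in $I$ and hence correctly identified as $u_\ell$.
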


\section{Trees} \label{sec:spa:tree}

We now show how the algorithms of the previous sections can be extended to compute SSSP and the diameter on trees.
As in the algorithm of Gmyr et al.~\cite{GHSS17}, we adapt the well-known \emph{Euler tour} technique to a distributed setting and transform the graph into a path $L$ of \emph{virtual nodes} that corresponds to a depth-first traversal of $G$.
More specifically, every node of $G$ simulates one virtual node for each time it is visited in that traversal, and two virtual nodes are neighbors in $L$ if they correspond to subsequent visitations.
To solve SSSP, we assign weights to the edges from which the initial distances in $G$ can be inferred, and then solve SSSP in $L$ instead.
Finally, we compute the diameter of $G$ by performing the SSSP algorithm twice, which concludes this section.

However, since a node can be visited up to $\Omega(n)$ times in the traversal, it may not be able to simulate all of its virtual nodes in $L$.
Therefore, we first need to reassign the virtual nodes to the node's neighbors such that every node only has to simulate at most 6 virtual nodes using the \emph{Nash-Williams forests decomposition} technique~\cite{Nas64}. 
More precisely, we compute an \emph{orientation} of the edges in which each node has outdegree at most $3$, and reassign nodes according to this orientation (in the remainder of this paper, we refer to this as the \emph{redistribution framework}).
\shrt{Due to space constraints, we defer a precise description of the algorithm to the full version and only state our main results.

The following two lemmas follow from applying PRAM techniques.}

\lng{

\begin{figure}[t]
\centering
\begin{subfigure}{69pt}
	\centering
	\includegraphics{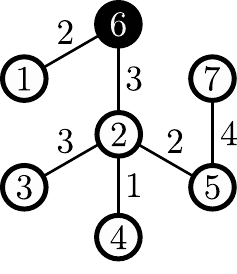}
	\subcaption{\centering}
	\label{fig:tree_1}
\end{subfigure}
\hfill
\begin{subfigure}{71pt}
	\centering
	\includegraphics{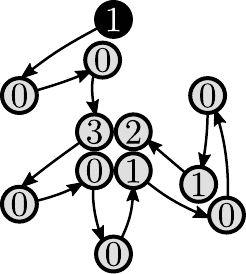}
	\subcaption{\centering}
	\label{fig:tree_2}
\end{subfigure}
\hfill
\begin{subfigure}{72pt}
	\centering
	\includegraphics{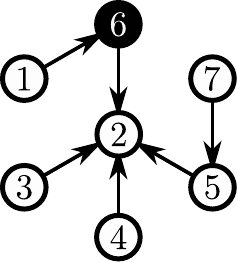}
	\subcaption{\centering}
	\label{fig:tree_4}
\end{subfigure}
\hfill
\begin{subfigure}{72pt}
	\centering
	\includegraphics{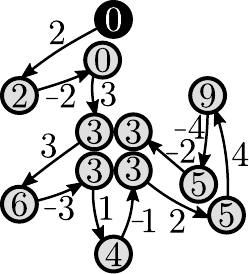}
	\subcaption{\centering}
	\label{fig:tree_3}
\end{subfigure}
\caption{
(a) A tree with source node $s$ (black).
Each node is labeled with its identifier and each edge is labeled with its weight.
(b) The resulting path graph $L$ of virtual nodes.
Each node $v_i$ is labeled with its index $i$.
(c) A possible orientation with outdegree $3$.
According to our redistribution rule, for example, all virtual nodes of the central node $2$ would be assigned to its neighbors.
(d) The edges are assigned weights, and each virtual node is labeled with its distance to $s_L$ (black node).
}
\label{fig:tree}
\end{figure}

\textbf{Construction and Simulation of $L$.}
We denote the neighbors of a node $v \in V$ by ascending identifier as $v(0), \ldots, v(\deg(v) - 1)$.
Consider the depth-first traversal in $G$ that starts and ends at $s$, and which, whenever it reaches $v$ from some neighbor $v(i)$, continues at $v$'s neighbor $v((i+1) \bmod \deg(v))$.
$L$ is the directed path graph of virtual nodes that corresponds to this traversal (see Figure~\ref{fig:tree_1} and \ref{fig:tree_2}).
The path graph contains a virtual node for each time a node is visited, and a directed edge from each virtual node to its successor in the traversal; however, we leave out the last edge ending at $s$ to break the cycle.
More specifically, every node $v$ simulates the nodes $v_0, \ldots, v_{\text{deg}(v) - 1}$, where $v_i$ corresponds to the traversal visiting $v$ from $v(i)$.
The first node of $L$ is $s_L := s_{\deg(s) - 1}$, and its last node is the node $v_i$ such that $v = s(\deg(s) - 1)$, and $v((i+1) \bmod \deg(v)) = s$.
For every node $v_i$ in $L$ (except the last node of $L$), there is an edge $(v_i, u_j) \in L$ such that $u = v((i+1) \bmod \deg(v))$ and $v = u(j)$.
To accordingly introduce each virtual node to its predecessor in $L$, every node $v$ sends the \emph{virtual identifiers} $\id(v_i) := \id(v) \circ i$ to $v(i)$ for all $i \in [\deg(v)]$, where $\circ$ denotes the concatenation of two binary strings, and $[k] = \{0, \ldots, k-1\}$.

It remains to show how the virtual nodes can be redistributed such that each node only has to simulate at most $6$ virtual nodes.
To do so, we first compute an orientation of $G$, i.e., an assignment of directions to its edges, such that every node has outdegree $3$.

Since the arboricity of $G$ is 1, we can use \cite[Theorem 3.5]{BE10} to compute an \emph{$H$-partition $H_1, \ldots, H_\ell$ of $G$ with degree $3$}.
The algorithm is based on the \emph{Nash-Williams forests decomposition} technique~\cite{Nas64}:
In phase $i \in \{1, \ldots, \ell = O(\log n)\}$, all nodes that have degree at most $(2 + \varepsilon) \cdot a$, where $a$ is the arboricity of $G$, are removed from the graph and join the set $H_i$.
We obtain our desired orientation by directing each edge $\{u,v\} \in E$, $u \in H_i$, $v \in H_j$, from $u$ to $v$ if $i < j$, or $i = j$ and $\id(u) < \id(v)$ (see Figure~\ref{fig:tree_4} for an example).

Now consider some node $v \in V$ and a virtual node $v_i$ at $v$, and let $u := v(i)$.
If $\{v, u\}$ is directed from $v$ to $u$, then $v_i$ is assigned to $v$, and, as before, $v$ takes care of simulating $v_i$.
Otherwise, $v_i$ gets assigned to $u$ instead, and $v$ sends the identifier of $v_i$ to $u$.
Afterwards, $u$ needs to inform the node $w$ that is responsible for simulating the predecessor of $v_i$ in $L$ that the location of $v_i$ has changed; as $w$ must be either $u$ itself, or a neighbor of $u$, this can be done in a single round.
Since in the orientation each node $v$ has at most $3$ outgoing edges, for each of which it keeps one virtual node and is assigned one additional virtual node from a neighbor, $v$ has to simulate at most $6$ virtual nodes of $L$.

As a byproduct, we obtain that if $G$ is any forest, we can establish separate low-diameter overlays on each of its trees by combining the techniques used in this section and the pointer jumping approach of Section~\ref{sec:spa:path}.
For instance, this allows us to efficiently compute aggregates of values stored at each tree's nodes, as stated in the following lemma.

}

\begin{lemma}\label{lem:spa:aggregation}
    Let $H = (V,E)$ be a forest in which every node $v \in V$ stores some value $p_v$, and let $f$ be a distributive aggregate function\footnote{An aggregate function $f$ is called \emph{distributive} if there is an aggregate function $g$ such that for any multiset $S$ and any partition $S_1,\ldots,S_\ell$ of $S$, $f(S)=g(f(S_1),\ldots,f(S_\ell))$.
    Classical examples are MAX, MIN, and SUM.}.
    Every node $v \in V$ can learn $f(\{p_u \mid u \in C_v\})$, where $C_v$ is the tree of $H$ that contains $v$, in time $O(\log n)$.
\end{lemma}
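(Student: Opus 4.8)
The plan is to leverage the Euler tour and redistribution framework already developed for trees, applied separately and in parallel to each connected component of the forest $H$. First I would have every node determine which tree of $H$ it belongs to and establish, for each tree $C$, a low-diameter overlay on $C$: concretely, run the Euler tour construction of Section~\ref{sec:spa:tree} on each tree simultaneously, producing for each $C$ a virtual path $L_C$ together with the redistribution of virtual nodes (via the Nash-Williams / $H$-partition argument, using that a forest has arboricity $1$) so that every real node simulates at most $6$ virtual nodes in total across all the $L_C$'s. Since the Euler tours of distinct trees use disjoint sets of edges and virtual identifiers, all of this runs concurrently within the claimed $O(\log n)$ bound. Then, on each path $L_C$, I would run the pointer-jumping Introduction Algorithm of Section~\ref{sec:spa:path} to build shortcut edges $S_C$, again in parallel over all components, giving a connected overlay of degree $O(\log n)$ and diameter $O(\log n)$ on the virtual nodes of each tree (hence on the real nodes of each tree, after identifying each real node with one of its virtual copies, say the one it originally owns).

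Next I would compute the aggregate itself. Each real node $v$ holds $p_v$; it assigns this value to one designated virtual copy of itself (and a neutral element to its other copies, or equivalently we only aggregate over designated copies). On each virtual path $L_C$ we now want $f$ over all designated values. Using the shortcut graph $S_C$, we can compute this by a standard $O(\log n)$-round convergecast/broadcast: fix an endpoint $r_C$ of $L_C$ (the first virtual node $s_{L}$ of the Euler tour), and observe that the pointer-jumping structure gives, for every virtual node, a logarithmic-length path to $r_C$ in $S_C$; performing $\lfloor \log n\rfloor$ rounds of "combine the values received over shortcut edges from the appropriate direction" lets $r_C$ learn $f$ over the whole tree (this is exactly the reduction underlying the path-graph algorithm, with the SUM/MIN-style broadcast there replaced by the general distributive $f$, using its companion function $g$ to combine partial aggregates). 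Then a symmetric broadcast from $r_C$ over $S_C$ delivers the final value $f(\{p_u \mid u \in C_v\})$ to every virtual node, and hence to every real node $v$, in another $O(\log n)$ rounds. Throughout, because each real node simulates $O(1)$ virtual nodes and $S_C$ has degree $O(\log n)$, the per-round message load over both local and global edges stays within the model's $O(1)$-per-local-edge and $O(\log n)$-per-node-global budgets.

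The main obstacle is purely a bookkeeping one: ensuring that the simultaneous execution over all components respects the communication budgets, i.e., that a real node which participates in only one tree but simulates up to $6$ virtual nodes never needs to send or receive more than $O(\log n)$ global messages in a round. This follows from the degree-$O(\log n)$ bound on each $S_C$ together with the fact that the $6$ virtual copies lie in components whose overlays are edge-disjoint, so the loads simply add to $O(\log n)$. A secondary subtlety is that $f$ is only assumed \emph{distributive}, not fully decomposable, so the convergecast must be phrased in terms of the companion aggregate $g$ from the footnote's definition, combining partial results $f(S_1),\dots,f(S_\ell)$ along the pointer-jumping tree rather than re-applying $f$ to raw multisets; this is exactly what the distributivity hypothesis is there to license. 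Given all of this, the total running time is $O(\log n)$, as claimed.
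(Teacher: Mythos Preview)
Your proposal is correct and matches the paper's own approach: the paper does not give a detailed proof of this lemma but states it as a byproduct of combining the Euler tour construction (with the redistribution framework) of Section~\ref{sec:spa:tree} and the pointer-jumping Introduction Algorithm of Section~\ref{sec:spa:path}, exactly as you outline. The only minor point you gloss over is how each tree picks its designated root $r_C$ before the Euler tour is cut into a path; in the paper's style this is handled by leaving the Euler tour as a virtual \emph{cycle}, running pointer jumping there, and broadcasting identifiers so each component learns its maximum-id node (cf.\ the analogous step in Section~\ref{sec:spa:pseudotree}), after which your convergecast/broadcast goes through verbatim.
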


\lng{ 
\textbf{Assigning Weights.}
To assign appropriate weights to the edges of $L$ from which we can infer the node's distances in $G$, we first have to transform $G$ into a rooted tree.
To do so, we simply perform SSSP from $s_L$ (the first node in $L$) in the (unweighted) version of $L$.
Thereby, every virtual node $x$ learns its \emph{traversal distance}, i.e., how many steps the depth-first traversal takes until it reaches $x$.
Further, every node $v$ can easily compute which of its virtual nodes $v_i$ is visited first by taking the minimum traversal distance of its virtual nodes\footnote{Since the virtual nodes of $v$ might actually be assigned to neighbors of $v$, their traversal distances first have to be sent to $v$ using the local network.}.
Let $v_i$ be the virtual node of $v$ that has smallest traversal distance, and let $u_j$ be the predecessor of $v_i$ in $L$.
It is easy to see that $u$ is the parent of $v$ in the rooted tree, which implies the following lemma.

} 
\begin{lemma}\label{lem:spa:tree_rooting}
    Any tree $G$ can be rooted in $O(\log n)$ time.
\end{lemma}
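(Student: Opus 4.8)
The plan is to leverage the Euler tour construction together with the SSSP algorithm for path graphs (Theorem~\ref{thm:spa:path}), exactly as sketched in the ``Assigning Weights'' paragraph preceding the statement. Recall that the tree $G$ has already been transformed into the directed path $L$ of virtual nodes via the Euler tour technique, and that by the redistribution framework every node simulates at most $6$ virtual nodes, so running any $O(\log n)$-round path-graph algorithm on $L$ costs only $O(\log n)$ rounds in the hybrid model. The idea is that a rooting of $G$ is completely determined once every node knows which of its incident edges leads to its parent, and this information can be read off from the order in which the depth-first traversal first visits each node.

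Concretely, first I would run the SSSP algorithm of Theorem~\ref{thm:spa:path} on the \emph{unweighted} path $L$ from its first virtual node $s_L = s_{\deg(s)-1}$. Since $L$ is a simple path, every virtual node $x$ thereby learns its position along $L$, i.e.\ its \emph{traversal distance} --- the number of steps the Euler tour takes before reaching $x$ for the first time. (The identifiers $\id(v_i) = \id(v)\circ i$ let each node recognize which virtual nodes belong to it, and since a node's virtual nodes may physically be simulated by neighbors, one extra local round suffices to gather these traversal distances at $v$ itself.) Second, each node $v$ computes $v_i := \argmin_j \{\text{traversal distance of } v_j\}$, the virtual node of $v$ with smallest traversal distance; this is the virtual node corresponding to the first time the tour enters $v$. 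Let $u_j$ be the predecessor of $v_i$ in $L$. By the structure of a depth-first Euler tour, the first visit to a non-root node $v$ happens precisely when the tour descends the tree edge from $v$'s parent into $v$, so $u$ is the parent of $v$; node $v$ learns $\id(u)$ from the edge $(u_j, v_i)$ of $L$ and adopts $\{v,u\}$ as its parent edge, while $s$ declares itself the root. All of this is a constant number of rounds on top of the $O(\log n)$-round path SSSP and the $O(\log n)$-round Euler tour / redistribution setup, giving the claimed $O(\log n)$ bound.

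The only genuine subtlety --- and the step I would spend the most care on --- is the correctness claim that the predecessor in $L$ of $v$'s first-visited virtual node is exactly $v$'s parent. One has to argue this from the definition of the traversal: the tour starts at $s$, and whenever it reaches a node from neighbor $v(i)$ it proceeds to $v((i+1)\bmod\deg(v))$, so the sequence of virtual nodes is a genuine DFS Euler tour of the tree edges (each edge traversed once in each direction, with the closing edge to $s$ omitted). In such a tour the first appearance of any $v\neq s$ is immediately preceded by a ``downward'' step along the unique tree edge connecting $v$ to the already-visited part of the tree, which is the parent edge; this needs the elementary fact that in a tree the DFS parent is well-defined and that the redistribution of virtual nodes does not alter the logical path $L$ (it only changes who simulates which virtual node). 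Once this structural fact is in place, the lemma follows, and since $s$ is the global starting point of the traversal it is correctly identified as the root.

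One should also note the edge case $\deg(s) - 1$: if $s$ has degree one, $s_L = s_0$ and the traversal is still well-defined; and if a node $v$ appears only once in the tour ($\deg(v)=1$, a leaf), then $v_i$ is simply its unique virtual node and the argument is unchanged. With these observations the proof is complete.
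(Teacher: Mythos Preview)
Your proposal is correct and follows precisely the approach the paper sketches in the paragraph immediately preceding the lemma: run unweighted SSSP on $L$ from $s_L$ to obtain traversal distances, let each node pick its earliest virtual node, and take that node's predecessor in $L$ as the parent. You have simply fleshed out the details (the redistribution not affecting the logical path, the leaf and root edge cases, the correctness of the parent identification) that the paper dismisses with ``it is easy to see.''
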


\lng{ 
For each virtual node $v_j$ of $v$ (except the first node $s_L$), to the edge $(u_i, v_j) \in L$, $v$ assigns the weight 
\shrt{
$w(u_i, v_j) = w(\{u,v\})$, if $u$ is $v$'s parent, and $-w(\{u,v\})$, otherwise.}
\lng{\[
 w(u_i, v_j) = \begin{dcases*}
        w(\{u,v\})  & if $u$ is $v$'s parent\\
        -w(\{u,v\}) & if $v$ is $u$'s parent.
        \end{dcases*}
\]}
If $v_j$ is assigned to a neighbor of $v$, it informs that neighbor about the weight (see Figure~\ref{fig:tree_3}).

To solve SSSP in $G$, we simply compute SSSP in $L$ using Theorem~\ref{thm:spa:path}\footnote{Note that for the algorithm to work in the \emph{directed} path graph $L$, shortcuts must be established in the \emph{bidirected} version of $L$, whereas the subsequent broadcast from $s_L$ uses only the directed edges of $L$.}.
As we prove in the following theorem, the distance of each virtual node $v_i$ of each node $v$ will be $d(s,v)$.
}

\shrt{By assigning positive or negative weights to the edges of $L$ according to their direction in the rooted version of $G$, we easily obtain the following theorem.}
\begin{theorem} \label{thm:spa:sssp:tree}
    SSSP can be computed in any tree in time $O(\log n)$.
\end{theorem}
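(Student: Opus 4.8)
\textbf{Proof proposal for Theorem~\ref{thm:spa:sssp:tree}.}

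The plan is to reduce SSSP on the tree $G$ to SSSP on the path graph $L$ of virtual nodes produced by the Euler tour construction, and then invoke Theorem~\ref{thm:spa:path}. First I would root $G$ at $s$ using Lemma~\ref{lem:spa:tree_rooting}, so that every node knows its parent and every edge $\{u,v\}$ of $G$ has a well-defined orientation (from parent to child). This lets each virtual node assign the signed weight to its incoming edge in $L$ as described above: the edge is given weight $+w(\{u,v\})$ when it corresponds to descending from a parent $u$ to a child $v$, and $-w(\{u,v\})$ when it corresponds to ascending from a child back to its parent. The redistribution framework guarantees each node simulates at most $6$ virtual nodes, so all of this, as well as the subsequent call to Theorem~\ref{thm:spa:path}, runs within the \textsf{CONGEST}/NCC budget in $O(\log n)$ rounds.

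The key correctness claim is that, in the weighted path $L$, the distance (sum of signed edge weights) from the first virtual node $s_L$ to any virtual node $v_i$ of a node $v$ equals exactly $d(s,v)$ in $G$. To see this, observe that the prefix of the Euler tour from $s_L$ up to the first visit of $v$ traces the unique simple path from $s$ down to $v$ in the rooted tree, interleaved with ``excursions'' into subtrees hanging off that path. Each such excursion enters a subtree via some tree edge $e$ and later exits via the same edge $e$, contributing $+w(e)$ once and $-w(e)$ once, for a net contribution of $0$; similarly every edge fully traversed inside an excursion is crossed once in each direction. What remains after cancellation is precisely the set of edges on the root-to-$v$ path, each contributing $+w(e)$ exactly once, so the signed prefix sum is $\sum_{e \in \pi(s,v)} w(e) = d(s,v)$. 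Since every visit of $v$ occurs after the first visit and the net signed weight between any two visits of the same node is again $0$ (the intervening portion of the tour is a closed excursion from $v$), every virtual node $v_i$ of $v$ has the same prefix sum $d(s,v)$. Hence after running the path-SSSP algorithm on $L$ from $s_L$, node $v$ collects the value $d(s,v)$ from any of its virtual nodes, and reports it.

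The main obstacle is that Theorem~\ref{thm:spa:path} as stated concerns path graphs with \emph{positive} weights, whereas $L$ has negative edge weights, so I cannot cite it as a black box for shortest paths. The fix is that we do not actually need shortest paths in $L$ at all: because $L$ is a \emph{path}, there is a unique simple route between any two virtual nodes, and the pointer-jumping Introduction Algorithm underlying Theorem~\ref{thm:spa:path} builds shortcut edges whose weights are just the sums of the $O(\log n)$ pieces they span --- these sums are well-defined regardless of sign. So I would rephrase the use of Theorem~\ref{thm:spa:path} as: run the Introduction Algorithm on the bidirected version of $L$ to obtain the shortcut graph $S$ with $|P| = O(\log n)$-hop paths realizing every pairwise signed distance, then broadcast from $s_L$ along the \emph{directed} edges of $S$, having each node forward the (unique, hence unambiguous) accumulated signed value rather than a minimum. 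After $O(\log n)$ rounds every virtual node holds its signed prefix sum, which by the argument above equals $d(s,v)$; a final step in which each node takes the value of its smallest-traversal-distance virtual node (or simply any of them, since they agree) and, if needed, disseminates it over the local network to the neighbor actually holding that virtual node, completes the computation in $O(\log n)$ rounds.
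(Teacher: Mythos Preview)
Your proposal is correct and follows essentially the same approach as the paper: Euler tour with signed edge weights, redistribution so each node hosts $O(1)$ virtual nodes, and the cancellation argument that detours/excursions contribute net zero so the prefix sum at any virtual copy of $v$ equals $d(s,v)$. You are in fact more explicit than the paper on two points the paper relegates to a footnote or omits---namely, that Theorem~\ref{thm:spa:path} must be adapted for signed weights by running the Introduction Algorithm bidirectionally and then broadcasting only along the directed edges of $L$, and that all virtual copies of the same node carry the same prefix sum.
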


\lng{
\begin{proof}
    Let $v \in V$, and let $d_L(s_L, v_i)$ denote the distance from $s_L$ to a virtual node $v_i$ at $v$ in the (weighted) graph $L$.
    We show that $d_L(s_L,v_i) = d(s,v)$.
    
    Consider the path $P$ from $s$ to $v$ in $G$.
    The depth-first traversal from $s$ to $v$ traverses every edge of $P$ from parent to child, i.e., for every edge in $P$ there is a directed edge with the same weight between $s$ and $v_i$ in $L$.
    However, at some of the nodes of $P$ (including $s$ and $v$) the traversal may take \emph{detours} into other subtrees before traversing the next edge of $P$.
    As every edge of $L$ that corresponds to an edge in the subtree is visited, and the weights of all those edges sum up to $0$, the distance from $s$ to $v_i$ equals the sum of all edges in $P$, which is $d(s,v)$.
\end{proof}
}
Similar techniques lead to the following lemmas, which we will use in later sections.

\begin{lemma} \label{lem:spa:subtree_size_aggregate}
    Let $H = (V,E)$ be a forest and assume that each node $v \in V$ stores some value $p_v$.
    The goal of each node $v$ is to compute the value $\sumT_v(u) := \sum_{w \in C_u} p_w$ for each of its neighbors $u$, where $C_u$ is the connected component $C$ of the subgraph $H'$ of $H$ induced by $V \setminus \{v\}$ that contains $u$.
    The problem can be solved in time $O(\log n)$.
\end{lemma}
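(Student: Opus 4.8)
The plan is to reduce the problem to computing distances (or rather, prefix sums) along the Euler tour path $L$ of each tree of $H$, exactly as in the SSSP-on-trees construction of Section~\ref{sec:spa:tree}. First I would run the redistribution framework and build the virtual path $L$ for each tree, so that every node simulates at most $6$ virtual nodes, and root each tree in $O(\log n)$ time using Lemma~\ref{lem:spa:tree_rooting}. The key observation is the standard Euler-tour identity: for a rooted tree, if $v$ has parent $u$, then the node set $C_u$ that Lemma~\ref{lem:spa:subtree_size_aggregate} asks about (the component of $H'=H[V\setminus\{v\}]$ containing $u$) is exactly $V$ minus the subtree rooted at $v$; and for a child $u'$ of $v$, $C_{u'}$ is exactly the subtree rooted at $u'$. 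So it suffices for every node $v$ to learn $P_v:=\sum_{w\in T_v}p_w$ for $T_v$ its rooted subtree, and the global total $P:=\sum_{w\in C_v}p_w$ (obtainable via Lemma~\ref{lem:spa:aggregation} with $f=\mathrm{SUM}$); then $\sumT_v(u)=P-P_v$ for the parent $u$, and $\sumT_v(u')=P_{u'}$ for each child, and $v$ forwards these over the (at most $\deg(v)$) local edges in $O(1)$ additional rounds after the child values have been gathered.

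To compute $P_v$ for all $v$ simultaneously via $L$, I would use the classic trick of charging each value $p_w$ to a single occurrence in the Euler tour and reading it off as a difference of prefix sums. Concretely, for each node $w$ let $v_{\mathrm{first}}(w)$ and $v_{\mathrm{last}}(w)$ be the virtual nodes of $w$ with smallest and largest traversal distance (both computable by a MIN/MAX aggregation over $w$'s virtual nodes, using the traversal-distance labels obtained by running unweighted SSSP from $s_L$ on $L$ as in Section~\ref{sec:spa:tree}). Place weight $p_w$ on the virtual edge of $L$ entering $v_{\mathrm{first}}(w)$ and weight $-p_w$ on the virtual edge leaving $v_{\mathrm{last}}(w)$ (equivalently, distribute these as $\pm p_w$ contributions on the appropriate edges, informing the simulating neighbor if the virtual node was redistributed). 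Then the weighted distance in $L$ from $s_L$ to $v_{\mathrm{first}}(w)$ equals precisely $P_w$, because exactly the nodes in $T_w$ have both their ``$+p$'' edge and their ``$-p$'' edge inside the segment $[s_L, v_{\mathrm{first}}(w))$ — no, more carefully: a node $w'$ contributes net $+p_{w'}$ iff $w'\in T_w$ and net $0$ otherwise, since the entire interval between $v_{\mathrm{first}}(w')$ and $v_{\mathrm{last}}(w')$ is nested inside that of $w$ precisely when $w'$ is a descendant of $w$. This is the same telescoping argument as in the proof of Theorem~\ref{thm:spa:sssp:tree}, just with $\pm p_w$ in place of $\pm w(\{u,v\})$. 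We then invoke Theorem~\ref{thm:spa:path} to solve SSSP on $L$ in $O(\log n)$ time, each $w$ reads $P_w$ off $v_{\mathrm{first}}(w)$, and a final $O(1)$-round exchange over local edges distributes the per-neighbor answers.

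The main obstacle — and the one place requiring care rather than a direct citation — is making the $\pm p_w$ weights well-defined in the presence of the virtual-node redistribution and the fact that $L$ is a \emph{directed} path on which Theorem~\ref{thm:spa:path}'s broadcast runs one way while shortcuts are built in the bidirected version. Concretely, one must verify that $v_{\mathrm{first}}(w)$ and $v_{\mathrm{last}}(w)$ are correctly identified even though $w$'s virtual nodes may be simulated by up to $\deg(w)$ distinct physical neighbors; this is handled by first gathering all traversal distances of $w$'s virtual nodes at $w$ over the local network (as already noted in Section~\ref{sec:spa:tree}), computing the MIN and MAX, and sending back which virtual node must carry $+p_w$ and which must carry $-p_w$. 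One must also handle the root $s$ (whose subtree is all of $C_v$, so $P_s=P$, consistent with putting its $+p_s$ on the phantom incoming edge of $s_L$) and leaves (where $v_{\mathrm{first}}=v_{\mathrm{last}}$, so the $+p$ and $-p$ land on consecutive edges) as boundary cases. Everything else — the $O(\log n)$ bounds for redistribution, rooting, SSSP on $L$, and the aggregation for $P$ — follows directly from the results already established, so the total runtime is $O(\log n)$.
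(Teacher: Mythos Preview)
Your overall plan---reduce to prefix sums along the Euler-tour path $L$, read off subtree sums, and combine with the global total $P$---is exactly the paper's approach. The paper likewise builds $L$, roots the tree, assigns edge weights that encode the $p_w$'s, runs the path-SSSP of Theorem~\ref{thm:spa:path}, and then recovers each $\sumT_v(u)$ as a difference of two $\hat d$-values at consecutive virtual nodes of $v$ together with $M=\sum_w p_w$.

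However, your concrete weighting and the identity you derive from it are wrong. With $+p_w$ on the edge entering $v_{\mathrm{first}}(w)$ and $-p_w$ on the edge leaving $v_{\mathrm{last}}(w)$, the prefix sum at $v_{\mathrm{first}}(w)$ is \emph{not} $P_w=\sum_{w'\in T_w}p_{w'}$. Your own nesting argument is inverted: if $w'\in T_w$, then $w'$'s Euler interval lies strictly \emph{after} $v_{\mathrm{first}}(w)$, so at that position neither $+p_{w'}$ nor $-p_{w'}$ has been seen and the net contribution is $0$; conversely, if $w'$ is a (proper) \emph{ancestor} of $w$, its $+p_{w'}$ has been seen but not its $-p_{w'}$, giving net $+p_{w'}$. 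Thus your prefix sum at $v_{\mathrm{first}}(w)$ computes (essentially) the root-to-$w$ path sum, not the subtree sum. A two-node check already breaks the claim: on the path $s{-}a{-}b$ rooted at $s$, your scheme gives value $p_a$ at $v_{\mathrm{first}}(a)$, whereas $P_a=p_a+p_b$.

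The fix is exactly what the paper does: use a one-sided weighting and read \emph{differences}. Put weight $p_u$ on the single edge of $L$ where $u$ returns to its parent (equivalently, on the edge entering $v_{\mathrm{first}}(u)$), and weight $0$ on all other edges. Then for a node $v$ with neighbor $u=v(i)$, the quantity $\hat d(v_i)-\hat d(v_{i-1\bmod\deg v})$ equals the sum of $p_{w'}$ over the subtree hanging off that neighbor, and the parent side is obtained by subtracting from $M$. With this correction your plan goes through in $O(\log n)$, and the remaining steps you outline (redistribution, rooting, gathering virtual-node values over local edges, boundary handling at the root) are fine and match the paper.
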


\lng{
\begin{proof}
    Let $s \in V$ be the node that has highest identifier in $V$, which can easily be computed using Lemma~\ref{lem:spa:aggregation}.
    We construct $L$ exactly as described in the algorithm for computing SSSP with source $s$ on trees, but choose the weights of the edges differently.
    More precisely, to every edge $(u_i, v_j)$ of $L$ we assign the weight $w(\{u_i,v_j\}) := p_u$, if $v$ is $u$'s parent, and $0$, otherwise.
    Further, we assign a value $\hat{d}(s_L) := p_{s}$ to $s_L$ (the first node of $L$).
    With these values as edge weights, we perform the SSSP algorithm on $L$ from $s_L$, whereby every virtual node $v_i$ learns the value $\hat{d}(v_i) := \hat{d}(s_L) + d_L(s_L, v_i)$.
    The sum of all values $M := \sum_{v \in V} p_v$ can be computed and broadcast to every node of $H$ in time $O(\log n)$ using Lemma~\ref{lem:spa:aggregation}.

    The problem can now be solved as follows.
    Consider some node $v$, let $u$ be a neighbor of $v$, and let $i$ be the value such that $u = v(i)$ (recall that $v(i)$ is the neighbor of $v$ that has the $i$-th highest identifier, $0 \le i \le \deg(v) - 1$).
    If $u$ is the parent of $v$ in the tree rooted at $s$, then $\sum_{w \in C_u} p_w = M - (\hat{d}(v_{i-1 \bmod \deg{v}}) - \hat{d}(v_{i}))$.
    If otherwise $u$ is a child of $v$ in the tree rooted at $s$ (unless $v = s$ and $i = \deg(s) - 1$, which is a special case), then $\sum_{w \in C_u} p_w = \hat{d}(v_{i}) - \hat{d}(v_{i-1 \bmod \deg{v}})$.
    Finally, if $v = s$ and $i = \deg(s) - 1$, we have $\sum_{w \in C_u} p_w = M - \hat{d}(v_{i-1 \bmod \deg{v}})$.
\end{proof}
}

\begin{lemma} \label{lem:spa:height}
    Let $G$ be a tree rooted at $s$.
    Every node $v \in V$ can compute its height $h(v)$ in $G$, which is length of the longest path from $v$ to any leaf in its subtree, in time $O(\log n)$.
\end{lemma}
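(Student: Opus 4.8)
The plan is to reduce the height computation to an SSSP-type computation on the Euler-tour path graph $L$, in the same spirit as the SSSP algorithm of Theorem~\ref{thm:spa:sssp:tree} and Lemma~\ref{lem:spa:subtree_size_aggregate}. First I would build $L$ exactly as in the tree SSSP construction rooted at $s$ (using the redistribution framework so that each node simulates $O(1)$ virtual nodes), and assign edge weights so that the directed distance from $s_L$ along the traversal encodes depth: to each edge $(u_i,v_j)\in L$ assign weight $w(\{u,v\})$ if $v$ is $u$'s child and $-w(\{u,v\})$ if $v$ is $u$'s parent, exactly as in Theorem~\ref{thm:spa:sssp:tree}. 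Running the path-SSSP broadcast from $s_L$ then makes every virtual node $v_i$ learn $\mathrm{depth}(v)=d(s,v)$, and, by taking the maximum over its virtual nodes locally (routing traversal information over local edges as before), every node learns its own depth.

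The key observation is that the height of $v$ equals $\max_{u}\,(d(s,u)-d(s,v))$ over all descendants $u$ of $v$, i.e. $h(v)=\bigl(\max_{u\in T_v}\mathrm{depth}(u)\bigr)-\mathrm{depth}(v)$, where $T_v$ is $v$'s subtree. So it suffices for $v$ to learn the maximum depth occurring in $T_v$. This is a subtree-aggregation problem over the forest/tree $G$, and it can be solved with the very same machinery used in Lemma~\ref{lem:spa:subtree_size_aggregate}: the subtree $T_v$ corresponds to a contiguous block of the Euler tour between the first and last visit of $v$, and the first/last visit traversal distances are learned from the unweighted SSSP on $L$ (as in Lemma~\ref{lem:spa:tree_rooting}). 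Concretely, I would set up a second weighting of $L$ that allows each node to identify, among its virtual nodes $v_i$, the interval of $L$ bracketing $T_v$, and then perform a prefix-max (rather than prefix-sum) along $L$ of the depth values; the path graph $L$ already admits an $O(\log n)$-depth shortcut overlay, so a prefix aggregation with a distributive function such as MAX runs in $O(\log n)$ rounds, analogously to Lemma~\ref{lem:spa:aggregation}. Subtracting $\mathrm{depth}(v)$ then yields $h(v)$.

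More directly, since $L$ is just a path of virtual nodes on which we can build the $O(\log n)$-diameter shortcut graph of Section~\ref{sec:spa:path}, I can run a segment/prefix aggregation computing, for each position in $L$, the maximum depth over a suffix of the traversal, and combine the value at $v$'s first visit with the value just after $v$'s last visit to extract $\max_{u\in T_v}\mathrm{depth}(u)$ — this is exactly the ``bracket'' trick underlying Lemma~\ref{lem:spa:subtree_size_aggregate}, only with MAX in place of SUM. All the ingredients (Euler tour, redistribution, pointer-jumping shortcuts, distributive aggregation, tree rooting) are already established and each costs $O(\log n)$, so the total runtime is $O(\log n)$.

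The main obstacle is bookkeeping rather than a genuine mathematical difficulty: one has to be careful that a node appears several times in the Euler tour, that its virtual nodes may be physically simulated by neighbors (so depth values and interval endpoints must be shuttled over local edges, which costs only $O(1)$ extra rounds as in the earlier constructions), and that the subtree $T_v$ corresponds to the interval strictly between $v$'s first and last visit (leaf subtrees and the root being edge cases handled as in Lemma~\ref{lem:spa:subtree_size_aggregate}). Getting the MAX-aggregation to respect exactly these interval boundaries — and handling the broken cycle at $s_L$ — is the only place that needs care; everything else is a direct reuse of the tools developed above.
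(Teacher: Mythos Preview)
Your high-level plan coincides with the paper's proof: build the Euler-tour path $L$, compute depths via the path-SSSP broadcast, observe that $T_v$ is a contiguous interval of $L$ between $v$'s first and last visit, and obtain $h(v)$ as the maximum depth on that interval minus $\mathrm{depth}(v)$. That part is fine and matches the paper exactly.

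The gap is in the step where you extract the interval maximum. You write that one can ``combine the value at $v$'s first visit with the value just after $v$'s last visit'' and that this is ``exactly the bracket trick underlying Lemma~\ref{lem:spa:subtree_size_aggregate}, only with MAX in place of SUM.'' That does not work: the bracket trick of Lemma~\ref{lem:spa:subtree_size_aggregate} relies on subtracting two prefix sums, which is possible because addition has inverses. MAX has no inverse, so from a prefix-max (or suffix-max) at the two endpoints you cannot recover the maximum over the interval $[\text{first visit},\text{last visit}]$; the prefix-max at the last visit may be realized by a node outside $T_v$. The paper resolves this with the standard sparse-table idea: during the Introduction Algorithm it stores, on each power-of-two shortcut, the maximum depth along that segment; then for the interval $[x_i,x_j]$ it picks $k=2^{\lfloor\log(j-i)\rfloor}$ and reads the two overlapping shortcuts $\{x_i,x_{i+k}\}$ and $\{x_{j-k},x_j\}$, whose union covers the interval exactly, and takes the max of their stored values. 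This is an $O(1)$-query, $O(\log n)$-preprocessing range-max on $L$, and it is the missing ingredient in your argument. Once you replace the prefix-max/bracket step by this overlapping-shortcuts trick, your proof is complete and identical to the paper's.
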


\lng{
\begin{proof}
    By Theorem~\ref{thm:spa:sssp:tree}, each leaf node $v$ can learn its distance $d(s,v)$ to $s$ (its \emph{depth} in the tree) in time $O(\log n)$.
    For $v = s$, the height is the maximum depth of any node, which can be computed by performing one aggregation using Lemma~\ref{lem:spa:aggregation}.
    For any other node $v \neq s$, the height $h(v)$ is the maximum depth of any leaf in its subtree minus the depth $d(s,v)$ of $v$.
    To allow each node to compute the maximum depth within its subtree, every leaf node $u$ assigns its virtual node in $L$ the value $d(s,u)$
    We then again establish shortcuts on $L$ using the Introduction Algorithm; however, we begin with each edge having the maximum value assigned to its endpoints.
    Whenever a shortcut results from two smaller shortcuts being merged, its weight becomes the maximum weight of the two smaller shortcuts.
    Thereby, the weight of each shortcut $\{u_i, v_j\}$ (where $u_i$ is the endpoint that is closer to $s_L$) corresponds to the maximum value of any node in $G$ visited by the traversal from $u_i$ to $v_j$.
    
    Slightly abusing notation, let $x_0 = s_L, x_2, \ldots, x_{2(n-1)}$ denote all virtual nodes in the order they appear in $L$ (note that the index of each virtual node is its traversal distance, which can easily be computed).
    Now let $v \neq s$, and $x_i$ be the virtual node of $v$ with smallest, and $x_j$ be its virtual node with highest traversal distance.
    Let $k = 2^{\lfloor \log (j -i) \rfloor}$.
    Note that the shortcuts $\{x_i, x_{i+k}\}$ and $\{x_{j-k}, x_j\}$ exist and overlap, and therefore span all virtual nodes of the nodes in the subtree of $v$ in $G$.
    Therefore, the value $\max\{w(\{x_i, x_{i+k}\}), w(\{x_{j-k}, x_j\})\}$ gives the maximum depth of any leaf node in $v$'s subtree.
    Together with the knowledge of $d(s,v)$, $v$ can compute $h(v)$.
\end{proof}
}

For the diameter, we use the following well-known lemma.
\lng{The proof is given for completeness.}

\begin{lemma}\label{lem:spa:tree:ecc_diameter}
    Let $G$ be a tree, $s \in V$ be an arbitrary node, and let $v \in V$ such that $d(s,v)$ is maximal.
    Then $\ecc(v) = D$.
\end{lemma}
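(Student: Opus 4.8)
The plan is to prove the classical ``double BFS'' fact for trees: the endpoint of a longest path from an arbitrary node is an endpoint of some diameter path, hence has eccentricity equal to $D$. I would argue by contradiction, assuming $\ecc(v) < D$, and derive a contradiction with the maximality of $d(s,v)$.

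First I would set up notation. Let $v$ be a node with $d(s,v)$ maximal, and suppose toward a contradiction that $\ecc(v) < D$. Let $x, y$ be a pair of nodes realizing the diameter, i.e., $d(x,y) = D$. Since in a tree there is a unique simple path between any two nodes, I would consider the unique paths $P_{sv}$ (from $s$ to $v$), $P_{xy}$ (from $x$ to $y$), and the path from $s$ to the $x$--$y$ path. The key structural step is to locate where $P_{sv}$ meets $P_{xy}$: let $m$ be the vertex of $P_{xy}$ that is closest to $s$ (equivalently, the first vertex of $P_{xy}$ encountered on the path from $s$ to $x$, which is also the unique vertex lying on all three of the $s$--$x$, $s$--$y$, and $x$--$y$ paths). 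Then $d(s,x) = d(s,m) + d(m,x)$ and $d(s,y) = d(s,m) + d(m,y)$, and similarly I would track where $P_{sv}$ branches off from the $s$--$m$ path, call that branch vertex $z$ (so $z$ lies on the $s$--$v$ path and on the $s$--$m$ path, with $d(s,v) = d(s,z) + d(z,v)$).

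The heart of the argument is a short case analysis comparing $d(z,v)$ with $d(z,x)$ and $d(z,y)$. Because $z$ lies on the $s$--$m$ path with $m$ on $P_{xy}$, we have $d(z,x) = d(z,m) + d(m,x)$ and $d(z,y) = d(z,m) + d(m,y)$. If $d(z,v) \ge d(z,x)$, then $d(v,y) = d(v,z) + d(z,y) \ge d(z,x) + d(z,y) \ge d(m,x) + d(m,y) = d(x,y) = D$, contradicting $\ecc(v) < D$; symmetrically if $d(z,v) \ge d(z,y)$ we get $d(v,x) \ge D$, again a contradiction. So we must have $d(z,v) < d(z,x)$ and $d(z,v) < d(z,y)$. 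But then $d(s,x) = d(s,z) + d(z,x) > d(s,z) + d(z,v) = d(s,v)$, which contradicts the maximality of $d(s,v)$. Hence $\ecc(v) \ge D$, and since trivially $\ecc(v) \le D$, we conclude $\ecc(v) = D$.

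The main obstacle is purely bookkeeping: making the ``branch vertex'' and ``meeting vertex'' well-defined and verifying the additivity of distances along the relevant subpaths, which all rests on the fact that a tree has a unique simple path between each pair of vertices (so the $s$--$v$, $s$--$x$, $s$--$y$, and $x$--$y$ paths overlap in predictable connected segments). Once the decomposition $d(s,v) = d(s,z) + d(z,v)$ and $d(x,y) = d(m,x) + d(m,y)$ with $z$ on the $s$--$m$ path is in hand, the inequalities are immediate. No assumption about edge weights beyond positivity (already standing in the model) is needed, and the lemma is non-algorithmic, so nothing from the earlier sections is required; I would simply note it is standard and include the argument above for completeness.
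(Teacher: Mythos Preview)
Your overall strategy is sound and close in spirit to the paper's own contradiction argument (the paper also fixes a diametral pair, locates the point on that path closest to $s$, and does a short case analysis). However, your case analysis has a genuine gap: the equality $d(v,y)=d(v,z)+d(z,y)$ in your first case requires $z$ to lie on the $v$--$y$ path, and this does \emph{not} follow from $z$ lying on the $s$--$m$ path. A concrete witness is the tree on $\{s,m,a,x,y,v\}$ with unit-weight edges $\{s,m\},\{m,x\},\{m,a\},\{a,y\},\{a,v\}$. Here $m$ is the median of $s,x,y$, the $s$--$v$ path is $s,m,a,v$, so $z=m$; then $d(z,v)=2\ge d(z,x)=1$, so your first case applies, yet the $v$--$y$ path is $v,a,y$, which avoids $z$, and $d(v,y)=2\neq d(v,z)+d(z,y)=4$. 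Your ``bookkeeping'' caveat is exactly on point --- this is the place where the additivity must be checked and in fact can fail.

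The repair is short and keeps your setup. Since $z$ is the median of $s,v,m$, removing $z$ separates $v$ from $m$; and since $z$ also lies on the $s$--$x$ and $s$--$y$ paths, removing $z$ separates $s$ from each of $x,y,v$. If $z\neq m$ this forces $v$ into a different component from both $x$ and $y$, so $z$ lies on \emph{both} the $v$--$x$ and $v$--$y$ paths; if $z=m$, then $z$ separates $x$ from $y$ and hence lies on at least one of the two paths. Now invoke maximality of $d(s,v)$ first rather than last: since $z$ lies on each of the $s$--$v$, $s$--$x$, $s$--$y$ paths, $d(s,v)\ge d(s,x)$ and $d(s,v)\ge d(s,y)$ give $d(z,v)\ge d(z,x)$ and $d(z,v)\ge d(z,y)$ outright, so your ``else'' branch never arises. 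Whichever of the $v$--$x$ or $v$--$y$ paths contains $z$ then yields $d(v,\cdot)=d(v,z)+d(z,\cdot)\ge d(z,x)+d(z,y)=2d(z,m)+d(m,x)+d(m,y)\ge D$, the desired contradiction.
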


\lng{
\begin{proof}
    Assume to the contrary that there is a node $u \in V$ such that $\ecc(u) > \ecc(v)$.
    Then there must be a node $w \in V$ such that $d(u,w) = \ecc(u) > \ecc(v) \ge d(v,w)$.
    Note that $d(u,w) > d(u,v)$, as otherwise $\ecc(u) \le \ecc(v)$, which would contradict our assumption.
    Let $P_1$ be the path from $s$ to $v$, $P_2$ be the path from $u$ to $w$, and let $t$ be the node in $P_2$ that is closest to $s$, i.e., $t = \argmin_{x \in P_2} d(s,x)$.

    If $t \notin P_1$, then let $x$ be the node farthest from $s$ that lies on $P_1$, and also on the path from $s$ to $t$ ($x$ might be $s$).
    Then $d(u,w) \le d(u,x) + d(x,w) \le d(v,x) + d(x,w) = d(v, w)$ (where $d(u,x) \le d(v,x)$ because $v$ is farthest to $s$), which contradicts $d(u,w) > d(v,w)$.
    
    If $t \in P_1$, $t$ must lie on a path from $v$ to $u$ or on a path from from $v$ to $w$.
    In the first case, $d(u,w) = d(u,t) + d(t,w) \le d(u,t) + d(t,v) = d(u,v)$, which implies $\ecc(v) \ge \ecc(u)$; the second case analogously implies $\ecc(v) \ge \ecc(w)$.
    Therefore, both cases lead to a contradiction with the assumption that $\ecc(v) < \ecc(u) = \ecc(w)$.
\end{proof}
}

Therefore, for the diameter it suffices to perform SSSP once from the node $s$ with highest identifier, then choose a node $v$ with maximum distance to $s$, and perform SSSP from $v$.
Since $\ecc(v) = D$, the node with maximum distance to $v$ yields the diameter.
Together with Lemma~\ref{lem:spa:aggregation}, we conclude the following theorem.

\begin{theorem} \label{thm:spa:diameter:tree}
    The diameter can be computed in any tree in time $O(\log n)$.
\end{theorem}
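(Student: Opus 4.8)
The plan is to combine the tree SSSP algorithm of Theorem~\ref{thm:spa:sssp:tree} with the structural fact of Lemma~\ref{lem:spa:tree:ecc_diameter}, using Lemma~\ref{lem:spa:aggregation} as the glue for the two global selection steps. Concretely, I would proceed in three phases. First, let $s$ be the node of highest identifier; this node can be identified in $O(\log n)$ rounds by one application of Lemma~\ref{lem:spa:aggregation} with $f = \max$ on the whole tree (which is a single-tree forest). Second, run the tree SSSP algorithm from $s$ (Theorem~\ref{thm:spa:sssp:tree}), so every node $v$ learns $d(s,v)$ in $O(\log n)$ rounds; then apply Lemma~\ref{lem:spa:aggregation} again with $f=\max$ to compute $\ecc(s)=\max_{v}d(s,v)$ and broadcast it, so that the (unique, after tie-breaking by identifier) node $v^\star$ achieving this maximum can recognize itself as the chosen far endpoint. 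Third, run tree SSSP once more from $v^\star$, and apply Lemma~\ref{lem:spa:aggregation} a final time to compute $D':=\max_v d(v^\star,v)=\ecc(v^\star)$ and broadcast it to all nodes.

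Correctness then reduces to a single invocation of Lemma~\ref{lem:spa:tree:ecc_diameter}: since $d(s,v^\star)$ is maximal over $V$, that lemma gives $\ecc(v^\star)=D$, so $D'=D$ and every node has learned the diameter. The runtime is $O(\log n)$ because we perform a constant number of phases, each of which is either an application of Theorem~\ref{thm:spa:sssp:tree} (which, via the Euler-tour/redistribution machinery and Theorem~\ref{thm:spa:path}, runs in $O(\log n)$) or an application of Lemma~\ref{lem:spa:aggregation} (also $O(\log n)$); broadcasting a single $O(\log n)$-bit value over the global network to all $n$ nodes is trivially $O(\log n)$ rounds as well, e.g. along the overlay structures already built.

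There is essentially no serious mathematical obstacle here — the statement is an easy corollary of the pieces already assembled. The one point that needs a little care is the \emph{selection} of the intermediate node $v^\star$: I must argue that after computing $\ecc(s)$ and broadcasting it, exactly one node is designated to start the second SSSP, which requires consistent tie-breaking (e.g., among all $v$ with $d(s,v)=\ecc(s)$, pick the one of smallest identifier; this is again a constant number of aggregations). I should also make sure the virtual-node infrastructure from the first SSSP run is either torn down or rebuilt cleanly for the second run so that the two invocations of Theorem~\ref{thm:spa:sssp:tree} do not interfere; since each run only takes $O(\log n)$ rounds and uses only $O(1)$ virtual nodes per real node via the redistribution framework, running them sequentially is unproblematic. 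With these bookkeeping remarks in place, the theorem follows immediately.
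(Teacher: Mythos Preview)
Your proposal is correct and follows essentially the same approach as the paper: two invocations of the tree SSSP algorithm (Theorem~\ref{thm:spa:sssp:tree}), first from the highest-identifier node and then from a farthest node, with Lemma~\ref{lem:spa:tree:ecc_diameter} providing correctness and Lemma~\ref{lem:spa:aggregation} handling the global maxima. Your additional remarks about tie-breaking and sequential reuse of the Euler-tour machinery are implementation details the paper leaves implicit, but they do not change the argument.
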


\section{Pseudotrees} \label{sec:spa:pseudotree}

Recall that a pseudotree is a graph that contains at most one cycle.
We define a \emph{cycle node} to be a node that is part of a cycle, and all other nodes as \emph{tree nodes}.
For each cycle node $v$, we define $v$'s tree $T_v$ as the connected component that contains $v$ in the graph in which $v$'s two adjacent cycle nodes are removed, and denote $h(v)$ as the height of $v$ in $T_v$. \shrt{Due to space constraints, we omit the details of the algorithm for pseudotrees and only give a brief description.
To compute SSSP, we first need to distinguish the cycle nodes from the tree nodes.
We do this by establishing rings of virtual nodes using the approach of Section~\ref{sec:spa:tree} (which must create two rings in a pseudotree).
Then, we can reduce the problem to computing SSSP in cycles and trees, for which we use the algorithms from the previous sections.}\lng{Before we show how SSSP and the diameter can be computed, we describe how the cycle can be identified, if it exists.

\begin{figure}[t]
    \centering
    \includegraphics[scale=0.75]{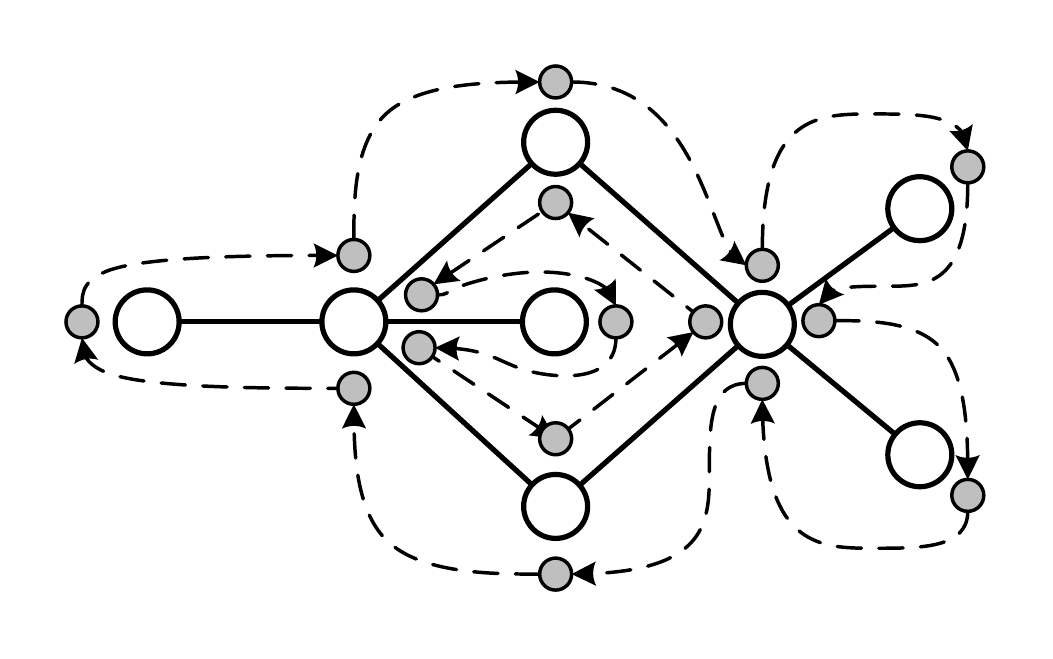}
\caption{Example for a pseudotree where each node $v$ emulates $\deg(v)$ virtual nodes (marked in grey) that form exactly two cycles (indicated by the dashed edges).}
\label{fig:virtualcycle_pseudotree}
\end{figure}

For this, we use the same approach as for the construction of the path $L$ in the tree.
We let each node $v$ simulate $\deg(v)$ virtual nodes $v_0,\ldots,v_{\deg(v)-1}$ and connect the virtual nodes according to the same rules as described in Section~\ref{sec:spa:tree}, with the exception that we do not leave out the last edge ending at $s$.
If there is no cycle, then this yields a single ring of virtual nodes, in which case we can use our previous algorithms.
Otherwise, this will create two rings of virtual nodes with the property that every cycle node must have at least one of its virtual nodes in each virtual ring\lng{ (see Figure~\ref{fig:virtualcycle_pseudotree} for an example)}.
Note that since nodes may have a high degree, we also need to redistribute the virtual nodes using the redistribution framework described in Section~\ref{sec:spa:tree}.
Since the arboricity of a pseudotree is at most $2$, we can compute an orientation with outdegree $6$ \cite[Theorem 3.5]{BE10}, and thus after redistributing the virtual nodes every node simulates at most $12$ virtual nodes.

To differentiate the at most two rings of virtual nodes from each other, we first establish shortcuts by performing the Introduction Algorithm on the virtual nodes.
Afterwards, every virtual node broadcasts its virtual identifier along all of its shortcuts; by repeatedly letting each node broadcast the highest identifier received so far for $O(\log n)$ rounds, each virtual node learns the maximum of all identifiers in its ring.
Any node whose virtual nodes learned different maxima must be a cycle node; if there exists no such node, which can easily be determined using Lemma~\ref{lem:spa:aggregation} in $G$, there is no cycle in $G$.
We conclude the lemma below.


\begin{lemma}\label{lem:spa:check_cycle_node}
    After $O(\log n)$ rounds every node $v \in V$ knows whether there is a cycle, and, if so, whether it is a cycle node.
\end{lemma}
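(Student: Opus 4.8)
The plan is to make precise the informal description preceding the lemma statement and verify its correctness and round complexity. First I would build the virtual graph exactly as in Section~\ref{sec:spa:tree}: each node $v$ simulates $v_0,\ldots,v_{\deg(v)-1}$, with $v_i$ connected to its successor in the depth-first traversal rule, \emph{but keeping the edge that closes the cycle at $s$} (unlike the tree construction). A standard case analysis on whether $G$ contains a cycle shows that this yields either a single ring of $2|E|$ virtual nodes (when $G$ is a tree, since the Euler tour of a tree is a single closed walk) or exactly two rings (when $G$ is a pseudotree with its unique cycle $C$): walking the traversal, every edge of $C$ is traversed exactly once in each direction, and the orientations split the virtual-node successor structure into two disjoint closed walks, one corresponding to each ''side'' of $C$; every non-cycle edge is still traversed twice, once in each walk that reaches it. The key structural claim to nail down here is that a node $v$ is a cycle node if and only if it has virtual nodes in both rings, which follows because the two rings meet exactly at the cycle nodes (the two traversal directions around $C$ pass through each cycle node, but each tree subtree hanging off $C$ is entirely contained in exactly one ring).

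Next I would invoke the redistribution framework of Section~\ref{sec:spa:tree}: since a pseudotree has arboricity at most $2$, \cite[Theorem 3.5]{BE10} gives an orientation of outdegree $6$ in $O(\log n)$ rounds, so after reassigning each virtual node along its outgoing edges every real node simulates at most $12$ virtual nodes, and each can still locate the real node responsible for the predecessor/successor of each of its virtual nodes in $O(1)$ local rounds. Then I run the Introduction Algorithm of Section~\ref{sec:spa:path} on each ring of virtual nodes (treating both rings together; they are vertex-disjoint so the pointer jumping proceeds independently in each), which in $\lfloor\log(n')\rfloor = O(\log n)$ rounds equips every virtual node with shortcut edges spanning $2^k$ hops along its ring. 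A broadcast of virtual identifiers along these shortcuts for $O(\log n)$ rounds then lets every virtual node learn the maximum virtual identifier in its own ring: in round $t$ every virtual node forwards the largest identifier it has seen so far along all shortcuts, and since any two virtual nodes in the same ring are connected by a path of $O(\log n)$ shortcut edges, $O(\log n)$ rounds suffice for the global maximum to propagate.

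Finally, each real node $v$ compares the maxima learned by its (at most $12$) virtual nodes: if they are not all equal, $v$ is a cycle node; if they are all equal, $v$ is a tree node. To decide globally whether a cycle exists at all, I would apply Lemma~\ref{lem:spa:aggregation} with the MAX aggregate over the boolean ''am I a cycle node'' flags in $G$, which in $O(\log n)$ rounds tells every node whether any cycle node exists, i.e., whether $G$ is a tree or a pseudotree. All steps — virtual graph construction, orientation, redistribution, Introduction Algorithm, identifier broadcast, and the final aggregation — run in $O(\log n)$ rounds, giving the claimed bound. The main obstacle I anticipate is the structural argument that the virtual graph decomposes into at most two rings and that the rings intersect precisely at the cycle nodes; once that combinatorial fact about the Euler-tour-style traversal of a pseudotree is established cleanly, the algorithmic pieces are routine applications of tools already developed in the previous sections.
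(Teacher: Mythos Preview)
Your proposal is correct and follows essentially the same approach as the paper's proof: the paper likewise argues that the virtual-node construction yields exactly two rings when $G$ contains a cycle, shows that tree subtrees hang off a single ring while each cycle node has virtual nodes in both, and then uses the redistribution framework (arboricity $\le 2$, outdegree $6$, at most $12$ virtual nodes per real node), the Introduction Algorithm, an $O(\log n)$-round identifier broadcast along shortcuts, and a final aggregation via Lemma~\ref{lem:spa:aggregation} to reach the conclusion. Your identification of the two-ring structural claim as the crux is exactly right, and the paper proves it by the same reduction you outline (first arguing tree subtrees form paths between two virtual nodes of a cycle node, then reducing to a bare cycle).
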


\begin{proof}
    We argue the correctness of our construction by showing that if $G$ contains one cycle, then (1) the virtual nodes of each tree node are contained in the same virtual ring, (2) each cycle node has two virtual nodes contained in different virtual rings.
    For (1), let $v$ be a cycle node and $\{v,w\}$ be an edge to some tree node $w$.
    By our construction, there is exactly one virtual node $v_i$ of $v$ that is connected to a virtual node of $w$ and there is exactly one virtual node $w_i$ of $w$ that is connected to a virtual node $v_j$ of $v$.
    As presented in Section~\ref{sec:spa:tree}, this yields a path of virtual nodes starting at $v_i$, that traverses the subtree with root $w$ in a depth-first-search manner and ends at $v_j$, which implies (2).
    
    Specifically, this shows that the tree nodes do not introduce additional rings to our construction; therefore, we can disregard them and assume that $G$ forms a single cycle that does not contain any tree nodes.
    For this cycle it has to hold by our construction that every cycle node $v$ has exactly two virtual nodes $v_0$ and $v_1$ that are not directly connected to each other.
    This immediately implies that the virtual nodes have to form exactly two distinct rings of virtual nodes, since in case they would form one or more than two rings, there has to exist a ring node whose virtual nodes are connected to each other.
\end{proof}

Since we already know how to compute SSSP and the diameter on trees, for the remainder of this section we assume that $G$ contains a cycle.
In order to solve SSSP, we first perform our SSSP algorithm for tree graphs from source $s$ in the tree $T_v$ in which $s$ lies (note that $s$ may be $v$ itself).
Thereby, every node in $T_v$ learns its distance to $s$.
Specifically, $v$ learns $d(s,v)$, and can make this value known to all nodes by using Lemma~\ref{lem:spa:aggregation}.
After performing SSSP with source $v$ on the cycle nodes only, every cycle node $u \neq v$ knows $d(s, v) + d(v,u) = d(s,u)$, and can inform all nodes in its tree $T_u$ about $d(s,u)$ using Lemma~\ref{lem:spa:aggregation}.
Finally, $u$ performs SSSP in $T_u$ with source $u$, whereby each node $w \in T_u$ learns $d(s,u) + d(u, w) = d(s,w)$.
Together with Theorems~\ref{thm:spa:sssp:tree}, we obtain the following theorem.
} 

\begin{theorem}
    SSSP can be computed in any pseudotree in time $O(\log n)$.
\end{theorem}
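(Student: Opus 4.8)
The plan is to reduce the SSSP problem on a pseudotree to the SSSP problems on cycles and trees that have already been solved in Theorems~\ref{thm:spa:sssp:cycle} and~\ref{thm:spa:sssp:tree}. First I would handle the trivial case: if $G$ contains no cycle, then $G$ is a tree, and we invoke Theorem~\ref{thm:spa:sssp:tree} directly; Lemma~\ref{lem:spa:check_cycle_node} lets every node detect this in $O(\log n)$ time. So assume henceforth that $G$ contains exactly one cycle, and let $v$ denote the unique cycle node whose tree $T_v$ contains the source $s$ (possibly $v = s$).

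The core observation is that every shortest path from $s$ decomposes uniquely according to the structure of the pseudotree. For a tree node $w$ lying in $T_u$ for some cycle node $u$, the shortest path from $s$ to $w$ must pass through $v$ (to leave $T_v$), then travel along the cycle from $v$ to $u$, then descend into $T_u$ to reach $w$; hence $d(s,w) = d(s,v) + d(v,u) + d(u,w)$. This suggests a three-phase algorithm. In phase one, run the tree SSSP algorithm (Theorem~\ref{thm:spa:sssp:tree}) inside $T_v$ with source $s$; this gives every node of $T_v$ its distance, and in particular $v$ learns $d(s,v)$, which it broadcasts to all nodes via Lemma~\ref{lem:spa:aggregation}. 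In phase two, run the cycle SSSP algorithm (Theorem~\ref{thm:spa:sssp:cycle}) on the cycle nodes with source $v$, so each cycle node $u$ learns $d(v,u)$ and therefore $d(s,u) = d(s,v) + d(v,u)$; using Lemma~\ref{lem:spa:aggregation} again, $u$ can disseminate this value to all of $T_u$. In phase three, each cycle node $u$ runs the tree SSSP algorithm in $T_u$ with source $u$, so that every $w \in T_u$ obtains $d(u,w)$ and hence $d(s,w) = d(s,u) + d(u,w)$. Each phase costs $O(\log n)$ rounds, and the phases run sequentially, so the total is $O(\log n)$.

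I anticipate the main obstacle to be verifying that the many tree-SSSP and cycle-SSSP invocations can be executed in parallel over the disjoint trees $T_u$ without exceeding the per-round global communication budget of $O(\log n)$ messages, and that the virtual-node construction and redistribution framework of Section~\ref{sec:spa:tree} (needed because cycle nodes may have high degree) behaves correctly when applied simultaneously across all the $T_u$'s. This is exactly where the earlier machinery pays off: since the trees $T_u$ are vertex-disjoint, the shortcut overlays built by the Introduction Algorithm on different trees do not interfere, and Lemma~\ref{lem:spa:check_cycle_node} already guarantees that every node correctly knows its role (cycle node or tree node, and in the latter case which $T_u$ it belongs to). The correctness argument itself is then just the path-decomposition identity above, which follows because removing the two cycle-edges at $v$ disconnects $T_v$ from the rest, forcing any $s$-to-$w$ path to route through $v$ and then along the unique cycle segment to $u$. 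Combining the runtime bounds of the three phases with Theorems~\ref{thm:spa:sssp:tree} and~\ref{thm:spa:sssp:cycle} and Lemma~\ref{lem:spa:aggregation} yields the claimed $O(\log n)$ bound.
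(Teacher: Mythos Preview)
Your proposal is correct and follows essentially the same approach as the paper: identify the cycle via Lemma~\ref{lem:spa:check_cycle_node}, run tree SSSP in $T_v$ to obtain $d(s,v)$, broadcast it, run cycle SSSP from $v$ to obtain $d(s,u)$ for every cycle node $u$, broadcast within each $T_u$, and finish with tree SSSP from each $u$ in $T_u$. The paper's presentation is slightly terser and does not dwell on the parallelization concern you raise, but your reasoning there (vertex-disjointness of the $T_u$ and the redistribution framework already set up for pseudotrees) is exactly what makes it go through.
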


\lng{ 
We now describe how to compute the diameter in a pseudotree.
In our algorithm, every cycle node $v$ contributes up to two \emph{candidates} for the diameter.
The first candidate for a node $v$ is the diameter of its tree $D(T_v)$.
If $\ecc(v) > h(v)$, then $v$ also contributes the value $\ecc(v) + h(v)$ as a candidate.
We first show how the values can be computed, and then prove that the maximum of all candidates, which can easily be determined using Lemma~\ref{lem:spa:aggregation}, is the diameter of $G$.

After $v$ has identified itself as a cycle node, it can easily compute its height $h(v)$ in time $O(\log n)$ by performing SSSP on $T_v$ from $v$ using Theorem~\ref{thm:spa:sssp:tree}, and then computing the maximum distance $d(v,u)$ of any node $u$ in $T_v$ using Lemma~\ref{lem:spa:aggregation}.
Furthermore, $D(T_v)$ can be computed in time $O(\log n)$ via an application of Theorem~\ref{thm:spa:diameter:tree}.

It remains to show how $v$ can learn $\ecc(v)$.
We define $m_\ell(v) := \max_{u \in V} h(u) - d_\ell(v,u)$, and $m_r(v) := \max_{u \in V} h(u) - d_r(v,u)$ (recall that $d_\ell(v,u)$ and $d_r(v,u)$ denote the distances from $v$ to $u$ along a left or right traversal of the cycle, respectively).

\begin{lemma} \label{lem:spa:pseudotree:ecc}
    Let $v \in V$ be a cycle node and let $v_\ell$ and $v_r$ be the left and right farthest nodes of $v$, respectively.
    $\ecc(v) = \max\{d_\ell(v,v_\ell) + m_r(v_\ell), d_r(v,v_r) + m_\ell(v_r)\}$.
\end{lemma}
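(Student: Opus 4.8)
The plan is to first collapse $\ecc(v)$ to a maximum over the cycle nodes only, and then match that maximum against the two candidates $d_\ell(v,v_\ell)+m_r(v_\ell)$ and $d_r(v,v_r)+m_\ell(v_r)$ by bounding each direction separately. For the reduction, note that $V$ is the disjoint union of the trees $T_u$ over all cycle nodes $u$, and that any path from $v$ to a node $w\in T_u$ must pass through $u$, since the only edges joining $T_u$ to the rest of $G$ are the two cycle edges at $u$; hence $d(v,w)=d(v,u)+d(u,w)$, and maximizing over $w\in T_u$ gives $\max_{w\in T_u} d(v,w)=d(v,u)+h(u)$. Therefore $\ecc(v)=\max_{u\ \mathrm{cycle\ node}}\bigl(d(v,u)+h(u)\bigr)$. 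Since positive edge weights rule out detours off the cycle, $d(v,u)=\min\{d_\ell(v,u),d_r(v,u)\}$ for cycle nodes, so each cycle node $u$ lies either in the set $B$ (the left segment from $v$ to $v_\ell$, on which $d(v,u)=d_\ell(v,u)$) or in $A$ (the right segment from $v_r$ to $v$, on which $d(v,u)=d_r(v,u)$); here I would reuse the structure of $v_\ell,v_r,A,B$ set up for cycles in Section~\ref{sec:spa:cycle}.

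For the upper bound $\ecc(v)\le\max\{d_\ell(v,v_\ell)+m_r(v_\ell),\,d_r(v,v_r)+m_\ell(v_r)\}$, pick a cycle node $u$ attaining $\ecc(v)=d(v,u)+h(u)$. If $u\in B$, then $u$ lies on the left path from $v$ to $v_\ell$, which is a shortest path, so $d(v,u)+d_\ell(u,v_\ell)=d_\ell(v,v_\ell)$, and using $d_\ell(u,v_\ell)=d_r(v_\ell,u)$ we get $d(v,u)=d_\ell(v,v_\ell)-d_r(v_\ell,u)$; hence $\ecc(v)=d_\ell(v,v_\ell)+\bigl(h(u)-d_r(v_\ell,u)\bigr)\le d_\ell(v,v_\ell)+m_r(v_\ell)$. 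The case $u\in A$ is symmetric, with $v_\ell,m_r$ replaced by $v_r,m_\ell$.

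For the reverse inequality, let $u^*$ be a cycle node attaining $m_r(v_\ell)=h(u^*)-d_r(v_\ell,u^*)$. By the triangle inequality in $G$ together with $d(v_\ell,u^*)\le d_r(v_\ell,u^*)$, we have $d(v,v_\ell)\le d(v,u^*)+d(u^*,v_\ell)\le d(v,u^*)+d_r(v_\ell,u^*)$, and since the left path to $v_\ell$ is a shortest path, $d(v,v_\ell)=d_\ell(v,v_\ell)$; rearranging yields $d_\ell(v,v_\ell)-d_r(v_\ell,u^*)\le d(v,u^*)$. Adding $h(u^*)$ gives $d_\ell(v,v_\ell)+m_r(v_\ell)\le d(v,u^*)+h(u^*)\le\ecc(v)$, and symmetrically $d_r(v,v_r)+m_\ell(v_r)\le\ecc(v)$; combining the two bounds proves the lemma. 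The main obstacle is purely bookkeeping: keeping the two orientations of the cycle straight, and handling that the maximizer $u^*$ may lie in $A$ rather than $B$ --- the triangle-inequality argument is chosen precisely because it sidesteps this case distinction in the lower-bound step, but the upper-bound step still needs the $A/B$ split to be done carefully.
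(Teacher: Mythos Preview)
Your proof is correct and follows essentially the same overall structure as the paper: reduce $\ecc(v)$ to $\max_u\bigl(d(v,u)+h(u)\bigr)$ over cycle nodes, then verify both inequalities. Your upper-bound argument (the direction $\ecc(v)\le\max\{\ldots\}$) is identical to the paper's part~(1).

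The one noteworthy difference is in the other direction. The paper proves $d_\ell(v,v_\ell)+m_r(v_\ell)\le\ecc(v)$ by taking the maximizer $w$ of $m_r(v_\ell)$ and splitting into two cases according to whether $w$ lies on the left or the right side of $v$, handling each by an explicit arithmetic identity for $d_r(v_\ell,w)$. You instead use the single chain $d_\ell(v,v_\ell)=d(v,v_\ell)\le d(v,u^*)+d(u^*,v_\ell)\le d(v,u^*)+d_r(v_\ell,u^*)$, which absorbs both cases at once and is cleaner. This buys you a shorter argument with no case split; the paper's version, on the other hand, makes the two geometric situations on the cycle explicit, which some readers may find easier to picture. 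Either way the content is the same.
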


\begin{proof}
    Let $t \in V$ such that $d(v,t) = \ecc(v)$, and let $u$ be a cycle node such that $t$ is a node of $T_u$.
    W.l.o.g., assume that $u$ lies on the \emph{right} side of $v$, i.e., $d_r(v,u) \le d_\ell(v,u)$.
    We define $d_\ell$ and $d_r$ to be $d_\ell(v,v_\ell)$ and $d_r(v,v_r)$, respectively.
    We show that (1) $d_r + m_\ell(v_r) \ge \ecc(v)$, and that (2) $d_\ell + m_r(v_\ell) \le \ecc(v)$ and $d_r + m_\ell(v_r) \le \ecc(v)$.
    Both statements together immediately imply the claim.

    For (1), note that $v_r$ will consider $u$ as a cycle node for the computation of $m_\ell(v_r)$, and thus $m_\ell(v_r) \ge h(u) - d_\ell(v_r,u)$.
    Therefore, we have that
    \[
        d_r + m_\ell(v_r) \ge d_r - d_\ell(v_r,u) + h(u) = d_r(v,u) + h(u) = d(v,t).
    \]

    For (2), we only show that $d_\ell + m_r(v_\ell) \le \ecc(v)$; the other side is analogous.
    Let $w$ be the node such that $m_r(v_\ell) = h(w) - d_r(v_\ell, w)$.
    First, assume that $w$ lies on the left side of $v$, i.e., $d_\ell(v,w) \le d_r(v,w)$.
    In this case, we have that $d_r(v_\ell,w) = d_\ell - d_\ell(v,w)$, which implies
    \begin{align*}
        m_r(v_\ell) &= h(w) - d_r(v_\ell, w) \\
        &= h(w) + d_\ell(v,w) - d_\ell \\
        &\le \ecc(v) - d_\ell.
    \end{align*}
    Now, assume that $w$ lies on the right side of $w$, in which case $d_r(v_\ell, w) = d_\ell + d_r(v,w)$.
    We have that
    \begin{align*}
        m_r(v_\ell) &= h(w) - d_r(v_\ell, w) \\
        &= h(w) - d_r(v,w) - d_\ell \\
        &\le h(w) + d_r(v,w) - d_\ell \\
        &\le \ecc(v) - d_\ell,
    \end{align*}
    which concludes the proof.
\end{proof}

Once each cycle node $v$ knows $m_\ell(v)$ and $m_r(v)$, every cycle node $u$ could easily infer its eccentricity by performing the diameter algorithm for the cycle of Theorem~\ref{thm:spa:sssp:cycle} to learn its farthest nodes.
The corresponding $m_\ell$ and $m_r$ values can be obtained alongside this execution.
Therefore, it remains to show how $v$ can compute $m_\ell(v)$ and $m_r(v)$.

To do so, the nodes first establish shortcuts along a left and right traversal of the cycle using the Introduction algorithm.\footnote{This time, \emph{each} node $v$ participates, and the initial left (right) neighbor of $v$ is $v$'s successor (predecessor) along a left traversal.}
Afterwards, every cycle node $v$ computes $m_\ell(v)$ (and, analogously, $m_r(v)$) in the following way.
$v$ maintains a value $x_v$, which will obtain the value $m_\ell(v)$ after $O(\log n)$ rounds.
Initially, $x_v := h(v)$.
In the first round, every cycle node $v$ sends $x_v - w(\{v, r_1\})$ to its right neighbor $r_1$.
When $v$ receives a value $x$ at the beginning of round $i$, it sets $x_v := \max\{x_v, x\}$ and sends $x_v - w(\{v,r_i\})$ to $r_{i}$.

\begin{lemma}\label{lem:spa:pseudotree:convergence}
    At the end of round $\lceil \log n \rceil + 1$, $x_v = m_\ell(v)$.
\end{lemma}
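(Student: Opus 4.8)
The plan is to prove the statement by induction on the number of times $x_v$ has been updated, strengthening it so that the ``horizon'' a node accounts for doubles with each round; this is the max-plus analogue of the correctness argument for the broadcast in Theorem~\ref{thm:spa:path}. Write $n$ for the length of the cycle formed by the cycle nodes (if this is smaller than the number of nodes of $G$, the round bound in the statement is only more generous, and $x_v$ is easily seen to stay equal to $m_\ell(v)$ once it has first attained it). Fix a cycle node $v$ and enumerate the cycle nodes as $v = u_0, u_1, \dots, u_{n-1}$ in the order visited by a left traversal started at $v$, extending the indexing by $u_k := u_{k \bmod n}$. For every integer $k \ge 0$ let $\delta(k)$ be the total weight of the first $k$ traversed edges, counted with multiplicity; thus $\delta(0) = 0$, $\delta(k) = d_\ell(v, u_k)$ for $0 \le k \le n-1$, $\delta(k+n) = \delta(k) + W$ with $W \ge 1$ the total weight of the cycle, and $\delta(a+b) = \delta(a) + \delta_a(b)$ where $\delta_a$ is the analogous function anchored at $u_a$. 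Since each cycle node equals $u_k$ for exactly one $k \in \{0, \dots, n-1\}$, we have $m_\ell(v) = \max_{0 \le k \le n-1}\bigl(h(u_k) - \delta(k)\bigr)$.

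Index the updates of $x_v$ by $t = 1, 2, \dots$, with $t = 0$ denoting the initial value $x_v = h(v)$. Since round $1$ only sends and the update in round $i \ge 2$ processes the message sent in round $i-1$, the $t$-th update is the one performed at the beginning of round $t+1$, and it incorporates a value that travelled along a pointer spanning $2^{t-1}$ hops, i.e.\ from $u_{2^{t-1}}$ to $v$. The invariant to maintain is that after the $t$-th update
\[
    x_v = \max_{0 \le k \le 2^{t}-1}\bigl(h(u_k) - \delta(k)\bigr).
\]
For $t = 0$ this is $h(u_0) - \delta(0) = h(v)$. For the step, by construction of the Introduction Algorithm the round-$t$ message received by $v$ from $u_{2^{t-1}}$ equals the value of $x_{u_{2^{t-1}}}$ after its own $(t-1)$-th update minus the shortcut weight $\delta(2^{t-1})$; invoking the induction hypothesis at $u_{2^{t-1}}$ and using additivity of $\delta$ turns this into $\max_{2^{t-1} \le k \le 2^{t}-1}\bigl(h(u_k) - \delta(k)\bigr)$, and taking the maximum with the pre-update value of $x_v$ (which by induction equals $\max_{0 \le k \le 2^{t-1}-1}\bigl(h(u_k) - \delta(k)\bigr)$) gives the claim.

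To conclude: for $t \ge \lceil \log n \rceil$ we have $2^{t} - 1 \ge n-1$, so the index set contains $\{0, \dots, n-1\}$; moreover every extra index $k \ge n$ gives $h(u_k) - \delta(k) = h(u_{k \bmod n}) - \delta(k \bmod n) - \lfloor k/n \rfloor W \le m_\ell(v)$ and is dominated. Hence $x_v = m_\ell(v)$ after the $\lceil \log n \rceil$-th update, which is applied at the beginning of round $\lceil \log n \rceil + 1$; since $x_v$ does not change thereafter within that round, it equals $m_\ell(v)$ at the end of round $\lceil \log n \rceil + 1$, as claimed. The value $m_r(v)$ is obtained by the symmetric argument with the roles of left and right traversal exchanged.

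The one genuinely delicate point, on which I would spend the most care, is the wraparound of the cycle: additivity of $\delta$ is unconditional, but the identification $\delta(k) = d_\ell(v, u_k)$ is valid only for $k \le n-1$, so one must verify that whenever a pointer (or a chain of pointers) spans more than $n$ hops, the term $h(u_k) - \delta(k)$ it produces is merely an under-estimate of $h(u_{k \bmod n}) - d_\ell(v, u_{k \bmod n})$, and therefore can neither corrupt correctness nor push $x_v$ above $m_\ell(v)$, while the true optimum is still attained through the non-wrapping representative once the horizon reaches $n-1$. A minor secondary hazard is the off-by-one in the round count caused by round $1$ being send-only and updates being applied at the start of a round; keeping the correspondence ``$t$ updates completed $\leftrightarrow$ start of round $t+1$'' explicit, as above, dispels it.
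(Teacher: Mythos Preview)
Your proof is correct and follows essentially the same approach as the paper's: both argue by induction that after round $i$ (equivalently, after your $t=i-1$ updates) the value $x_v$ equals the maximum of $h(u)-d_\ell(v,u)$ over all cycle nodes $u$ within $2^{i-1}-1$ left-hops of $v$, and conclude once this horizon reaches $n-1$. Your treatment is in fact more careful than the paper's on the wraparound issue—the paper works with the node set $V_\ell(v,i)$ and tacitly uses $d_\ell(\ell_{i-1},u)+w(\{\ell_{i-1},v\})=d_\ell(v,u)$, which can fail for overshooting indices in the final round when $n$ is not a power of two, whereas your extended $\delta$ and the domination argument make this explicit.
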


\begin{proof}
    We show that at the end of round $i \ge 1$, $$x_v = \max_{u \in V_\ell(v, i)} (h(u) - d_\ell(v,u)),$$ where $V_\ell(v, i)$ contains node $u \in V$ if the (directed) path from $v$ to $u$ in $G_\ell$ contains at most $2^{i-1}-1$ hops.
    The lemma follows from the fact that $V_\ell(v,\lceil \log n\rceil + 1) = V$.

    At the end of round $1$, $x_v = h(v)$, which establishes the inductive base since $v$ is the only node within $0$ hops from $v$.
    By the induction hypothesis, at the beginning of round $i > 1$ we have that $x_v = \max_{u \in V_\ell(v, i-1)} (h(u) - d_\ell(v,u))$.
    Furthermore, $v$ receives
    \begin{align*}
        x &= \max_{u \in V_\ell(\ell_{i-1},i-1)} (h(u) - d_\ell(\ell_{i-1},u)) - w(\{\ell_{i-1}, v\})  \\
        &= \max_{u \in V_\ell(\ell_{i-1},i-1)} (h(u) - d_\ell(v,u)).
    \end{align*}
    Since $V_\ell(v,i-1) \, \cup \, V_\ell(\ell_{i-1}, i-1) = V_\ell(v,i)$, we conclude that $\max\{x_v,x\} = \max_{u \in V_\ell(v, i)} (h(u) - d_\ell(v,u))$.
\end{proof}

Using the previous results, the nodes can now compute their candidates and determine the maximum of all candidates.
It remains to show the following lemma, from which we obtain Theorem~\ref{thm:spa:diameter_pseudotree}.

\begin{lemma}\label{lem:spa:pseudotree:candidate_agg}
    Let $C$ be the set of all candidates.
    $\max_{c \in C}\{c\} = D$.
\end{lemma}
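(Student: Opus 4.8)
The plan is to show that every candidate value is at most $D$, and that some candidate equals $D$; together these give $\max_{c\in C}\{c\}=D$.

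For the upper bound, I would argue that each type of candidate corresponds to the length of an actual path in $G$, hence is bounded by $D$. A candidate of the form $D(T_v)$ is the diameter of the subtree $T_v$, which is a subgraph of $G$, so $D(T_v)\le D$ trivially. A candidate of the form $\ecc(v)+h(v)$ (contributed when $\ecc(v)>h(v)$): here $\ecc(v)=d(v,t)$ for some node $t$, and this shortest path from $v$ to $t$ either stays inside $T_v$ or leaves $T_v$ through one of $v$'s two cycle-neighbors. If it leaves $T_v$, then $t\notin T_v$, and $h(v)$ is the height of $v$ in $T_v$, so there is a leaf $z\in T_v$ with $d(v,z)=h(v)$; the path from $t$ to $z$ must pass through $v$ (since $T_v$ is separated from the rest of $G$ by $v$'s cycle-neighbors, and $t$ is on the non-$T_v$ side while $z$ is strictly inside $T_v$), so $d(t,z)=d(t,v)+d(v,z)=\ecc(v)+h(v)$, which is at most $D$. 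The condition $\ecc(v)>h(v)$ is exactly what guarantees that the $\ecc(v)$-realizing path indeed leaves $T_v$, so this argument applies.

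For the lower bound, let $a,b\in V$ with $d(a,b)=D$. Let $u$ be the cycle node such that $a\in T_u$ and $u'$ the cycle node such that $b\in T_{u'}$ (each node lies in exactly one such tree; a cycle node lies in its own tree). If $u=u'$, then $a,b$ are both in $T_u$, so $D=d(a,b)\le D(T_u)$, and $D(T_u)$ is a candidate, so $D\le\max_{c\in C}\{c\}$. If $u\ne u'$, then the shortest $a$–$b$ path decomposes as a path from $a$ to $u$ inside $T_u$, followed by a path from $u$ to $u'$ along the cycle, followed by a path from $u'$ to $b$ inside $T_{u'}$; hence $D=d(a,u)+d(u,u')+d(u',b)$. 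Now $d(a,u)\le h(u)$ and $d(u',b)\le h(u')$, and by Lemma~\ref{lem:spa:pseudotree:ecc} (or directly, since $\ecc(u)\ge d(u,u')+h(u')\ge d(u,u')+d(u',b)$ and similarly $\ecc(u')\ge d(u,u')+d(u,a)$) we get $\ecc(u)\ge d(u,u')+d(u',b)$, so $d(a,u)+\ecc(u)\ge d(a,u)+d(u,u')+d(u',b)=D$. If $\ecc(u)>h(u)$, then $d(u,u')+d(u',b)=\ecc(u)$ would force $\ecc(u)\le h(u)$ only in degenerate cases, so in fact $\ecc(u)>h(u)$ holds here whenever $u\ne u'$ and $b\notin T_u$ (since the path from $u$ realizing $\ecc(u)$ reaches at least as far as $b$, which is outside $T_u$); therefore $\ecc(u)+h(u)$ is a candidate and it is at least $d(a,u)+\ecc(u)\ge D$. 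Combined with the upper bound, this gives $\max_{c\in C}\{c\}=D$.

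The main obstacle is handling the boundary cases of the $\ecc(v)>h(v)$ condition carefully, i.e.\ making sure that in the lower-bound argument the relevant cycle node $u$ really does contribute the candidate $\ecc(u)+h(u)$ (this requires $\ecc(u)>h(u)$, which I must justify from the fact that the farthest node from $u$ lies outside $T_u$), and symmetrically that in the upper-bound argument the condition $\ecc(v)>h(v)$ is enough to conclude the $\ecc(v)$-path leaves $T_v$. A secondary subtlety is the correct definition of which tree a cycle node belongs to and confirming that the shortest path between nodes in different trees factors through the corresponding cycle nodes and a cycle-segment, using that removing $v$'s two cycle-neighbors isolates $T_v$; once these structural facts are pinned down, the inequalities are routine.
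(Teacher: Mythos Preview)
Your overall structure matches the paper's: show every candidate is at most $D$, then show some candidate attains $D$ by locating the diameter endpoints $a,b$ in trees $T_u$, $T_{u'}$ and splitting on whether $u=u'$. The upper-bound argument is fine.

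The gap is in the lower bound when $u\ne u'$. You assert that $\ecc(u)>h(u)$ ``since the path from $u$ realizing $\ecc(u)$ reaches at least as far as $b$, which is outside $T_u$''. This does not follow: $\ecc(u)\ge d(u,b)$ says nothing about \emph{where} the $\ecc(u)$-realizing node sits. Concretely, take a triangle $u,u',w$ with all cycle edges of weight $1$, attach a path of length $10$ to $u$ (so $h(u)=10$), and let $T_{u'}=\{u'\}$, $T_w=\{w\}$. The diameter is $11$, realized between the deep leaf $a\in T_u$ and $b=u'$; here $u\ne u'$ and $b\notin T_u$, yet $\ecc(u)=10=h(u)$, so $u$ does \emph{not} contribute $\ecc(u)+h(u)$ as a candidate. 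Your argument, which looks only at $u$, breaks.

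What saves the situation (and what the paper does) is to consider \emph{both} endpoints' cycle nodes and argue by contradiction: if $\ecc(u)=h(u)$ and $\ecc(u')=h(u')$, then since $\ecc(u)\ge d(u,u')+h(u')$ and $\ecc(u')\ge d(u,u')+h(u)$ (there is a node in $T_{u'}$ at distance $d(u,u')+h(u')$ from $u$, and symmetrically), adding these gives $d(u,u')\le 0$, contradicting $u\ne u'$. Hence at least one of them, say $u'$, satisfies $\ecc(u')>h(u')$, and then $\ecc(u')+h(u')\ge d(u',a)+d(u',b)=D$ is the witnessing candidate. In the counterexample above this is exactly what happens: $\ecc(u')=11>0=h(u')$. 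You need this symmetric contradiction argument; the one-sided claim cannot be repaired.
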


\begin{proof}
    First, note that since every candidate value corresponds to the length of a shortest path in $G$, $c \le D$ for all $c \in C$.
    Let $s,t \in V$ be two nodes such that $D=d(s,t)$, and let $T_v$ and $T_w$ with cycle nodes $v$ and $w$ be the trees of $s$ and $t$, respectively.
    We show that $v$ or $w$ compute $D$ as one of their candidates.
    First, note that if one of the two nodes $s$ and $t$, say $s$, is a cycle node, then $D = \ecc(v) = \ecc(v) + h(v)$, and $\ecc(v) > h(v) = 0$; therefore, $v$ chooses $D$ as a candidate.

    Therefore, assume that both $s$ and $t$ are tree nodes.
    If $s$ and $t$ belong to the same tree, i.e., $v=w$, we have that $d(s,t)=D(T_v)$, which is a candidate of $v$.
    Otherwise, $D = \ecc(v)+h(v) = \ecc(w) + h(w)$.
    We only have to show that $\ecc(v) > h(v)$ or $\ecc(w) > h(w)$.
    Assume to the contrary that $\ecc(v) = h(v)$ and $\ecc(w) = h(w)$ (note that $\ecc(u) \ge h(u)$ for every cycle node $u$).
    Therefore, $\ecc(v) = d(v,w) + h(w) = h(v)$, and $\ecc(w) = d(v,w) + h(v) = h(w)$, which implies that $d(v,w) = 0$.
    However, this contradicts the assumption that $v \neq w$.
\end{proof}
} 
\shrt{
Computing the diameter is more complicated.
Since the longest path may not use any cycle node at all, each cycle node $v$ first contributes the diameter of its tree $T_v$ as a possible candidate. 
Furthermore, $v$ needs to compute its eccentricity $\ecc(v)$, and, if its eccentricity is larger than the height $h(v)$ of its tree $T_v$, contribute $\ecc(v) + h(v)$.
To compute its eccentricity, every cycle node needs to compute the distance to its farthest nodes using the algorithm of Theorem~\ref{thm:spa:sssp:cycle}, but also take into account the heights of the trees on the path to these nodes (as a longer path may lead into those trees).
}

\begin{theorem} \label{thm:spa:diameter_pseudotree}
    The diameter can be computed in any pseudotree in time $O(\log n)$.
\end{theorem}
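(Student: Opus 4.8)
The plan is to reduce the diameter to computing, at each cycle node, a constant number of \emph{candidate} values whose maximum equals $D$, and then to aggregate these candidates over the whole network. First I would dispose of the degenerate case: by Lemma~\ref{lem:spa:check_cycle_node}, after $O(\log n)$ rounds every node knows whether $G$ contains a cycle and, if so, whether it is a cycle node; if there is no cycle then $G$ is a tree and Theorem~\ref{thm:spa:diameter:tree} applies directly, so from now on assume a unique cycle exists and that the cycle nodes have identified themselves.

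Next, each cycle node $v$ should learn the height $h(v)$ and the diameter $D(T_v)$ of its pendant tree $T_v$. Removing, at every cycle node, its two incident cycle edges decomposes $G$ into the forest $\{T_v\}$, and the tree routines of Section~\ref{sec:spa:tree} (rooting via Lemma~\ref{lem:spa:tree_rooting}, height via Lemma~\ref{lem:spa:height}, and the tree-diameter algorithm behind Theorem~\ref{thm:spa:diameter:tree}) all work on a forest by running in parallel on every component, in $O(\log n)$ rounds. The value $D(T_v)$ becomes $v$'s first candidate.

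The technical core is computing, for each cycle node $v$, its eccentricity $\ecc(v)$ in the whole graph. Running the cycle algorithm of Theorem~\ref{thm:spa:sssp:cycle} on the cycle nodes alone, each cycle node learns its left and right farthest cycle nodes $v_\ell, v_r$ together with $d_\ell(v,v_\ell)$ and $d_r(v,v_r)$. The missing ingredient is the \emph{height profile} seen from $v$, namely $m_\ell(v)=\max_u(h(u)-d_\ell(v,u))$ and $m_r(v)=\max_u(h(u)-d_r(v,u))$ with the maxima over all cycle nodes $u$; these I would compute by establishing left and right pointer-jumping shortcuts along the cycle and running the $O(\log n)$-round max-relaxation of Lemma~\ref{lem:spa:pseudotree:convergence} (each cycle node starts with $h(v)$, repeatedly forwards its current value minus the doubling shortcut weight, and keeps the running maximum). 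Lemma~\ref{lem:spa:pseudotree:ecc} then yields $\ecc(v)$ in terms of $v_\ell, v_r$ and the $m$-values of those two nodes, obtainable with $O(1)$ extra communication with $v_\ell$ and $v_r$; if $\ecc(v)>h(v)$, node $v$ contributes $\ecc(v)+h(v)$ as its second candidate.

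Finally, a single application of Lemma~\ref{lem:spa:aggregation} over $G$ broadcasts the maximum of all candidates to every node, and Lemma~\ref{lem:spa:pseudotree:candidate_agg} certifies that this maximum equals $D$; every phase runs in $O(\log n)$ rounds, giving the theorem. I expect the main obstacle to be the third step: arguing that a diameter-realizing path which leaves the cycle into some pendant tree is correctly accounted for. This is precisely why the left farthest node is paired with $m_r$ and the right farthest node with $m_\ell$ in Lemma~\ref{lem:spa:pseudotree:ecc}, and why one must verify the additivity $d_\ell(v,v_\ell)+d_r(v_\ell,u)=d_\ell(v,u)$ exactly when $v$ lies between $v_\ell$ and $u$ on the cycle, so that the arc through $v$ is not double-counted. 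The convergence analysis of the max-relaxation (reaching all nodes within $2^{i-1}-1$ hops after round $i$) is routine pointer jumping, and once Lemmas~\ref{lem:spa:pseudotree:ecc}--\ref{lem:spa:pseudotree:candidate_agg} are in place the assembly of the candidates is straightforward.
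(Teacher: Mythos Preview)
Your proposal is correct and mirrors the paper's own argument essentially step for step: identify cycle nodes, compute $h(v)$ and $D(T_v)$ in the pendant trees, obtain $m_\ell$ and $m_r$ via the max-relaxation of Lemma~\ref{lem:spa:pseudotree:convergence}, combine with the farthest-node computation and Lemma~\ref{lem:spa:pseudotree:ecc} to get $\ecc(v)$, form the two candidates, and aggregate via Lemma~\ref{lem:spa:pseudotree:candidate_agg}. The only point to tighten is the phrase ``$O(1)$ extra communication with $v_\ell$ and $v_r$'': many cycle nodes can share the same farthest node, so rather than querying $v_\ell,v_r$ directly you should piggyback the $m$-values onto the farthest-node computation of Theorem~\ref{thm:spa:sssp:cycle}, exactly as the paper does.
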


\section{Cactus Graphs} \label{sec:spa:cactus_graph}

Our algorithm for cactus graphs relies on an algorithm to compute the maximal biconnected components (or \emph{blocks}) of $G$, where a graph is called biconnected if the removal of a single node would not disconnect the graph.
Note that for any graph, each edge lies in exactly one block.
In case of cactus graphs, each block is either a single edge or a simple cycle.
By computing the blocks of $G$, each node $v \in V$ classifies its incident edges into bridges (if there is no other edge incident to $v$ contained in the same block) and pairs of edges that lie in the same cycle.
To do so, we first give a variant of \cite[Theorem 1.3]{GHSW20} for the NCC$_0$ under the constraint that the input graph (which is not necessarily a cactus graph) has constant degree.
We point out how the lemma is helpful for cactus graphs, and then use a simulation of the biconnectivity algorithm of \cite{TV85} as in \cite[Theorem 1.4]{GHSW20} to compute the blocks of $G$.
The description and proofs of the following three lemmas are very technical and mainly describe adaptions of \cite{GHSW20}.
\shrt{Therefore, we defer them to the full version~\cite{full}.}

\begin{lemma}[Variant of {\cite[Theorem 1.3]{GHSW20}}] \label{lem:spa:constant_spanningtree}
    Let $G$ be any graph with constant degree.
    A spanning tree of $G$ can be computed in time $O(\log n)$, w.h.p., in the NCC$_0$.
\end{lemma}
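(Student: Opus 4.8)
The goal is to show that a spanning tree of a constant-degree graph $G$ can be computed in $O(\log n)$ rounds w.h.p.\ in the NCC$_0$, as a variant of \cite[Theorem 1.3]{GHSW20}. The plan is to follow the structure of the overlay-construction algorithm of Götte et al., but exploit the constant-degree assumption to simplify and to make every step fit the stricter NCC$_0$ constraints (where each node initially only knows its neighbors in $G$). First I would recall that the original construction of \cite{GHSW20} transforms a weakly connected constant-degree graph into a low-depth overlay (e.g.\ a tree or a well-formed list) in $O(\log n)$ rounds w.h.p.; since a spanning tree of $G$ is exactly what is produced along the way (the union of the introduction edges used to merge components, restricted to actual edges of $G$), the main work is to verify that their algorithm can be run with the NCC$_0$ communication budget when the maximum degree is a constant.

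The key steps I would carry out, in order: (1) Maintain a partition of $V$ into \emph{components}, each equipped with a logarithmic-depth internal structure (say a rooted tree or balanced list) and a unique leader; initially every node is its own component. (2) In each phase, every component proposes a merge across some incident $G$-edge — each node knows its $O(1)$ neighbors in $G$, so it can forward a candidate outgoing edge up to its component leader via the internal structure in $O(\log n)$ rounds, and the leader picks one. (3) Resolve the chosen merges: form the ``merge graph'' on components (each component has $O(1)$ proposed edges since it has $O(1)$ boundary edges — here the constant-degree assumption is crucial, as it bounds the degree in the contracted graph and lets load balancing be avoided), and use a standard symmetry-breaking / pointer-jumping step to select a set of merges that reduces the number of components by a constant factor, as in Borůvka-style overlay merging. (4) Rebuild the internal structure of each merged component to depth $O(\log n)$, which is the technically delicate part and is handled exactly as in \cite{GHSW20} (concatenating and rebalancing the sub-structures, re-electing a leader). (5) After $O(\log n)$ phases there is a single component; output the union of all edges of $G$ that were used as merge edges — this is a spanning tree since each merge connects two previously-disconnected parts and we perform $n-1$ merges in total.

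For correctness of the tree property, I would argue that the merge edges are always genuine edges of $G$ (step 2 only ever proposes $G$-edges), that each successful merge strictly decreases the number of components, and that acyclicity is guaranteed because a merge is only accepted between two \emph{distinct} components; hence after the algorithm terminates the accumulated merge edges form a connected acyclic subgraph on all $n$ nodes. For the round complexity, each of the $O(\log n)$ phases costs $O(\log n)$ rounds (internal aggregation, merge-graph symmetry breaking, and structure rebuild all take $O(\log n)$ rounds w.h.p.), and the constant-degree assumption ensures that no node ever needs to send or receive more than $O(\log n)$ messages per round, so the NCC$_0$ budget is respected without invoking the arboricity-based load-balancing machinery. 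The ``w.h.p.'' comes from the randomized symmetry-breaking used to pick a constant fraction of merges and from the randomized rebalancing in the rebuild step, both inherited from \cite{GHSW20}.

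\textbf{Main obstacle.} The hard part will be step (4): re-establishing a logarithmic-depth internal overlay for each newly merged component within $O(\log n)$ rounds while several merges may be happening simultaneously and a component may absorb several neighbors in one phase. This requires carefully chaining the sub-overlays and re-balancing them — essentially the core contribution of \cite{GHSW20} — and the bulk of the proof is in checking that their rebuild procedure still works under the NCC$_0$ restriction and the constant-degree bound. Since the paper explicitly positions this lemma as an adaptation of \cite{GHSW20}, I would present the reduction to their framework and only spell out the modifications needed for the constant-degree NCC$_0$ setting, deferring the unchanged internal mechanics to that reference.
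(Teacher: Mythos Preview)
Your proposal has a genuine running-time gap. You set up a Bor\r{u}vka-style scheme with $O(\log n)$ merging phases and then state that ``each of the $O(\log n)$ phases costs $O(\log n)$ rounds''. That gives $O(\log^2 n)$ rounds, not $O(\log n)$. The entire point of this lemma --- and of \cite[Theorem~1.3]{GHSW20} which it adapts --- is to hit $O(\log n)$; stacking $\Theta(\log n)$ phases of $\Theta(\log n)$ rounds each cannot achieve that, and nothing in your plan breaks this barrier.

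The paper's proof takes a completely different route, and you have mischaracterized what \cite{GHSW20} actually does. It is not a phase-by-phase component merge. One first invokes \cite[Theorem~1.1]{GHSW20} to build a hierarchy $G_0 = G, G_1, \ldots, G_L$ with $L = O(\log n)$, where every edge of $G_i$ connects the two endpoints of a \emph{constant-length random walk} in $G_{i-1}$; this makes $G_L$ have degree and diameter $O(\log n)$, w.h.p. One then takes a BFS tree of $G_L$, turns it into an Euler-tour path $P_L$, and iteratively \emph{unrolls}: each edge of $P_i$ is replaced by the $O(1)$ edges of the random walk in $G_{i-1}$ that produced it, yielding $P_{i-1}$. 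After $L$ unrollings, $P_0$ uses only edges of $G$, and keeping for every node the first edge over which it is reached in $P_0$ gives a spanning tree. The adaptation needed for NCC$_0$ is that a node may have degree $O(\log^2 n)$ in $P_0$, so one cannot afford a final pointer-jumping pass to find first edges; instead the edges of $P_L$ are enumerated once, and each unrolling extends the labels by $O(1)$ bits so that first edges can be read off locally. This is a single $O(\log n)$-round pipeline, not $O(\log n)$ sequential merge phases.

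A smaller side issue: your claim that ``each component has $O(1)$ boundary edges'' because $G$ has constant degree is false --- a component of size $k$ can have $\Theta(k)$ boundary edges. Constant degree bounds the per-node degree, not the degree of the contracted graph.
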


\lng{
\begin{proof}
    To prove the lemma, we need a combination of \cite[Theorem 1.1]{GHSW20}, which transforms the initial graph into an overlay $G_T$ of diameter and degree $O(\log n)$, w.h.p., and a variant of the spanning tree algorithm of \cite[Theorem 1.3]{GHSW20}.
    
    \cite[Theorem 1.1]{GHSW20} creates a \emph{well-formed tree}, which is a tree that contains all nodes and has constant degree and diameter $O(\log n)$. 
    The tree is obtained from an intermediate graph $G_L$ that has degree and diameter $O(\log n)$. 
    The edges of $G_L$ are created by performing random walks of constant length in a graph $G_{L-1}$, which again is obtained from random walks in $G_{L-2}$, where $G_0$ is the graph $G$ extended by some self-loops and edge copies.
    More precisely, each edge of $G_i$ results from performing a random walk of constant length in $G_{i-1}$, and connecting the two endpoints of the walk.
    
    In the algorithm of \cite[Theorem 1.3]{GHSW20}, we first obtain a path $P_L$ in $G_L$ that contains all nodes using the Euler tour technique in a BFS tree of $G_L$.
    In contrast to the assumptions of the theorem, we do this on a graph $G_L$ that has degree $O(\log n)$, w.h.p. (c.f. \cite[Lemma 3.1 (1)]{GHSW20}, which implies that all graphs have degree $\Delta = O(\log n)$).
    This allows us to construct $P_L$ in time $O(\log n)$ in the NCC$_0$.
    We then follow the idea of \cite{GHSW20} to iteratively replace each edge of $P_{i}$ by the edges the random walk that resulted in that edge took in $G_{i-1}$ to obtain a path $P_{i-1}$ in $G_{i-1}$ for all $1 \le i \le L$.
    Since each random walk only took a constant number of steps, the endpoints of each edge in $G_i$ can easily inform the corresponding edge's endpoints in $G_{i-1}$.
    After $L=O(\log n)$ steps, $P_0$ only contains edges of $G$ (or self-loops, which can easily be removed).
    
    In \cite{GHSW20}, every node then computes the first edge over which it is reached in $P_0$ and keeps only this edge, which results in a spanning tree.
    However, note that in our case each graph $G_i$ has a degree of $O(\log n)$, w.h.p., which implies that a node may have a degree of $O(\log^2 n)$ in $P_0$.
    This prevents us from computing the first edge of each node by performing pointer jumping in $P_0$ in $O(\log n)$ rounds in the NCC$_0$.
    In the algorithm of \cite{GHSW20}, where the degree is even higher, the authors simply allow higher communication work.
    Instead, we compute the first edge of each node "on the fly" while constructing $P_{i-1}$ from $P_i$.
    To do so, we maintain the invariant that the edges of $P_i$ are enumerated with unique $O(\log n)$ bit labels such that for every two edges $e, e'$ in $P$ the label of $e$ is smaller than the label of $e'$ if and only if $e$ occurs before $e'$ in $P_i$, and mark the endpoint of each edge that occurs earlier in $P_i$.
    At the beginning, for $P_L$, we simply enumerate the edges of $P_L$ from $1$ to $k$ starting at the root of the BFS tree using pointer jumping.
    This can be done since the degree of each $G_i$ is $O(\log n)$, w.h.p.
    Then, whenever an edge $e$ of $P_i$ is replaced by edges of $G_{i-1}$, we enumerate the corresponding edges using the label of $e$ extended by a constant number of bits to order them along the random walk from the earlier endpoint of $e$ to its other endpoint.
    Then, the edges of $P_0$ will be ordered, and each node can locally determine its first edge without the need to perform pointer jumping.
    This concludes the lemma.
\end{proof}

Now let $G$ be a cactus graph.
To use the previous lemma on cactus graphs, we follow the idea of \cite[Section 4.2]{GHSW20} and transform $G$ into a constant-degree graph $G'$ using a construction similar to \emph{child-sibling trees} \cite{AW07, GHSS17}, then compute a spanning tree $S'$ on $G'$ using the previous lemma, and finally infer a spanning tree $S$ of $G$ from $S'$.
The details can be found in the proof of the following lemma.
}

\begin{lemma} \label{lem:spa:cactus_spanning}
   A spanning tree of a cactus graph $G$ can be computed in time $O(\log n)$, w.h.p.
\end{lemma}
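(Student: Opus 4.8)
The plan is to reduce the cactus case to Lemma~\ref{lem:spa:constant_spanningtree} by a degree-reduction gadget, mirroring the \emph{child--sibling tree} idea used in \cite{GHSW20, AW07, GHSS17}. First I would replace each high-degree node $v$ of $G$ by a small \emph{cluster} of $\deg(v)$ virtual nodes $v_0,\dots,v_{\deg(v)-1}$, one per incident edge of $v$, ordered by the identifiers of the corresponding neighbors (this is exactly the virtual-node labelling already used in Sections~\ref{sec:spa:tree} and \ref{sec:spa:pseudotree}, so the identifiers $\id(v)\circ i$ and the introductions cost only $O(1)$ rounds in the local network). Inside a cluster, I connect consecutive virtual nodes $v_i$ and $v_{i+1}$ by a \emph{cluster edge}, forming a path (or, if one prefers, a balanced binary tree of depth $O(\log\deg(v))$, but a path already suffices since the spanning-tree algorithm of Lemma~\ref{lem:spa:constant_spanningtree} does not care about diameter). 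For each original edge $\{u,v\}\in E$ I keep exactly one \emph{original edge} in $G'$ between the unique virtual copy of $u$ assigned to this edge and the unique virtual copy of $v$ assigned to it. The resulting graph $G'$ has constant degree (each virtual node has at most two cluster neighbors and one original neighbor), $O(n)$ nodes and edges, and it can be set up in $O(\log n)$ rounds — each node $v$ learns its neighbors' identifiers over the local edges of $G$ and locally computes the assignment of edges to virtual copies.

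Next I would invoke Lemma~\ref{lem:spa:constant_spanningtree} on $G'$ to obtain a spanning tree $S'$ of $G'$ in $O(\log n)$ rounds, w.h.p., in the NCC$_0$. The key observation is that $G'$ is connected iff $G$ is (the cluster paths keep each cluster internally connected, and the original edges preserve all adjacencies of $G$), so $S'$ is a genuine spanning tree. From $S'$ I recover a spanning tree $S$ of $G$ as follows: take the set $S$ of all original edges $\{u,v\}\in E$ whose corresponding copy edge lies in $S'$. Since $S'$ is acyclic and spans all of $G'$, the image of $S'$ under the natural projection $V(G')\to V(G)$ (collapsing each cluster to its original node) spans all of $V$; the only edges of this image are the original edges in $S'$ together with the collapsed cluster edges, and the latter vanish under the projection. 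Hence $S$ spans $G$. Acyclicity of $S$ follows because a cycle in $S$ would lift to a closed walk in $S'$ using only original edges and cluster edges, and since the cluster subgraphs are trees, such a closed walk would force a cycle in $S'$ — contradiction. (Here the fact that $G$ is a cactus is not even essential for this lemma; it will only be used afterwards to identify blocks, as discussed in the surrounding text.) Each node $v$ can learn which of its incident edges of $G$ belong to $S$ in $O(1)$ additional rounds by having each virtual copy report the status of its original edge back to $v$ over the local network.

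The main obstacle I expect is purely bookkeeping rather than conceptual: making sure the virtual-node construction of $G'$ is implementable in the NCC$_0$ within $O(\log n)$ rounds, i.e., that every virtual node knows the identifiers of its (constantly many) neighbors in $G'$ and that the back-and-forth translation between $S'$-membership and $S$-membership respects the $O(\log n)$ message budget per node. Since each original node spawns $\deg(v)$ virtual nodes, a naive simulation would overload $v$; but every virtual node of $v$ has only $O(1)$ neighbors in $G'$, so by the redistribution framework of Section~\ref{sec:spa:tree} (orienting the bounded-arboricity graph $G$ — a cactus has arboricity $2$ — with outdegree $O(1)$ and handing each virtual node to an endpoint of its original edge) every real node ends up simulating only $O(1)$ virtual nodes, and the whole simulation runs with $O(1)$ messages over each local edge and $O(\log n)$ messages per node over global edges, as required. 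With this in place the lemma follows, and we note that the same construction will be reused when simulating the biconnectivity algorithm of \cite{TV85} in the next lemma.
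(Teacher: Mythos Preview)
Your reduction to Lemma~\ref{lem:spa:constant_spanningtree} via a virtual-node blow-up is natural, but the recovery step is broken: the set $S$ of original edges appearing in $S'$ need not be acyclic. Take $G$ to be a triangle on $\{a,b,c\}$ (a cactus). Your $G'$ has six virtual nodes and is itself a $6$-cycle alternating between cluster edges and original edges. A spanning tree $S'$ of this $6$-cycle drops exactly one edge; if that edge happens to be a cluster edge (say $\{a_0,a_1\}$), then all three original edges survive in $S'$, and your projected $S$ is the whole triangle. Your lifting argument fails precisely here: a cycle in $S$ lifts to a closed walk in $G'$, not in $S'$, because the cluster edges needed to close the walk are not guaranteed to lie in $S'$. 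The phrase ``since the cluster subgraphs are trees'' is about the cluster subgraphs of $G'$, not of $S'$, and says nothing about $S'$.

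The paper sidesteps this by a different mechanism. Its $G'$ is built on the \emph{original} vertex set via a child--sibling replacement (orient the edges with outdegree $O(1)$, keep for each $v$ only the edge to its first incoming neighbor, and add sibling edges between consecutive incoming neighbors), and then it exploits the internal structure of Lemma~\ref{lem:spa:constant_spanningtree}: that lemma actually produces an ordered path $P_0$ with enumerated edges, and every sibling edge $\{v_i,v_{i+1}\}$ in $P_0$ can be locally replaced by the two genuine $G$-edges $\{v,v_i\},\{v,v_{i+1}\}$ while preserving the edge ordering. The spanning tree of $G$ is then read off from this ordered path by the ``first edge reaching each node'' rule, which guarantees acyclicity directly. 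If you want to salvage your blow-up approach you would need an analogous post-processing step that actually enforces a tree after projection; simply intersecting $S'$ with the original edges does not.
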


\lng{
\begin{proof}
    We first use \cite[Theorem 3.5]{BE10} to assign directions to the edges; since the arboricity of $G$ is 2, each node $v$ will have $6$ outgoing neighbors (i.e., the nodes $u$ such that $\{v,u\}$ is directed towards $u$), and possibly many incoming neighbors.
    Let $v_1, \ldots, v_k$ be the incoming neighbors of $v$ sorted by increasing identifier.
    Let $G'$ be the graph that contains the edge $\{v, v_1\}$ for each $v$ and an edge $\{v_i, v_{i+1}\}$ for all $1 \le i < k$ (multi-edges created in this way are regarded as a single edge).
    Note that each node has at most 6 outgoing neighbors, keeps only one edge to an incoming neighbor, and is assigned at most one additional neighbor by each of its outgoing neighbors.
    Therefore, $G'$ has degree at most 13 and we can apply Lemma~\ref{lem:spa:constant_spanningtree} to obtain a spanning tree $S'$.
    However, $S'$ is actually a spanning tree of $G'$, which contains edges $\{v_i, v_{i+1}\}$ that do not exist in $G$.
    To obtain a spanning tree of $G$ instead, we follow the approach of \cite[Lemma 4.12]{GHSW20}:
    Recall that Lemma~\ref{lem:spa:constant_spanningtree} creates a path $P_0$ that contains all nodes of $G'$ and in which the edges are enumerated by their order in $P_0$.
    Every edge $\{v_i, v_{i+1}\}$ that appears in $P_0$, and that does not exist in $G$, can now easily be replaced by the two edges $\{v, v_i\}$ and $\{v, v_{i+1}\}$ using local communication (i.e., $v_i$ and $v_{i+1}$ were incoming neighbors of $v$).
    Furthermore, we can easily keep the order of the edges so that afterwards each node still knows the first edge over which it is reached in the path, whereby we obtain a spanning tree of $G$.
\end{proof}

To obtain the biconnected components of $G$, we now perform a simulation of \cite{TV85} in almost the same way as \cite[Theorem 1.4]{GHSW20}.
The algorithm relies on a spanning tree, which we compute using Lemma~\ref{lem:spa:cactus_spanning}, and constructs a helper graph, whose connected components yield the biconnected components of $G$.
The only difference from our application to the simulation described in \cite{GHSW20} lies in the fact that we do not rely on the complicated algorithm of  \cite[Theorem 1.2]{GHSW20} for computing the connected components, but use Lemma~\ref{lem:spa:constant_spanningtree}, and the child-sibling approach of Lemma~\ref{lem:spa:cactus_spanning}, exploiting the properties of a cactus graph.
}

\begin{lemma} \label{lem:spa:biconnected}
   The biconnected components of a cactus graph $G$ can be computed in time $O(\log n)$, w.h.p.
\end{lemma}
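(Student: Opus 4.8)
The plan is to follow the outline already sketched in the text: we simulate the classical biconnectivity algorithm of Tarjan and Vishkin~\cite{TV85}, reusing the spanning-tree and Euler-tour machinery developed in the previous sections, and only replacing its connected-components subroutine by a lightweight one tailored to cactus graphs. First, using Lemma~\ref{lem:spa:cactus_spanning} we compute a spanning tree $T$ of $G$ in $O(\log n)$ rounds, w.h.p.; as a byproduct of that construction every node learns which of its incident edges belong to $T$ and which are non-tree edges. We root $T$ via Lemma~\ref{lem:spa:tree_rooting} and then compute, for every node, the standard quantities needed by~\cite{TV85}: a preorder numbering, subtree (descendant) counts, and the \emph{low} and \emph{high} values that, for each tree edge, record the smallest and largest preorder number reachable from the subtree below that edge via at most one non-tree edge. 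Preorder numbers and descendant counts follow from the Euler-tour construction of Section~\ref{sec:spa:tree} together with Lemma~\ref{lem:spa:subtree_size_aggregate}, and the \emph{low}/\emph{high} values are minima/maxima aggregated over subtrees, so Lemma~\ref{lem:spa:aggregation}, applied after each non-tree edge has sent the preorder numbers of its endpoints to the subtree, suffices. Since a cactus has arboricity $2$, $T$ may have high degree, but the redistribution framework of Section~\ref{sec:spa:tree} lets every node simulate only $O(1)$ virtual nodes, so all of these computations run in $O(\log n)$ rounds.

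Next, as in~\cite{TV85}, we form the auxiliary graph $G''$ whose vertex set is the edge set of $G$ and in which two edges are joined according to the Tarjan--Vishkin adjacency rules (a tree edge and its parent tree edge are joined exactly when the subtree condition on \emph{low}/\emph{high} holds; each non-tree edge is joined to the appropriate tree edges; etc.), so that the connected components of $G''$ are exactly the biconnected components of $G$. Each node can decide locally, from the quantities computed above, which of the incident $G''$-edges to create, and distributing these $O(n)$ auxiliary edges over the global network is unproblematic because $G$ has arboricity $2$ (using the same load-balancing idea as in the proof of Lemma~\ref{lem:spa:cactus_spanning}). The one deviation from~\cite[Theorem~1.4]{GHSW20} is that we must not invoke their heavy connected-components routine: instead we exploit that $G$ is a cactus, so every block is a single edge or a simple cycle, and hence every component of $G''$ is either a single vertex or a ring-like structure of bounded arboricity. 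We therefore transform $G''$ into a constant-degree graph $G'''$ by the child--sibling construction used in the proof of Lemma~\ref{lem:spa:cactus_spanning}, apply Lemma~\ref{lem:spa:constant_spanningtree} (component-wise, i.e.\ to obtain a spanning forest) to $G'''$, and let every $G''$-node learn the minimum identifier in its component by one application of Lemma~\ref{lem:spa:aggregation} on that forest. Each node of $G$ now knows, for each incident edge, a component label, and can thus classify its edges into bridges (whose label is shared by no other incident edge) and pairs of edges lying in a common cycle (those sharing a label), exactly as the cactus algorithm requires.

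The main obstacle is the bookkeeping needed to run Tarjan--Vishkin purely through the rooted-tree primitives of Section~\ref{sec:spa:tree}: one has to verify that every ingredient — preorder/postorder numbers, descendant counts, the \emph{low} and \emph{high} functions, and the membership tests defining the adjacency of $G''$ — reduces to a constant number of Euler-tour traversals or subtree aggregations, each of which we can carry out in $O(\log n)$ rounds despite the potentially large node degrees, after redistributing virtual nodes as in Section~\ref{sec:spa:tree}. Once this is in place, replacing the generic connected-components step by the constant-degree spanning-tree routine of Lemma~\ref{lem:spa:constant_spanningtree} combined with the special structure of cactus blocks is routine, and summing the $O(\log n)$-round subroutines yields total running time $O(\log n)$, w.h.p.
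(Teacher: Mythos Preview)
Your proposal is correct and follows essentially the same approach as the paper: simulate the Tarjan--Vishkin biconnectivity algorithm as in \cite[Theorem~1.4]{GHSW20}, replacing only the connected-components step on the auxiliary graph $G''$ by the child--sibling reduction plus Lemma~\ref{lem:spa:constant_spanningtree}, justified by the fact that $G''$ has constant arboricity in the cactus case. The paper's write-up is terser (it points directly to \cite{GHSW20} and only spells out the Step~4 replacement) and argues constant arboricity of $G''$ via the observation that each node has at most one non-tree neighbor outside its subtree, whereas you argue it via the block structure; both justifications are valid and the overall strategy is the same.
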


\lng{
\begin{proof}
    The only difference to the approach of \cite[Theorem 1.4]{GHSW20} lies in Step 4. 
    Instead of using \cite[Theorem 1.2]{GHSW20} to compute the connected components of the helper graph $G''$, we observe that every node $v$ is only adjacent to at most one node $w$ in $G$ that lies in a different subtree than $v$ in the rooted spanning tree $T$ of $G$.
    Therefore, its parent edge $\{v,u\}$ in $T$ (which is a node in $G''$) will only create one connection to the parent edge $\{w, x\}$ of $w$ according to Rule 1 of Step 3 of \cite[Section 4.4]{GHSW20}.
    Therefore, the arboricity of $G''$ is constant, and we can combine the child-sibling approach of the proof of Lemma~\ref{lem:spa:cactus_spanning} with Lemma~\ref{lem:spa:constant_spanningtree} to compute a spanning tree on each connected component of $G''$.
    Note that, as \cite{GHSW20} point out, this simulation is possible since local communication in $G''$ can be carried out using a constant number of local communication rounds in $G$, and regarding the global communication we observe that each node in $G$ only simulates a single node in $G''$. 
    Finally, using a simulation of Lemma~\ref{lem:spa:aggregation} on the spanning trees, we can distinguish the connected components from one another, which concludes the lemma.
\end{proof}
}

Thus, every node can determine which of its incident edges lie in the same block in time $O(\log n)$, w.h.p.
Let $s$ be the source for the SSSP problem.
First, we compute the \emph{anchor node} of each cycle in $G$, which is the node of the cycle that is closest to $s$ (if $s$ is a cycle node, then the anchor node of that cycle is $s$ itself).
To do so, we replace each cycle $C$ in $G$ by a binary tree $T_C$ of height $O(\log n)$ as described in \cite{GHSS17}.
More precisely, we first establish shortcut edges using the Introduction algorithm in each cycle, and then perform a broadcast from the node with highest identifier in $C$ for $O(\log n)$ rounds.
If in some round a node receives the broadcast for the first time from $\ell_{i}$ or $r_i$, it sets that node as its parent in $T_C$ and forwards the broadcast to $\ell_j$ and $r_j$, where $j = \min\{i-1, 0\}$.
After $O(\log n)$ rounds, $T_C$ is a binary tree that contains all nodes of $C$ and has height $O(\log n)$.
To perform the execution in all cycles in parallel, each node simulates one virtual node for each cycle it lies in and connects the virtual nodes using their knowledge of the blocks of $G$.
To keep the global communication low, we again use the redistribution framework described in Section~\ref{sec:spa:tree} (note that the arboricity of $G$ is $2$).

\begin{lemma} \label{lem:spa:cactus_anchor}
    Let $T$ be the (unweighted) tree that results from taking the union of all trees $T_C$ and all bridges in $G$.
    For each cycle $C$, the node $a_C := \argmin_{v \in C} d_T(s,v)$ is the anchor node of $C$.
\end{lemma}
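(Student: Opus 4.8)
The plan is to reduce the statement to a structural fact about cactus graphs. First I would record the relevant properties of $T$. By construction $T$ arises from $G$ by replacing every cycle block $C$ with a spanning subtree $T_C$ on the same vertex set $V(C)$ and keeping every bridge, so $V(T)=V(G)$; since a connected cactus on $n$ vertices with $k$ cycles has $n-1+k$ edges and each $T_C$ has $|C|-1$ edges, $T$ has $n-1$ edges and, being connected, is a tree. Moreover the subtrees $T_C$ and the bridges pairwise meet exactly in the cut vertices of $G$, just as the blocks of $G$ do, so $T$ has the same cut vertices and the same block--cut tree as $G$, with each cycle block $C$ replaced by $T_C$; in particular every vertex of $C$ that is not a cut vertex of $G$ has all its $T$-edges inside $T_C$.

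Second I would characterize the anchor node structurally. If $s\in V(C)$ then $a_C=s$ by definition, and the claim is immediate since every other vertex of $C$ has $T$-distance at least $1$ from $s$ (all edge weights of $T$ are $1$). So assume $s\notin V(C)$. As $C$ is a biconnected block not containing $s$, there is a unique cut vertex $c^\star\in V(C)$ such that every $s$--$v$ walk in $G$ with $v\in V(C)$ passes through $c^\star$ (it is the unique vertex of $C$ on the block--cut-tree path from $s$ to $C$). Hence $d_G(s,v)=d_G(s,c^\star)+d_G(c^\star,v)$ for all $v\in V(C)$, and since edge weights are positive this is minimized only at $v=c^\star$; thus the anchor node $a_C$ equals $c^\star$, equivalently $a_C$ is the unique vertex of $C$ whose removal separates $s$ from $V(C)\setminus\{a_C\}$ in $G$.

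Third I would transfer this to $T$. Since $T$ has the same cut vertices and block--cut structure as $G$, the vertex $a_C$ also separates $s$ from $V(T_C)\setminus\{a_C\}$ in $T$: $T_C$ is connected and attaches to the rest of $T$ only at cut vertices of $C$, and the only such cut vertex lying on the $s$-side is $a_C$, so any $s$--$v$ path in $T$ with $v\in V(C)$ must visit $a_C$. As $T$ is a tree the $s$--$v$ path is simple and unique, and after reaching $a_C$ it stays inside $T_C$ (the path between two vertices of a subtree lies in that subtree), whence
\[
 d_T(s,v)=d_T(s,a_C)+d_{T_C}(a_C,v).
\]
Because $T$ is unweighted, $d_{T_C}(a_C,v)\ge 1$ for every $v\in V(C)\setminus\{a_C\}$, so the right-hand side is minimized over $v\in V(C)$ uniquely at $v=a_C$. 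Therefore $\argmin_{v\in C}d_T(s,v)=a_C$, the anchor node of $C$.

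I expect the main obstacle to be the structural bookkeeping in the first and third paragraphs: rigorously arguing that replacing every cycle block by a spanning subtree preserves all cut vertices and the block--cut tree, so that the vertex of $C$ guarding the rest of $C$ against $s$ is literally the same in $G$ and in $T$. Once that is settled, the distance decomposition and the positivity/unweightedness observations are routine.
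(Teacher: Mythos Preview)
Your argument is correct and follows exactly the idea the paper uses: both rely on the single observation that every $s$--$v$ path with $v\in C$ must pass through the anchor node of $C$, in $G$ and in $T$ alike. The paper states this in one sentence without further justification, whereas you spell out the block--cut reasoning behind it; one minor imprecision is that $T$ does not literally have ``the same block--cut tree'' as $G$ (in a tree every edge is a block), but what you actually use---that $a_C$ still separates $s$ from $V(C)\setminus\{a_C\}$ because the pieces $T_C$ and bridges are glued at precisely the cut vertices of $G$---is correct and suffices.
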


The correctness of the lemma above simply follows from the fact that any shortest path from $s$ to any node in $C$ must contain the anchor node of $C$ both in $G$ and in $T$.
Therefore, the anchor node of each cycle can be computed by first performing the SSSP algorithm for trees with source $s$ in $T$ and then conducting a broadcast in each cycle.
Now let $v$ be the anchor node of some cycle $C$ in $G$.
By performing the diameter algorithm of Theorem~\ref{thm:spa:sssp:cycle} in $C$, $v$ can compute its left and right farthest nodes $v_\ell$ and $v_r$ in $C$.
Again, to perform all executions in parallel, we use our redistribution framework.

\begin{lemma} \label{lem:spa:cactus_spt}
    Let $S_G$ be the graph that results from removing the edge $\{v_\ell, v_r\}$ from each cycle $C$ with anchor node $v$.
    $S_G$ is a shortest path tree of $G$ with source $s$.
\end{lemma}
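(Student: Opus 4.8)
The plan is to show that $S_G$ is (1) a spanning subgraph of $G$ that is a tree, and (2) a shortest path tree, i.e., $d_{S_G}(s,u) = d_G(s,u)$ for every node $u$. For (1), observe that $S_G$ is obtained from $G$ by deleting exactly one edge from each cycle $C$. Since $G$ is a cactus graph, the cycles are edge-disjoint and each edge lies in at most one cycle, so these deletions are independent and each destroys exactly one cycle without disconnecting $G$ (removing an edge of a simple cycle leaves the cycle connected as a path). Hence $S_G$ is connected and acyclic, so it is a spanning tree of $G$.

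For (2), the key observation is that for any node $u$ and any cycle $C$ with anchor node $a_C$, any shortest path in $G$ from $s$ to $u$ that enters $C$ must pass through $a_C$, because $a_C$ is by definition the node of $C$ closest to $s$ and $C$ is a block: every $s$–$C$ path in $G$ must reach $C$ via a cut vertex, and among nodes of $C$, $a_C$ minimizes the distance to $s$. So it suffices to argue that for each cycle $C$ with anchor $v = a_C$, removing the edge $\{v_\ell, v_r\}$ preserves all distances from $v$ to nodes of $C$. This is precisely the statement already established in the cycle section: by Theorem~\ref{thm:spa:sssp:cycle} and the definition of the left and right farthest nodes, $v_\ell$ is the farthest node from $v$ along a left traversal for which the left-traversal path is still shortest, and $v_r$ its successor; the edge $\{v_\ell, v_r\}$ is the unique edge of $C$ not used by any shortest path from $v$ (since for every node of $C$ either the left or the right traversal from $v$ realizes its distance, and neither of these traversals, restricted to a shortest path, uses $\{v_\ell, v_r\}$). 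Thus $d_C(v, x) = d_{C \setminus \{v_\ell,v_r\}}(v,x)$ for all $x \in C$.

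Combining these: take any node $u$, and let $P$ be a shortest $s$–$u$ path in $G$. Decompose $P$ into bridge edges and maximal segments through cycles; each cycle segment enters and leaves through the anchor, hence is a shortest $v$–$x$ subpath within that cycle for the appropriate anchor $v$ and exit node $x$, and by the previous paragraph can be rerouted (if necessary) to avoid the removed edge $\{v_\ell,v_r\}$ without changing its length. The resulting path lies entirely in $S_G$ and has length $d_G(s,u)$, so $d_{S_G}(s,u) \le d_G(s,u)$; the reverse inequality is immediate since $S_G \subseteq G$. Therefore $S_G$ is a shortest path tree of $G$ rooted at $s$.

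The main obstacle is making the decomposition-and-rerouting argument rigorous, in particular verifying that a shortest path in $G$ interacts with each cycle as a single "in through the anchor, out through the anchor" excursion — one must rule out a shortest path leaving and re-entering the same cycle, which follows from the block structure (a simple cycle is a biconnected component, so its only cut vertices to the rest of $G$ are the nodes themselves, and re-entering would create a strictly longer path). Once that structural fact is nailed down, the distance-preservation reduces cleanly to the already-proven correctness of the cycle algorithm applied with source $a_C$.
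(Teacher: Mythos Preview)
Your proof is correct and follows essentially the same approach as the paper's, with the same two steps: show $S_G$ is a spanning tree (since every edge lies in at most one cycle, deleting one edge per cycle leaves a connected acyclic graph), and then argue that distances from $s$ are preserved because the edge $\{v_\ell,v_r\}$ removed from each cycle $C$ is never needed on a shortest path from the anchor $a_C$ to any node of $C$. The paper phrases the second part as a short contradiction argument (if some $s$--$t$ path in $G$ beats the $S_G$-path, some cycle segment from the anchor would be strictly shorter than the retained arc, contradicting the definition of $v_\ell,v_r$), whereas you give the direct constructive version (decompose a shortest path into bridge and cycle segments and reroute each cycle segment within the retained arc); the underlying idea is identical. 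Your discussion of why a shortest path intersects each cycle in a single segment entered at the anchor is more explicit than the paper's, which simply asserts ``note that $v$ is the anchor node of $C$'' without further justification, so if anything your write-up is slightly more careful on that point.
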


\lng{
\begin{proof}
    Since we delete one edge of each cycle, and every edge is only contained in exactly one cycle, $S_G$ is a tree.
    Assume to the contrary that $S_G$ does not contain a shortest path from $s$ to any other node, i.e., there exists a $P$ path from $s$ to some node $t$ in $G$ that is shorter than the (unique) shortest path from $s$ to $t$ in $S_G$.
    Specifically, $P$ must contain an edge $e$ of a cycle $C$ such that the subpath $P' = (v, \ldots, u)$ of $P$ that contains only the nodes of $C$ is strictly shorter than the path from $v$ to $u$ in $S_G$ (roughly speaking, not all subpaths over cycles that use the deleted edge of that cycle can have the same length as the subpath that goes along the other side of the cycle).
    Note that $v$ is the anchor node of $C$, and $P'$ is the (unique) path from $v$ to $u$ in $C$ that contains $e$, whereas $S_G$ contains the unique path $P''$ from $v$ to $u$ that does \emph{not} contain $e$ (i.e., it goes along the other direction of the cycle).
    However, by definition of $e$, $P''$ must already be a shortest path between $v$ and $u$ in $G$, which contradicts the assumption that $P'$ is shorter.
\end{proof}
}

Therefore, we can perform the SSSP algorithm for trees of Theorem~\ref{thm:spa:sssp:tree} on $S_G$ and obtain the following theorem.

\begin{theorem} \label{thm:spa:sssp:cactus}
    SSSP can be computed in any cactus graph in time $O(\log n)$.
\end{theorem}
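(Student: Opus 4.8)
The plan is to assemble the ingredients developed above into a single $O(\log n)$-round pipeline; the only source of randomness is the block computation, everything else is deterministic. First I would invoke Lemma~\ref{lem:spa:biconnected} so that every node learns, for each incident edge, whether it is a bridge or a cycle edge, and which pairs of its incident edges belong to the same cycle. This costs $O(\log n)$ rounds, w.h.p.

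Next I would build the auxiliary tree $T$ of Lemma~\ref{lem:spa:cactus_anchor}: replace each cycle $C$ by a balanced binary tree $T_C$ of height $O(\log n)$ by first establishing shortcut edges inside $C$ with the Introduction algorithm and then performing a broadcast from the highest-identifier node of $C$ (exactly as described before the lemma), and retain all bridges. Since a node may lie on many cycles and may have high degree, I would run all these cycle constructions simultaneously by letting each node simulate one virtual node per incident cycle and applying the redistribution framework of Section~\ref{sec:spa:tree} (the arboricity of a cactus is $2$), so that each node simulates only $O(1)$ virtual nodes and its per-round global communication stays $O(\log n)$. I would then run the tree-SSSP algorithm of Theorem~\ref{thm:spa:sssp:tree} on $T$ with source $s$, and, using the shortcut edges already present in each cycle, aggregate the minimum of the values $d_T(s,\cdot)$ inside every cycle; by Lemma~\ref{lem:spa:cactus_anchor} this identifies the anchor node $a_C$ of each cycle $C$.

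For each cycle $C$ with anchor node $v=a_C$, I would then run the cycle algorithm of Theorem~\ref{thm:spa:sssp:cycle} inside $C$ so that $v$ learns its left and right farthest nodes $v_\ell$ and $v_r$; again all cycles are handled in parallel via the redistribution framework. Deleting the edge $\{v_\ell,v_r\}$ from every cycle produces the graph $S_G$, which by Lemma~\ref{lem:spa:cactus_spt} is a shortest-path tree of $G$ rooted at $s$. Finally I would run the tree-SSSP algorithm of Theorem~\ref{thm:spa:sssp:tree} once more, on $S_G$ with source $s$; since $S_G$ is a shortest-path tree, the distance every node computes in $S_G$ is exactly its distance to $s$ in $G$. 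Summing the phases yields $O(\log n)$ rounds, w.h.p.

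The main obstacle is not the high-level correctness, which is already delivered by Lemmas~\ref{lem:spa:biconnected}, \ref{lem:spa:cactus_anchor}, and~\ref{lem:spa:cactus_spt}, but the bookkeeping required to run the per-cycle subroutines (binary-tree construction, anchor aggregation, and cycle-SSSP) for all cycles at once without ever exceeding the $O(\log n)$ global-communication bound when nodes have large degree or lie on many cycles. This is precisely what the redistribution framework together with the constant arboricity of cactus graphs is designed to handle, and one has to verify that in each of these parallel simulations every virtual node participates in only $O(1)$ of the simulated structures, so that the $O(\log n)$-round guarantees of Theorems~\ref{thm:spa:sssp:tree} and~\ref{thm:spa:sssp:cycle} carry over unchanged.
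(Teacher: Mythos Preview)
Your proposal is correct and follows essentially the same pipeline as the paper: compute the blocks via Lemma~\ref{lem:spa:biconnected}, build the auxiliary tree $T$ by replacing each cycle with a balanced binary tree (using the redistribution framework for parallelism), run tree-SSSP on $T$ to find the anchor of each cycle via Lemma~\ref{lem:spa:cactus_anchor}, use the cycle algorithm to find the farthest-node edge to delete, and finally run tree-SSSP on the resulting shortest-path tree $S_G$ of Lemma~\ref{lem:spa:cactus_spt}. Your identification of the redistribution bookkeeping as the only nontrivial implementation issue matches the paper's treatment as well.
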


To compute the diameter, we first perform the algorithm of Lemma~\ref{lem:spa:cactus_spt} with the node that has highest identifier as source $s$,\footnote{In the NCC$_0$, this node can be determined by constructing the tree $T$ from Lemma~\ref{lem:spa:cactus_anchor} and using Lemma~\ref{lem:spa:aggregation} on $T$.} which yields a shortest path tree $S_G$.
This tree can easily be rooted using Lemma~\ref{lem:spa:tree_rooting}.
Let $Q(v)$ denote the children of $v$ in $S_G$.
Using Lemma~\ref{lem:spa:height}, each node $v$ can compute its height $h(v)$ in $S_G$ and can locally determine the value \lng{\[
    m(v) := \max_{u,w \in Q(v), u \neq w} ( h(u) + h(w) + w({v, u}) + w({v, w}) ).
\]}\shrt{$m(v) := \max_{u,w \in Q(v), u \neq w} ( h(u) + h(w) + w({v, u}) + w({v, w}) )$.}
$m(v)$ is the length of the longest path in $v$'s subtree in $S_G$ that contains $v$.
We further define the pseudotree $\Pi_C$ of each cycle $C$ as the graph that contains all edges of $C$ and, additionally, an edge $\{v, t_v\}$ for each node $v \neq a_C$ of $C$, where $t_v$ is a node that is simulated by $v$, and $w(\{v, t_v\}) = \max_{u \in Q(v) \setminus C} (h(u) + w(\{v, u\}))$.
Intuitively, each node $v$ of $C$ that is not the anchor node is attached an edge whose weight equals the height of its subtree in $S_G$ without considering the child of $v$ that also lies in $C$ (if that exists).
Then, for each cycle $C$ in parallel, we perform the algorithm of Theorem~\ref{thm:spa:diameter_pseudotree} on $\Pi_C$ to compute its diameter $D(\Pi_C)$ (using the redistribution framework).
We obtain the diameter of $G$ as the value
\lng{\[
    \hat{D} := \max(\max_{v \in V}(m(v)), \max_{\text{cycle }C}(D(\Pi_C))).
\]}\shrt{$\hat{D} := \max(\max_{v \in V}(m(v)), \max_{\text{cycle }C}(D(\Pi_C)))$.}
By showing that $\hat{D} = D$, we conclude the following theorem.

\begin{theorem}
    The diameter can be computed in any cactus graph in time $O(\log n)$.
\end{theorem}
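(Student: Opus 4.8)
The plan is to prove the theorem by decomposing the algorithm that computes $\hat{D}$ into a constant number of phases, bounding each phase by $O(\log n)$ rounds, and observing that the phases run sequentially; since their number is constant this yields the claimed $O(\log n)$ total running time, w.h.p. The correctness part (that the computed value equals $D$) follows from the structural fact that every diameter-realizing path either stays inside a single rooted subtree or enters exactly one cycle, so I would concentrate the effort on the time analysis, which is the content emphasized by the statement.

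First I would account for the preprocessing phases. Computing the blocks of $G$ and classifying every incident edge as a bridge or as part of a cycle costs $O(\log n)$ rounds, w.h.p., by Lemma~\ref{lem:spa:biconnected}. Choosing $s$ as the highest-identifier node, building the tree $T$ of Lemma~\ref{lem:spa:cactus_anchor} (replacing each cycle by a binary tree $T_C$ of height $O(\log n)$ using the Introduction algorithm and a broadcast), running the tree-SSSP algorithm of Theorem~\ref{thm:spa:sssp:tree} on $T$, and broadcasting inside each cycle to fix the anchor nodes $a_C$ are all $O(\log n)$-round procedures. Forming the shortest-path tree $S_G$ by removing $\{v_\ell, v_r\}$ from each cycle, rooting it via Lemma~\ref{lem:spa:tree_rooting}, and computing every height $h(v)$ via Lemma~\ref{lem:spa:height} each cost $O(\log n)$ rounds. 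The local computation of $m(v)$ and of the attachment weights $w(\{v, t_v\})$ uses only data held by $v$ and its neighbors, hence $O(1)$ rounds, and the final aggregation $\hat{D} = \max(\max_{v} m(v), \max_{C} D(\Pi_C))$ is a single application of Lemma~\ref{lem:spa:aggregation}, i.e. $O(\log n)$ rounds.

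The crux of the time bound, and the main obstacle, is that several phases invoke a cycle algorithm (Theorem~\ref{thm:spa:sssp:cycle}) or the pseudotree algorithm (Theorem~\ref{thm:spa:diameter_pseudotree}) \emph{simultaneously on all cycles of $G$}. A single articulation point may be shared by $\Omega(n)$ cycles and must therefore simulate one virtual node per incident cycle; naively this would force $\Omega(n)$ units of global communication per round at that node, violating the $\gamma = O(\log n)$ budget. To resolve this I would invoke the redistribution framework of Section~\ref{sec:spa:tree}: since a cactus has arboricity $2$, we compute via \cite[Theorem 3.5]{BE10} an orientation of constant outdegree and reassign the virtual nodes along it so that every real node hosts only $O(1)$ virtual nodes (the total number of cycle memberships is $\sum_{C} |C| = O(n)$, so the load balances). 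The base algorithms of Theorems~\ref{thm:spa:sssp:cycle} and~\ref{thm:spa:diameter_pseudotree} rely on pointer jumping and the Introduction algorithm, each of which gives a virtual node $O(\log n)$ shortcut edges and thus $O(\log n)$ global messages per round; multiplied by the $O(1)$ virtual nodes per real node, the load per real node stays $O(\log n)$ per round. As each base algorithm terminates in $O(\log n)$ rounds and a virtual-node step of a cycle or pseudotree is simulated by its host with only $O(1)$ local-communication overhead (relaying across the reassignment edge), all per-cycle executions finish within $O(\log n)$ rounds in parallel.

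Summing the constantly many phases, each $O(\log n)$, gives the overall $O(\log n)$ round complexity, w.h.p. For correctness I would check that the candidate set covers a diameter-realizing pair $s', t'$: if both endpoints lie in one rooted subtree hanging off a cycle node $v$, the pair is captured by $m(v)$; otherwise the path crosses some cycle $C$, and since each non-anchor node of $C$ is attached an edge of weight equal to the height of its off-cycle subtree, the longest such path becomes the diameter of the pseudotree $\Pi_C$, which is returned by Theorem~\ref{thm:spa:diameter_pseudotree}. Hence $\hat{D} = D$, and combined with the time analysis the theorem follows.
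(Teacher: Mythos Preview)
Your time analysis is thorough and essentially matches what the paper does implicitly in the text preceding the theorem; the redistribution argument for running the per-cycle algorithms in parallel is correct. However, the formal proof in the paper is devoted entirely to the correctness claim $\hat{D}=D$, and here your sketch has a genuine gap.

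Your case split (``both endpoints lie in one rooted subtree hanging off a cycle node $v$'' versus ``the path crosses some cycle $C$'') is not the right dichotomy, and neither case is stated correctly. The quantity $m(v)$ is the longest path in $S_G$ passing through $v$, with its two endpoints in \emph{different} child-subtrees of $v$; moreover $v$ need not be a cycle node. The decisive structural distinction is whether a diameter-realizing path $P$ uses one of the \emph{deleted} edges $\{v_\ell,v_r\}$, not whether it ``crosses some cycle'': a path that stays entirely inside $S_G$ may still traverse many cycle edges, and in that case it is captured by $m(v_j)$ at the highest node $v_j$ of $P$ in the rooted tree $S_G$. Conversely, $P$ may a priori use deleted edges of several cycles; the paper first argues that $P$ can be rerouted so that for no cycle $C$ does it contain both the deleted edge $e_C$ and the anchor $a_C$ (using that such a cycle must be symmetric), and then observes that this forces $P$ to contain at most one deleted edge. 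Only after this reduction does the ``captured by $D(\Pi_C)$'' argument go through. You also omit the direction $\hat{D}\le D$, which requires verifying that every candidate $m(v)$ and $D(\Pi_C)$ equals the length of some actual shortest path in $G$; in particular one must check that the attached edges $\{v,t_v\}$ in $\Pi_C$ encode genuine shortest-path distances.
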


\lng{
\begin{proof}
    We first show that $\hat{D} \le D$, and then prove $D \le \hat{D}$.
    For the first part, note that $m(v) \le D$ for all $v$, since $S_G$ is a shortest path tree, and thus $m(v)$ corresponds to the length of some shortest path.
    Furthermore, the weight of each attached edge $\{v, t_v\}$ of a cycle $C$ corresponds to the length of the path from $v$ to some descendant of $v$ in $S_G$, which must be a shortest path.
    Therefore, for any shortest path in $\Pi_C$, a shortest path of the same length must exist in $G$, which implies that $D(\Pi_C) \le D$.
    We conclude that $\hat{D} \le D$.
    
    For the second part, let $P = (v_1, \ldots, v_k)$ be a longest shortest path in $G$.
    First, note that each cycle in $G$ is only entered and left at most once by $P$ (if it is left at some node, it may only be entered again at the same node, which is impossible since we have positive edge weights).
    We first slightly change $P$ to ensure that it does not simultaneously contain the deleted edge and the anchor node of the same cycle.
    Consider any (maximal) subpath $P_C$ of $P$ that contains edges of a cycle $C$ in $G$, and assume that $P'$ contains the deleted edge $e_C$ of $C$ (i.e., the edge that is incident to the farthest nodes of $C$'s anchor node $a_C$ in the cycle).
    If $P_C$ contains both $e_C$ and $a_C$, then the cycle must be \emph{symmetric}: $P_C$ begins at $e_C$ and ends at $a_C$ (or vice versa), and the other side of the cycle has the same weight as $P_C$.
    Therefore, we can simply replace $P_C$ by the other edges of $C$.
    After replacing each subpath $P_C$ of $P$ for each cycle $C$, either $P$ (1) only contains edges of $S_G$, (2) contains a deleted edge $e_C$.
    In the second case, $P$ cannot contain $a_C$, and therefore every other cycle contained in $P$ must be entered over its anchor node, which implies that no other deleted edge can be contained in $P$.
    
    Assume that case (1) holds.
    In this case, $P$ can be divided into two paths $P_1 = (v_1, \ldots, v_j)$, where $\{v_i, v_{i+1}\}$ is directed from $v_i$ to $v_{i+1}$ for all $1 \le i < j$, and $P_2 = (v_{j}, \ldots, v_k)$, where $\{v_i, v_{i+1}\}$ is directed from $v_{i+1}$ to $v_{i}$ for all $j \le i < k$ (note that $P_1$ and $P_2$ may also be empty).
    Therefore, $D = w(P) = w(P_1) + w(P_2) \le m_{v_j} \le \hat{D} $.
    
    Finally, assume that case (2) holds.
    Let $e_C$ be the single deleted edge of $P$ and $P_C$ be the subpath of $P$ contained in $C$.
    By the arguments above, $P_C$ does not contain $a_C$.
    Let $u$ and $v$ be the two endpoints of $P_C$, where $u$ is visited in $P$ before $v$, and let $P = P_u \circ P_C \circ P_v$, where $P_u$ is the subpath from $P$'s first node to $u$, and $P_v$ is the subpath from $v$ to the last node to $P$.
    Since $P_u$ and $P_v$ do not contain any deleted edge, nor any edge of $C$, it holds that $w(P_u) \le w(\{u, t_u\})$ (recall that the weight of $u$'s virtual edge is the height of $u$ in $S_G$ without the subtree of $u$'s child in $C$), and $w(P_v) \le w(\{v, t_v\})$.
    Let $P'$ be the path in $\Pi_C$ that starts at $\{u, t_u\}$, then follows $P_C$, and ends at $\{v, t_v\}$.
    Since $P_C$ is a shortest path in $C$, $P'$ is a shortest path in $\Pi_C$ and $w(P) \le w(P')$, which implies that $D = w(P) \le w(P') \le D(\Pi_C) \le \hat{D}$.
\end{proof}
}

\section{Sparse Graphs} \label{sec:spa:sparse_graph}

In this final section, we present constant factor approximations for SSSP and the diameter in graphs that contain at most $n + O(n^{1/3})$ edges and that have arboricity at most $O(\log n)$.
Our algorithm for such graphs relies on an MST $M = (V, E')$ of $G$, where $E' \subseteq E$.
$M$ can be computed deterministically in time $O(\log^2 n)$ using \cite{GHSS17}, Observation 4, in a modified way\footnote{The algorithm of \cite{GHSS17} computes a (not necessarily minimum) spanning tree, which would actually already suffice for the results of this paper.
However, if $G$ contains edges with exceptionally large weights, an MST may yield much better results in practice.}.

\begin{lemma} \label{lem:spa:mst}
    The algorithm computes an MST of $G$ deterministically in time $O(\log^2 n)$.
\end{lemma}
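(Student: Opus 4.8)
The plan is to take the spanning-tree construction of \cite{GHSS17} (Observation~4) --- a Bor\r{u}vka-style procedure that in $O(\log^2 n)$ rounds repeatedly merges components along inter-component edges until a single component remains, maintaining a low-depth overlay on each component so that one merge phase costs $O(\log n)$ rounds --- and change only its edge-selection rule so that the spanning tree it returns becomes minimum. Concretely, fix once and for all a total order on $E$, e.g.\ comparing edges by the (lexicographically sorted) pair of endpoint identifiers; this turns the weights into distinct lexicographic keys and is what the algorithm uses for tie-breaking. The algorithm and its analysis then stay essentially those of \cite{GHSS17}.

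In the modified phase I would proceed as follows. Every node first learns the identifier of its current component (broadcast along the component overlay by pointer jumping in $O(\log n)$ rounds). Each node then inspects its incident local edges --- distributed over the network by the redistribution framework of Section~\ref{sec:spa:tree}, which applies since the arboricity is $O(\log n)$, exactly as the load distribution already used in \cite{GHSS17} --- discards those leading back into its own component, and keeps the minimum-weight outgoing edge among the rest, breaking ties by the fixed order. Using the aggregation primitive on the component overlay (as in Lemma~\ref{lem:spa:aggregation}), each component computes the globally minimum outgoing edge with the same tie-breaking, and adds it to $M$. The remaining steps --- forming the new components as the connected groups joined by the selected edges, and rebuilding a low-depth overlay for each new component by pointer jumping --- are unchanged from \cite{GHSS17}, so a phase still costs $O(\log n)$ rounds, and the only extra work is asymptotically free weight comparisons.

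For correctness I would invoke two standard facts. By the cut property, a minimum-weight edge crossing any cut belongs to the MST determined by the (distinct) lexicographic keys; since in each phase the selected edges are precisely minimum outgoing edges of the cuts induced by the current components, every edge ever added to $M$ lies in this MST. By the standard Bor\r{u}vka argument the selected edges of a phase never close a cycle: a cycle among them would give, on the contracted super-nodes, a bijection between cycle edges and the components they were chosen by, and the maximum-weight edge of that cycle would then be a non-minimum outgoing edge of the component that chose it --- a contradiction (length-two "cycles", i.e.\ two components selecting the same edge, are removed by deduplication). Hence $M$ stays a forest, and after $O(\log n)$ phases the number of components has dropped to one, so $M$ is a spanning tree with $n-1$ edges, all in the MST, i.e.\ $M$ is an MST of $G$ (for disconnected $G$ one analogously gets a minimum spanning forest). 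The runtime is $O(\log n)$ phases times $O(\log n)$ rounds each, so $O(\log^2 n)$, and the procedure is deterministic because every choice --- minimum selection, tie-breaking, overlay roots --- is decided by fixed identifiers.

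The main obstacle I expect is not the combinatorics but the bookkeeping: making precise that none of the overlay-maintenance and edge load-distribution machinery of \cite{GHSS17} is disturbed by selecting a \emph{minimum} rather than an \emph{arbitrary} outgoing edge --- i.e.\ that ``minimum outgoing edge of my component'' is computable by exactly the same per-component aggregation that \cite{GHSS17} already performs --- and that the single fixed global order on $E$ is enough to guarantee acyclicity across independently acting components without any additional communication rounds.
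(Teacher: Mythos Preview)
Your proposal is correct and takes essentially the same approach as the paper: modify the spanning-tree construction of \cite{GHSS17} by replacing its edge-selection rule with Bor\r{u}vka-style minimum-outgoing-edge selection (with tie-breaking by a fixed total order on edges), and observe that the only structural change is the appearance of length-two cycles among selected edges, which are resolved locally. The paper's proof is terser and simply cites the Bor\r{u}vka folklore for correctness, while you spell out the cut-property and acyclicity arguments, but the content is the same.
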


\lng{
\begin{proof}
    The \emph{Overlay Construction Algorithm} presented in \cite{GHSS17} constructs a low-diameter overlay in time $O(\log n)$ by alternatingly grouping and merging supernodes until a single supernode remains.
    As a byproduct, Observation 4 remarks that the edges over which merge requests have been sent from one supernode to another form a spanning tree.

    To obtain an MST, we change the way a supernode $u$ chooses a neighboring supernode to merge with.
    More specifically, as many other distributed algorithms for MST computation, our modification directly mimics the classic approach of Borůvka \cite{NMN01}.
    Instead of choosing the adjacent supernode $v$ that has the highest identifier, and sending a merge request if $v$'s identifier is higher than $u$'s identifier, $u$ determines the outgoing edge (i.e., the edge incident to a node of $u$ whose other endpoint is not in $u$) with smallest weight, breaking ties by choosing the edge with smallest identifier (where the identifier of an edge $\{x,y\}$, $\id(x) < \id(y)$, is given by $\id(x) \circ \id(y)$).
    It is well-known that the edges chosen in this way form an MST.

    Compared to the grouping stage described in \cite{GHSS17}, this yields components of supernodes that form pseudotrees with a cycle of length $2$ (see, e.g., \cite{JM95}).
    However, such cycles can easily be resolved locally by the supernodes such that the resulting components form trees, which allows us to perform the merging stage of \cite{GHSS17} without any further modifications.
\end{proof}
}

We call each edge $e \in E \setminus E'$ a \emph{non-tree edge}.
Further, we call a node \emph{shortcut node} if it is adjacent to a non-tree edge, and define $\Sigma \subseteq V$ as the set of shortcut nodes.
Clearly, after computing $M$ every node $v \in \Sigma$ knows that it is a shortcut node, i.e., if one of its incident edges has not been added to $E'$.
In the remainder of this section, we will compute approximate distances by (1) computing the distance from each node to its closest shortcut node in $G$, and (2) determining the distance between any two shortcut nodes in $G$.
For any $s,t \in V$, we finally obtain a good approximation for $d(s,t)$ by considering the path in $M$ as well as a path that contains the closest shortcut nodes of both $s$ and $t$.

Our algorithms rely on a \emph{balanced decomposition tree} $T_M$, which allows us to quickly determine the distance between any two nodes in $G$, and which is presented in Section~\ref{sec:spa:sparse:tree_decomposition}.
In Section~\ref{sec:spa:sparse:nearest_cut_node}, $T_M$ is extended by a set of edges that allow us to solve (1) by performing a distributed multi-source Bellman-Ford algorithm for $O(\log n)$ rounds.
For (2), in Section~\ref{sec:spa:sparse:apsp_cut_nodes} we first compute the distance between any two shortcut nodes in $M$, and then perform matrix multiplications to obtain the pairwise distances between shortcut nodes in $G$.
By exploiting the fact that $|\Sigma| = O(n^{1/3})$, and using techniques of \cite{AGG+19}, we are able to distribute the $\Theta(n)$ operations of each of the $O(\log n)$ multiplications efficiently using the global network.
In Section~\ref{sec:spa:sparse:approx}, we finally show how the information can be used to compute $3$-approximations for SSSP and the diameter.

For simplicity, in the following sections we assume that $M$ has degree $3$.
Justifying this assumption, we remark that $M$ can easily be transformed into such a tree while preserving the distances in $M$.
First, we root the tree at the node with highest identifier using Lemma~\ref{lem:spa:tree_rooting}.
Then, every node $v$ replaces the edges to its children by a binary tree of virtual nodes, where the leaf nodes are the children of $v$, the edge from each leaf $u$ to its parent is assigned the weight $w(\{v,u\})$, and all inner edges have weight $0$.\footnote{Note that the edge weights are no longer strictly positive; however, one can easily verify that the algorithms of this section also work with non-negative edge weights.}
The virtual nodes are distributed evenly among the children of $v$ such that each child is only tasked with the simulation of at most one virtual node.
Note that the virtual edges can be established using the local network.

\subsection{Hierarchical Tree Decomposition} \label{sec:spa:sparse:tree_decomposition}
We next present an algorithm to compute a hierarchical tree decomposition of $M$, resulting in a \emph{balanced decomposition tree} $T_M$.
$T_M$ will enable us to compute distances between nodes in $M$ in time $O(\log n)$, despite the fact that the diameter of $M$ may be very high.

Our algorithm constructs $T_M$ as a binary rooted tree $T_M=(V,E_T)$ of height $O(\log n)$ with root $r \in V$ (which is the node that has highest identifier) by selecting a set of global edges $E_T$.
Each node $v \in V$ knows its parent $p_T(u) \in V$.
To each edge $\{u,v\} \in E_T$ we assign a weight $w(\{u,v\})$ that equals the sum of the weights of all edges on the (unique) path from $u$ to $v$ in $M$.
Further, each node $v \in V$ is assigned a distinct label $l(v) \in \{0,1\}^{O(\log n)}$ such that $l(v)$ is a prefix of $l(u)$ for all children $u$ of $v$ in $T_M$, and $l(r) = \varepsilon$ (the empty word).

From a high level, the algorithm works as follows.
Starting with $M$, within $O(\log n)$ iterations $M$ is divided into smaller and smaller components until each component consists of a single node.
More specifically, in iteration $i$, every remaining component $A$ handles one \emph{recursive call} of the algorithm, where each recursive call is performed independently from the recursive calls executed in other components.
The goal of $A$ is to select a \emph{split node} $x$, which becomes a node at depth $i-1$ in $T_M$, and whose removal from $M$ divides $A$ into components of size at most $|A|/2$.
The split node $x$ then recursively calls the algorithm in each resulting component; the split nodes that are selected in each component become children of $x$ in $T_M$\shrt{.}\lng{ (see Figure~\ref{fig:tree_decomposition}).}

When the algorithm is called at some node $v$, it is associated with a \emph{label} parameter $l \in \{0,1\}^{O(\log n)}$ and a \emph{parent} parameter $p \in V$.
The first recursive call is initiated at node $r$ with parameters $l = \varepsilon$ and $p = \emptyset$.
Assume that a recursive call is issued at $v \in V$, let $A$ be the component of $M$ in which $v$ lies, and let $A_1, A_2$ and $A_3$ be the at most three components of $A$ that result from removing $v$.
Using Lemma~\ref{lem:spa:subtree_size_aggregate}, every node $u$ in $A_1$ can easily compute the number of nodes that lie in each of its adjacent subtrees in $A_1$ (i.e., the size of the resulting components of $A_1$ after removing $u$).
It is easy to see that there must be a \emph{split node} $x_1$ in $A_1$ whose removal divides $A_1$ into components of size at most $|A|/2$ (see, e.g., \cite[Lemma 4.1]{AHK+20}); if there are multiple such nodes, let $x_1$ be the one that has highest identifier.
Correspondingly, there are split nodes $x_2$ in $A_2$ and $x_3$ in $A_3$.
$v$ learns $x_1, x_2$ and $x_3$ using Lemma~\ref{lem:spa:aggregation} and sets these nodes as its children in $T_M$.
By performing the SSSP algorithm of Theorem~\ref{thm:spa:sssp:tree} with source $v$ in $A_1$, $x_1$ learns $d_M(x_1,v)$, which becomes the weights of the edge $\{v, x_1\}$ (correspondingly, the edges $\{v, x_2\}$ and $\{v, x_3\}$ are established). 
To continue the recursion in $A_1$, $x$ calls $x_1$ with label parameter $l \circ 00$ and parent parameter $v$.
Correspondingly, $x_2$ is called with $l \circ 01$, and $x_3$ with $l \circ 10$.

\lng{
\begin{figure}[t]
	\centering
	\begin{subfigure}{180pt}
	\centering
	\includegraphics{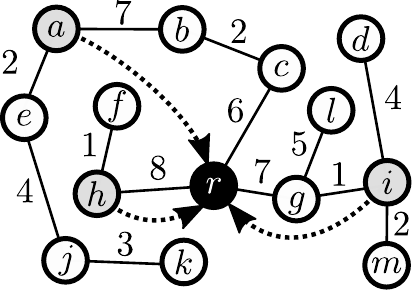}
	\subcaption{\centering}
	\label{fig:tree_decomposition_1}
    \end{subfigure}
    \begin{subfigure}{180pt}
	\centering
	\includegraphics{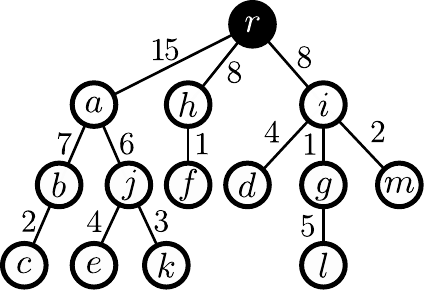}
	\subcaption{\centering}
	\label{fig:tree_decomposition_2}
    \end{subfigure}
    \caption{
    (a) The graph $M$ after the first step of the tree decomposition.
    The black node is the root $r$, and the grey nodes are the first split nodes chosen for each of $r$'s subtrees.
    The algorithm will recursively be called in each connected components of white nodes.
    (b) The resulting balanced decomposition tree $T_M$.
    }
    \label{fig:tree_decomposition}
\end{figure}
}

\begin{theorem} \label{thm:spa:decomposition_tree}
    A balanced decomposition tree $T_M$ for $M$ can be computed in time $O(\log^2 n)$.
\end{theorem}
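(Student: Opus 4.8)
The plan is to verify, by induction on the recursion depth, that the algorithm described above maintains the invariants that define a balanced decomposition tree, and then to bound its running time by showing that one level of the recursion, run simultaneously in all live components, costs only $O(\log n)$ rounds; with $O(\log n)$ levels this gives $O(\log^2 n)$.

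For the structural properties I would first observe that every component $A$ met by the recursion is a connected subtree of $M$, and since $M$ has degree at most $3$, deleting a vertex of $A$ leaves at most three components. By the standard centroid argument (cf.\ \cite[Lemma~4.1]{AHK+20}), each resulting component $A_j$ contains a node whose removal splits $A_j$ into parts of size at most $|A|/2$, so the size of the component handled at depth $i$ shrinks geometrically in $i$; hence after $O(\log n)$ levels every component is a singleton and $T_M$ has height $O(\log n)$. The labels are correct by construction: $l(r)=\varepsilon$, a recursion at a node labelled $l$ hands $l\circ 00$, $l\circ 01$, $l\circ 10$ to its (at most three) children, so $l(v)$ is a prefix of each child's label, siblings differ in their last two bits, and since the depth is $O(\log n)$ every label has length $O(\log n)$; distinctness then follows by induction over $T_M$. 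For the edge weights, when a recursion at $v$ makes the split node $x_j$ of $A_j$ a child of $v$, the unique $M$-path from $v$ to $x_j$ stays inside $A_j\cup\{v\}$, because $v$ is the only vertex joining $A_j$ to the rest of the connected subtree $A$; therefore the value returned by running the SSSP algorithm of Theorem~\ref{thm:spa:sssp:tree} on $A_j\cup\{v\}$ with source $v$ equals $d_M(v,x_j)$, which is exactly the weight we want on $\{v,x_j\}$.

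For the running time, the key point is that the live vertices at a given level form a subforest of $M$, so each of the three primitives used at that level handles all components in a single invocation of $O(\log n)$ rounds: Lemma~\ref{lem:spa:subtree_size_aggregate} lets every vertex of each $A_j$ learn the sizes of the components its deletion would create (hence decide whether it is a valid split node), Lemma~\ref{lem:spa:aggregation} (with SUM and then MAX) makes $|A|$ and the highest-identifier split node of each $A_j$ known throughout the respective trees — so that $v$ learns $x_1,x_2,x_3$ over its incident local edges — and Theorem~\ref{thm:spa:sssp:tree} computes the distances $d_M(v,x_j)$. Since every vertex has degree at most $3$ in $M$, it participates in only a constant number of these SSSP/aggregation instances at a level, so the $O(1)$ and $O(\log n)$ communication budgets are respected; handing the new label and parent parameters down is a single local round.

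The hard part will be this last, parallel-execution argument: one has to pin down that the forest on which the primitives are invoked at level $i$ is exactly $V$ minus the vertices at which recursions were issued at levels $<i$, that the per-component quantities (notably $|A|$ and the chosen split nodes) reach all vertices of a component in $O(\log n)$ rounds even though the component may have large $M$-diameter — which is precisely what the overlay-based Lemmas~\ref{lem:spa:aggregation} and~\ref{lem:spa:subtree_size_aggregate} buy us — and that stacking a constant number of such instances per vertex keeps the global communication at $O(\log n)$ messages per round. Everything else is a routine induction over the algorithm as stated.
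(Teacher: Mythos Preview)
Your proposal is correct and follows essentially the same approach as the paper's proof: establish that component sizes halve (via the centroid argument) so the recursion has $O(\log n)$ levels, and that each level costs $O(\log n)$ rounds by invoking Lemma~\ref{lem:spa:subtree_size_aggregate}, Lemma~\ref{lem:spa:aggregation}, and Theorem~\ref{thm:spa:sssp:tree} on the subforest of live vertices. The paper's own proof is considerably terser and leaves the parallel-execution argument and the structural invariants (label prefixes, edge weights equal to $M$-distances) implicit, so your more explicit treatment of why the live components at a level form a disjoint subforest and why the communication budgets are respected only adds rigor.
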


\lng{
\begin{proof}
    It is easy to see that our algorithm constructs a correct balanced decomposition tree.



    It remains to analyse the runtime of our algorithm.
    In each recursive call we need $O(\log n)$ rounds to compute the sizes of all subtrees for any node (Lemma~\ref{lem:spa:subtree_size_aggregate}) and $O(\log n)$ rounds to find a split node (Lemma~\ref{lem:spa:aggregation}).
    Computing the weight of a global edge chosen to be in $E_T$ takes $O(\log n)$ rounds (Theorem~\ref{thm:spa:sssp:tree}).
    Since the component's sizes at least halve in every iteration, the algorithm terminates after $O(\log n)$ iterations.
    This proves the theorem.
\end{proof}
}

It is easy to see that one can route a message from any node $s$ to any node $t$ in $O(\log n)$ rounds by following the unique path in the tree from $s$ to $t$, using the node labels to find the next node on the path.
However, the sum of the edge's weights along that path may be higher than the actual distance between $s$ and $t$ in $M$.

\subsection{Finding Nearest Shortcut Nodes} \label{sec:spa:sparse:nearest_cut_node}
To efficiently compute the nearest shortcut node for each node $u \in V$, we extend $T_M$ to a \emph{distance graph} $D_T = (V, E_D)$, $E_D \supseteq E_T$, by establishing additional edges between the nodes of $T_M$.
Specifically, unlike $T_M$, the distance between any two nodes in $D_T$ will be equal to their distance in $M$, which allows us to employ a distributed Bellman-Ford approach. 

\lng{
\begin{figure}[t]
    \centering
    \includegraphics[scale=0.95]{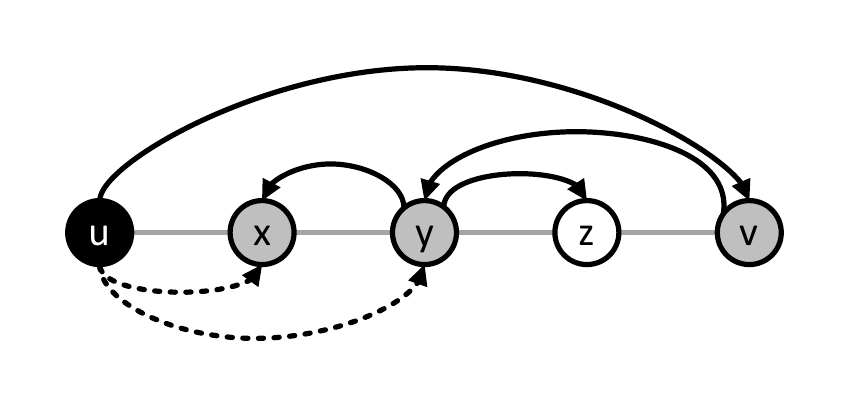}
\caption{Example for the construction of additional edges (indicated by the dashed lines) going into the node $u$. Grey edges are edges of $M$, straight black edges are edges in $T_M$. $u$ is $v$'s parent in $T_M$, with $x,y$ and $z$ being in the subtree of $v$ in $T_M$. We always choose the node in the subtree that goes in the direction back to $u$ in $M$, so we add edges $\{y,u\}$ and $\{x,u\}$. We do not add an edge $\{z,u\}$ because its subtree follows the opposite direction of $u$ from the perspective of $y$.
The descendants of $u$ in $D_T$ are marked grey.}
\label{fig:additional_edges_example}
\end{figure}
}
We describe the algorithm to construct $D_T$ from the perspective of a fixed node $u \in V$\lng{
(for an illustration, see Figure~\ref{fig:additional_edges_example})}.
For each edge $\{u,v\} \in E_T$ such that $u = p_T(v)$ for which there does \emph{not} exist a local edge $\{u,v\} \in E'$, we know that the edge $\{u,v\}$ "skips" the nodes on the unique path between $u$ and $v$ in $M$. 
Consequently, these nodes must lie in a subtree of $v$ in $T_M$.
Therefore, to compute the exact distance from $u$ to a skipped node $w$, we cannot just simply add up the edges in $E_T$ on the path from $u$ to $w$, as this sum must be larger than the distance $d(u,w)$.

To circumvent this problem, $u$'s goal is to establish additional edges to some of these skipped nodes.
Let $x \in V$ be the neighbor of $u$ in $M$ that lies on the unique path from $u$ to $v$ in $M$.
To initiate the construction of edges in each of its subtrees, $u$ needs to send messages to \emph{each} child $v$ in $T_M$ that skipped some nodes (recall that $u$ is able to do so because it has degree $3$ in $T_M$).
Such a message to $v$ contains $l(x)$, $l(u)$, $\id(u)$ and $w(\{u,v\})$.
Upon receiving the call from $u$, $v$ contacts its child node $y$ in $T_M$ whose label is a prefix of $l(x)$, forwarding $u$'s identifier, $l(x)$ and the (updated) weight $w(\{y,u\}) = w(\{u,v\})-w(\{v,y\})$.
$y$ then adds the edge $\{y,u\}$ with weight $w(\{y,u\})$ to the set $E_D$ by informing $u$ about it.
Then, $y$ continues the recursion at its child in $T_M$ that lies in $x$'s direction, until the process reaches $x$ itself.
Since the height of $T_M$ is $O(\log n)$, $u$ learns at most $O(\log n)$ additional edges and thus its degree in $D_T$ is $O(\log n)$.

Note that since the process from $u$ propagates down the tree level by level, we can perform the algorithm at all nodes in parallel, whereby the separate construction processes follow each other in a pipelined fashion without causing too much communication.
Together with Theorem~\ref{thm:spa:decomposition_tree}, we obtain the following lemma.

\begin{lemma} \label{lem:spa:distance_graph}
    The distance graph $D_T=(V,E_D)$ for $M$ can be computed in time $O(\log^2 n)$.
\end{lemma}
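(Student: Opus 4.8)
The plan is to verify two things: first, that the construction described above (establishing the additional edges from each node $u$ to certain skipped descendants in $T_M$) yields a graph $D_T$ in which the degree of every node is $O(\log n)$ and in which $d_{D_T}(s,t) = d_M(s,t)$ for all $s,t$; and second, that the whole construction can be carried out in $O(\log^2 n)$ rounds. Since the lemma statement as phrased only asks for the runtime, I would structure the proof around the runtime bound but include the degree bound (which is needed for the pipelining argument) and sketch the distance-preservation property, as this is what makes $D_T$ useful in the later sections.

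For the runtime, I would first invoke Theorem~\ref{thm:spa:decomposition_tree} to construct $T_M$ in $O(\log^2 n)$ rounds; after this every node knows its parent, the labels $l(\cdot)$, and the weights of its $T_M$-edges. Then I would analyze the construction of $E_D$. The key observation, already stated in the text, is that the construction process launched from a node $u$ propagates strictly downward in $T_M$, one level per round: when $u$ sends a message to a child $v$, that message is forwarded to the unique grandchild $y$ whose label is a prefix of $l(x)$, and so on, descending toward $x$. Since $T_M$ has height $O(\log n)$, each such process finishes in $O(\log n)$ rounds, and each process touches at most one node per level, so $u$ establishes at most $O(\log n)$ new edges. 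To run all $n$ processes in parallel I would argue that the messages can be pipelined: a node at depth $j$ in $T_M$ handles messages originating from ancestors, and because each ancestor contributes at most $O(1)$ messages passing through at each level and there are $O(\log n)$ levels above it, the total number of messages any node must relay in any given round is $O(\log n)$, matching the global communication budget (the global edges of $D_T$ are used here, and each node's $D_T$-degree is $O(\log n)$, so this is consistent). Hence all processes complete within $O(\log n)$ rounds once $T_M$ is available, and the total is $O(\log^2 n) + O(\log n) = O(\log^2 n)$.

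For the distance-preservation claim (which I would include as the correctness part), I would argue as follows: take any $s,t \in V$ and consider the unique path from $s$ to $t$ in $M$. It suffices to show that for every $T_M$-edge $\{u,v\}$ with $u = p_T(v)$ that does not correspond to an actual edge of $M$, the distance $d_M(u,v)$ is realized within $D_T$ by a path of the same total weight — concretely, by the chain of additional edges $\{u, y_1\}, \{y_1, y_2\}, \ldots$ the construction lays down along the $M$-path from $u$ down to $x$ (the $M$-neighbor of $u$ toward $v$), followed by the $T_M$-edge $\{x, \text{its child}\}$, etc. One then shows by induction on the height of $T_M$ that concatenating these replacements for every "skipping" edge on the $T_M$-path between $s$ and $t$ produces a $D_T$-walk whose weight equals $d_M(s,t)$, and conversely that no $D_T$-edge has weight smaller than the $M$-distance between its endpoints (each $D_T$-edge weight is, by construction, exactly the sum of $M$-edge weights on the corresponding $M$-path), so no shorter walk exists.

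The main obstacle I anticipate is the pipelining/congestion argument: one must argue carefully that when all $n$ downward-propagating processes run simultaneously, no node is asked to send or receive more than $O(\log n)$ messages over global edges in any single round. The cleanest way to handle this is probably to observe that the process from $u$ is "level-synchronous" — in round $j$ of the construction (after $T_M$ is built) exactly the messages that have descended $j$ levels are in transit — so a fixed node $w$ at depth $d$ in $T_M$ is, in round $j$, only handling messages from ancestors at depth $d - j$, of which there is exactly one, and each such ancestor sends $w$ at most a constant number of messages (one per "skipping" child edge, and $T_M$-degree is $3$). Thus the per-round, per-node global load stays $O(1)$ for this phase, well within budget; the $O(\log n)$ figure appears only when summing over the lifetime of the construction, giving each node its $O(\log n)$ incident $D_T$-edges. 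With this in hand the runtime bound follows immediately.
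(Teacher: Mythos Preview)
Your proposal is correct and matches the paper's approach: the paper's entire ``proof'' is the sentence preceding the lemma, namely that the downward-propagating processes can be pipelined level by level, plus the reference to Theorem~\ref{thm:spa:decomposition_tree} for the $O(\log^2 n)$ cost of building $T_M$; your congestion analysis just spells this out. One remark: the distance-preservation content you include is, in the paper, factored out into the two lemmas immediately following this one (in particular Lemma~\ref{lem:spa:distance_graph_correctness}), so for this lemma only the runtime and degree bounds are needed.
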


From the way we construct the node's additional edges in $E_D$, and the fact that the edges in $E_T$ preserve distances in $M$, we conclude the following lemma.

\begin{lemma}
    For any edge $\{u,v\} \in E_D$ it holds $w(\{u,v\}) = d_M(u,v)$, where $d_M(u,v)$ denotes the distance between $u$ and $v$ in $M$.
\end{lemma}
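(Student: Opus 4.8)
The plan is to prove the claim $w(\{u,v\}) = d_M(u,v)$ for every $\{u,v\} \in E_D$ by splitting into the two types of edges in $E_D$: the tree edges $E_T \subseteq E_D$ inherited from the balanced decomposition tree $T_M$, and the additional edges added during the construction of Lemma~\ref{lem:spa:distance_graph}. For the first type, recall from Section~\ref{sec:spa:sparse:tree_decomposition} that each edge $\{v, x_i\} \in E_T$ is assigned the weight $d_M(x_i, v)$ directly: when a recursive call at $v$ selects a split node $x_i$ in a component $A_i$, the algorithm runs the SSSP algorithm of Theorem~\ref{thm:spa:sssp:tree} with source $v$ inside $A_i$, and $x_i$ learns $d_M(x_i, v)$, which becomes $w(\{v, x_i\})$. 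Since $A_i$ is a connected subtree of $M$ containing both $v$ and $x_i$, and $M$ is a tree (so the unique path between two nodes of a subtree stays inside that subtree), this SSSP value is exactly the distance in $M$. Hence $w(e) = d_M(u,v)$ for all $e = \{u,v\} \in E_T$.

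For the additional edges, I would argue by tracing the recursive construction. Fix an edge $\{y, u\} \in E_D \setminus E_T$ established when $u$ initiated a construction process along a child edge $\{u, v\} \in E_T$ with $u = p_T(v)$ and no corresponding edge in $E'$. Let $x$ be the neighbor of $u$ in $M$ on the unique $u$–$v$ path in $M$. The key structural fact is that the nodes "skipped" by $\{u,v\}$ — i.e., the internal nodes of the $u$–$v$ path in $M$ — all lie in the subtree of $v$ in $T_M$, and the process walks from $v$ down toward $x$, at each step subtracting the weight of the $E_T$-edge just traversed. I would prove by induction on the recursion depth that when the process reaches a node $z$ on this downward walk, the value it carries equals $w(\{u,v\}) - (\text{sum of } E_T\text{-edge weights from } v \text{ down to } z)$, and that by the previous paragraph each such subtracted weight is a genuine $M$-distance along a subpath. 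Combining, the weight assigned to $\{z, u\}$ equals $w(\{u,v\}) - d_M(v,z)$; and since $\{u,v\}$ skips exactly the $u$–$v$ path in $M$ and $z$ lies on the portion of that path between $v$ and $x$ (the $x$-side of $v$'s subtree in $T_M$), telescoping gives $w(\{u,v\}) - d_M(v,z) = d_M(u,z)$, because $d_M(u,v) = d_M(u,z) + d_M(z,v)$ as $z$ lies on the (unique, hence geodesic) $u$–$v$ path in $M$. So each additional edge also carries the exact $M$-distance.

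The main obstacle I anticipate is the bookkeeping that the set of nodes reached by the downward walk from $v$ toward $x$ is precisely a set of nodes lying on the $M$-path from $u$ to $v$, and that $v$'s subtree in $T_M$ indeed contains all skipped nodes. This relies on an invariant about how split nodes partition $M$: when $v$ was chosen as a split node (or when $\{u,v\}$ was created), the component handed to $v$'s recursive call is exactly the connected piece of $M$ hanging off $v$ away from $u$, so every node of $M$ "between" $u$ and $v$ not equal to $u$ lies in that component and hence in $v$'s subtree in $T_M$. Making this precise — and checking that the child $y$ "whose label is a prefix of $l(x)$" is well-defined at each step and that following labels toward $x$ really traces the $M$-path — is the technical heart; once that geometry is pinned down, the weight identity is a routine telescoping of the per-edge $M$-distances established in the first paragraph. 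I would state this as a short lemma ("the construction process for $\{u,v\}$ visits exactly the nodes of the $M$-path from $v$ to $x$, in order") and then conclude $w(\{u,v\}) = d_M(u,v)$ for every $\{u,v\} \in E_D$.
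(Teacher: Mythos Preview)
The paper does not give a proof beyond the single sentence asserting the lemma follows ``from the way we construct the node's additional edges in $E_D$, and the fact that the edges in $E_T$ preserve distances in $M$.'' Your proposal is a faithful unpacking of that sentence: the $E_T$ case is indeed immediate from the SSSP call in the decomposition, and for the additional edges you correctly reduce the claim to a telescoping identity along the $T_M$-path from $v$ down toward $x$.

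The genuine gap is precisely the sub-lemma you isolate as the ``main obstacle'': that every node $z$ visited by the downward walk from $v$ toward $x$ lies on the $M$-path from $u$ to $v$. This is what makes $d_M(u,v)-d_M(v,z)=d_M(u,z)$ hold, and it is \emph{not} a matter of bookkeeping---as written, it can fail. The child $y$ of $v$ in $T_M$ in the direction of $x$ is the split node of the component $B$ of $S_v\setminus\{v\}$ containing $x$, and the split-node criterion (components of size at most $|A|/2$, ties broken by identifier) does not force $y$ onto the $x$--$v$ path when $B$ has side branches. Concretely: take $M$ to be the tree with path $u\text{--}x\text{--}a\text{--}v\text{--}d_1\text{--}\cdots\text{--}d_5$ and an extra branch $a\text{--}c_1\text{--}c_2$, unit weights, and identifiers chosen so that $u$ is the root and $c_1$ has the highest identifier in $\{x,a,c_1,c_2\}$. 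Then $v$ is the unique highest-identifier valid split node of $M\setminus\{u\}$, and its child toward $x$ is $y=c_1$, which is \emph{not} on the $u$--$v$ path. The construction assigns $w(\{c_1,u\})=d_M(u,v)-d_M(v,c_1)=3-2=1$, whereas $d_M(c_1,u)=3$. So your telescoping identity breaks at the first step, and the ``short lemma'' you intend to invoke is false as stated. You have correctly located where the difficulty lives; the missing idea is not a cleaner induction but an actual structural constraint (or a modified weight-update rule) guaranteeing that the descent stays on the $M$-geodesic, and the paper's description of the construction does not supply one.
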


The next lemma is crucial for showing the correctness of the algorithms that follow.

\begin{lemma} \label{lem:spa:distance_graph_correctness}
For every $u, v \in V$ we have that (1) every path from $u$ to $v$ in $D_T$ has length at least $d_{M}(u,v)$, and (2) there exists a path $P$ with $w(P) = d_M(u,v)$ and $|P| = O(\log n)$ that only contains nodes of the unique path from $u$ to $v$ in $T_M$.
\end{lemma}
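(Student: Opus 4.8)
\medskip

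The plan is to analyze the structure of $D_T$ via its defining construction and the hierarchical structure of $T_M$. First I would set up notation: let $u,v \in V$ and let $Q$ denote the unique path from $u$ to $v$ in the balanced decomposition tree $T_M$. Since $T_M$ has height $O(\log n)$, $Q$ has length $O(\log n)$, and it consists of an ascending segment from $u$ to the least common ancestor $z$ of $u$ and $v$, followed by a descending segment from $z$ to $v$. I would then recall two facts already established in the excerpt: every edge $\{a,b\} \in E_D$ satisfies $w(\{a,b\}) = d_M(a,b)$ (the previous lemma), and $E_D \supseteq E_T$, where the $E_T$-edges also preserve $M$-distances.

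For part (1), the claim is essentially a triangle-inequality argument. Any path $P$ in $D_T$ from $u$ to $v$ is a sequence of edges $\{u,a_1\},\{a_1,a_2\},\ldots,\{a_k,v\}$, each of which has weight exactly equal to the corresponding $M$-distance. By the triangle inequality for $d_M$ (which holds since $M$ is a connected weighted graph, so $d_M$ is a genuine metric), $w(P) = \sum_i d_M(a_i,a_{i+1}) \ge d_M(u,v)$. This part is routine and I would dispatch it in one or two lines.

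For part (2), I need to exhibit a path $P$ inside $Q$ (i.e., using only nodes of $Q$) with $w(P) = d_M(u,v)$. The key observation is how the additional edges of $E_D$ are constructed: when a parent $u'$ in $T_M$ has a child $v'$ such that the $T_M$-edge $\{u',v'\}$ "skips" nodes of $M$, the construction adds edges from $u'$ to every node on the $M$-path between $u'$ and $v'$ that lies on the branch descending from $v'$ toward $x$ (the $M$-neighbor of $u'$ on that path), with weights equal to the true $M$-distances. I would argue that as a consequence, for any ancestor-descendant pair $a,b$ in $T_M$ with $a = p_T(b) = \ldots$ an ancestor of $b$, there is a path from $a$ to $b$ using only nodes on the $T_M$-path between them whose total weight equals $d_M(a,b)$; this follows by induction on the depth difference, peeling off one level at a time and using either the $E_T$-edge (if it corresponds to a real $M$-edge) or the appropriate additional $E_D$-edge (if it skips). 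Applying this to the ascending segment $u \leadsto z$ and the descending segment $z \leadsto v$ of $Q$ yields a path $P$ inside $Q$ with $w(P) = d_M(u,z) + d_M(z,v)$. The final step is to show $d_M(u,z) + d_M(z,v) = d_M(u,v)$, i.e., that the least common ancestor $z$ in $T_M$ lies on the $M$-path from $u$ to $v$. This is the heart of the argument and follows from the split-node construction of $T_M$: each split node $z$ separates $M$ into components, and $u,v$ lie in different such components (otherwise their LCA would be deeper), so any $M$-path between them must pass through $z$.

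\medskip

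The main obstacle I expect is part (2), specifically the claim that the LCA $z$ of $u$ and $v$ in $T_M$ lies on the shortest (indeed the unique) $u$–$v$ path in $M$, together with the inductive claim that ancestor-descendant pairs in $T_M$ are connected within $D_T$ by a distance-preserving path confined to the connecting $T_M$-branch. Both rely on carefully tracking which $M$-nodes get "skipped" by which $T_M$-edges and which additional edges the construction in Section~\ref{sec:spa:sparse:nearest_cut_node} actually creates — in particular the subtlety (illustrated in Figure~\ref{fig:additional_edges_example}) that an additional edge $\{w,u'\}$ is created only when $w$'s subtree in $T_M$ continues in the direction of $u'$ in $M$. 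I would need to verify that this directional rule is exactly what guarantees that the $M$-path from a $T_M$-parent down to any skipped node is realizable by a single $E_D$-edge, and then compose these across levels. Everything else (the metric triangle inequality, the $O(\log n)$ height bound giving $|P| = O(\log n)$) is bookkeeping.
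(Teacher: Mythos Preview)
Your approach is essentially the paper's: part~(1) via the triangle inequality for $d_M$ (every $E_D$-edge has weight equal to an $M$-distance), and part~(2) by splitting the $T_M$-path at the lowest common ancestor $z$ and handling each ancestor--descendant half with the $E_D$-shortcuts. You make explicit something the paper only uses implicitly, namely that $z$ lies on the $M$-path from $u$ to $v$ because it is a split node separating the components containing $u$ and $v$; this is indeed why $d_M(u,z)+d_M(z,v)=d_M(u,v)$.

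One imprecision in your inductive step is worth flagging. The criterion for taking the $E_T$-edge from an ancestor $a$ to its $T_M$-child $c$ is not whether $\{a,c\}$ happens to be an $M$-edge, but whether $c$ lies on the $M$-path from $a$ to the target $b$. These differ: $\{a,c\}$ may skip nodes in $M$ and yet $c$ can still lie on the $a$--$b$ path in $M$ (namely when $b$ sits in the component of $c$'s removal that does \emph{not} contain $a$'s $M$-neighbour $x$); in that case one must take the $E_T$-edge, because the $E_D$-shortcuts from $a$ all land in the wrong subtree of $c$. Conversely, when $c$ is not on the $M$-path, you cannot just ``peel one level'': you must jump via the $E_D$-edge from $a$ to the deepest node on the shortcut chain toward $x$ that still has $b$ in its $T_M$-subtree. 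With this corrected case distinction your induction goes through, and it is exactly the paper's walk-and-replace procedure recast top-down.
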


\lng{
\begin{proof}
    For (1), assume to the contrary that there is a path $P$ from $u$ to $v$ with length less than $d_M(u,v)$.
    As no path in $M$ from $u$ to $v$ can be shorter than $d_M(u,v)$, $P$ contains at least one edge from $E_D$.
    However, by the way we construct the weights of the edges in the distance graph $D_T$, it holds for any edge $\{x,y\} \in E_D$ that $w(\{x,y\}) = d_M(x,y)$.
    As $P$ can be any arbitrary path from $u$ to $v$, we get $w(P) \geq d_M(u,v)$, which is a contradiction.

    For (2), let $P_T = (u = x_0,x_1,x_2,\ldots,x_m = v)$ be the path from $u$ to $v$ consisting of edges $E_T$.
    By the construction of $T$, $|P_T| = O(\log n)$.
    In case $w(P_T) = d_M(u,v)$ we are done, so let us assume that $w(P_T) > d_M(u,v)$.
    We show that we can replace subpaths of $P_T$ by single edges out of $E_D \setminus E_T$ until we arrive at a path that has the desired properties.
    
    Let $w$ be the node in $P_T$ that has smallest depth in $T_M$, i.e., the lowest common ancestor of $u$ and $v$ in $T_M$.
    We follow the \emph{right} subpath $P_r = (w = x_i, \ldots, x_m = v)$ of $P_T$ from $w$ to $v$ (the \emph{left} subpath $P_\ell$ from $u$ to $w$ is analogous).
    Starting at $w$, we sum the weights of the edges of $T_M$ on $P_r$ until we reach a node $x_j$ such that the sum is higher than $d_M(w, x_j)$.
    In this case, the edge $\{x_{j-2}, x_{j-1}\}$ must have skipped the node $x_j$, i.e., $x_j$ lies on the unique path from $x_{j-2}$ to $x_{j-1}$ in $M$.
    We now follow $P_r$ as long as we only move in the direction of $x_{j-2}$ in $M$, i.e., we move to the next node if that node is closer to $x_{j-2}$ in $M$ than the previous one, until we stop at a node $x_k$.
    By the definition of our algorithm, there must be an edge $\{x_{j-2}, x_k\} \in E_D$ with $w(\{x_{j-2}, x_k\}) = d_M(\{x_{j-2}, x_k\})$.
    We replace the subpath of $P_r$ from $x_{j-2}$ to $x_k$ by this edge, after which the length of the subpath of $P_r$ from $w$ to $x_k$ equals $d_M(x_i, x_k)$.
    We continue the process starting at $x_k$ until we reach $x_m$, and obtain that $w(P_r) = d_M(x_i, x_k)$.
    After we have performed the same process at $P_\ell$ (in the other direction), we have that $w(P_T) = d_M(u,v)$.
    Finally, note that $|P_T| = O(\log n)$, so $P_T$ has all the desired properties of the path $P$ from the statement of the lemma.
\end{proof}
}

For any node $v \in V$, we define the \emph{nearest shortcut node} of $v$ as $\sigma(v) = \argmin_{u \in \Sigma}d(v,u)$.
To let each node $v$ determine $\sigma(v)$ and $d(v, \sigma(v))$, we perform a distributed version of the Bellman-Ford algorithm.
From an abstract level, the algorithm works as follows.
In the first round, every shortcut node sends a message associated with its own identifier and distance value $0$ to itself.
In every subsequent round, every node $v \in V$ chooses the message with smallest distance value $d$ received so far (breaking ties by choosing the one associated with the node with highest identifier), and sends a message containing $d + w(\{v,u\})$ to each neighbor $u$ in $D_T$.
After $O(\log n)$ rounds, every node $v$ knows the distance $d_M(v,u)$ to its closest shortcut node $u$ in $M$.
Since for any closest shortcut node $w$ in $G$ there must be a shortest path from $v$ to $w$ that only contains edges of $M$, this implies that $u$ must also be closest to $v$ in $G$, i.e., $u = \sigma(v)$, and $d_M(v,u) = d(v,\sigma(v))$.

Note that each node has only created additional edges to its descendants in $T_M$ during the construction of $D_T$, therefore the degree of $D_T$ is $O(\log n)$ and we can easily perform the algorithm described above using the global network.

\begin{lemma} \label{lem:spa:nearest_shortcut}
    After $O(\log n)$ rounds, each node $v \in V$ knows $\id(u)$ of its nearest shortcut node $\sigma(v)$ in $G$ and its distance $d(v,\sigma(v))$ to it.
\end{lemma}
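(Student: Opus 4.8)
First I would set up the proof around the distributed multi-source Bellman--Ford on $D_T$ described above and argue it in three steps: (i) reduce the problem from $G$ to the tree $M$, (ii) analyse the Bellman--Ford iteration on $D_T$, and (iii) check that each round is feasible in the hybrid model. Throughout I would use that, by the preceding lemma, $w(\{u,v\}) = d_M(u,v)$ for every edge $\{u,v\} \in E_D$, together with the two halves of Lemma~\ref{lem:spa:distance_graph_correctness}.

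\emph{Reduction to $M$.} I would first show that for every $v$ it suffices to compute $\min_{u\in\Sigma} d_M(v,u)$ together with the identifier of a shortcut node attaining it. If $v \in \Sigma$ this is trivial, so assume $v \notin \Sigma$. Let $P$ be a shortest $v$--$\sigma(v)$ path in $G$ and $w$ the first node of $\Sigma$ on $P$ (it exists since $\sigma(v)$ is an endpoint). Then $d(v,w) \ge d(v,\sigma(v))$, while the prefix $P'$ of $P$ up to $w$ gives $d(v,w) \le w(P) = d(v,\sigma(v))$, so $P'$ is a shortest $v$--$w$ path and $w$ is a nearest shortcut node of $v$ in $G$. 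Since $w$ is the first shortcut node on $P$, the only node of $P'$ lying in $\Sigma$ is $w$; as every non-tree edge has both endpoints in $\Sigma$, $P'$ uses only edges of $E'$, hence $d_M(v,w) \le w(P') = d(v,\sigma(v))$. Combined with $d_M(v,u) \ge d(v,u) \ge d(v,\sigma(v))$ for all $u \in \Sigma$, this yields $d_M(v,w) = d(v,\sigma(v)) = \min_{u\in\Sigma} d_M(v,u)$, as desired.

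\emph{Bellman--Ford on $D_T$.} I would then run the iteration as described, each node keeping the lexicographically smallest pair (distance, $-\id$ of source) seen so far and relaying distance plus edge weight along every incident edge of $E_D$. A routine induction on the round number shows that after $t$ rounds a node $v$ holds the minimum weight of a walk in $D_T$ from some $u \in \Sigma$ to $v$ with at most $t-1$ edges (ties broken as stated). By Lemma~\ref{lem:spa:distance_graph_correctness}(1) every such walk weighs at least $\min_{u\in\Sigma}d_M(v,u)$, so the held value never underestimates it; by Lemma~\ref{lem:spa:distance_graph_correctness}(2), applied to $v$ and the node $w$ from the reduction step, there is a weight-$d_M(v,w)$ path in $D_T$ from $w$ to $v$ with only $O(\log n)$ edges. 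Hence after $O(\log n)$ rounds $v$'s value equals $\min_{u\in\Sigma}d_M(v,u) = d(v,\sigma(v))$ and the source identifier it carries is that of a nearest shortcut node of $v$, which we take to be $\sigma(v)$.

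\emph{Feasibility and obstacle.} Since in building $D_T$ a node only added edges towards its $O(\log n)$ descendants in $T_M$, the degree of $D_T$ is $O(\log n)$, so every Bellman--Ford round has each node send and receive $O(\log n)$ messages over global edges, each carrying a distance bounded by $nW = \mathrm{poly}(n)$ and an identifier, i.e.\ $O(\log n)$ bits; this fits the model, and the process runs for $O(\log n)$ rounds. The heavy lifting has already been done in constructing $D_T$ and proving Lemma~\ref{lem:spa:distance_graph_correctness}, so the main remaining obstacle is modest: the reduction-to-$M$ argument above, plus the bookkeeping needed to confirm that the lexicographic tie-break is preserved by the additive relaxation (immediate for non-negative weights) and that the short realizing path of Lemma~\ref{lem:spa:distance_graph_correctness}(2) delivers a dominating identifier to $v$ within the same $O(\log n)$ rounds.
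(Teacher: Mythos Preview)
Your proposal is correct and follows essentially the same approach as the paper: reduce nearest-in-$G$ to nearest-in-$M$ via the observation that the prefix of a shortest path to the first shortcut node uses only tree edges, then invoke both halves of Lemma~\ref{lem:spa:distance_graph_correctness} to show the $O(\log n)$-round Bellman--Ford on $D_T$ converges to $\min_{u\in\Sigma} d_M(v,u)$. Your ordering (reduction first, then the Bellman--Ford analysis) and your invariant-based treatment of the iteration are arguably cleaner than the paper's message-tracing argument, but the substance is identical.
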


\lng{
\begin{proof}
    We first show that $v$ learns $\id(u)$ and $d_M(u,v)$ of its closest shortcut node $u$ in $M$ (if there are multiple, let $u$ be the one with highest identifier).
    Let $P_M = (u=x_0,x_1,\ldots,x_k=v)$ be the shortest path from $u$ to $v$ in $M$.
    Due to Lemma~\ref{lem:spa:distance_graph_correctness} (1), we know that $v$ will never receive a message with a smaller distance value than $d_M(u,v)$.
    Also, due to Lemma~\ref{lem:spa:distance_graph_correctness} (2), there is a path $P$ of length $O(\log n)$ from $u$ to $v$ with $w(P) = d_M(u,v)$.
    We claim that a message from $u$ traverses the whole path $P$ until it arrives at $v$ after $O(\log n)$ rounds.
    Assume to the contrary that this is not the case.
    Then there must exist a node $x_i$ on the path $P$ that does not send a message with $\id(u)$ and distance value $d_M(u,x_{i+1})$ to $x_{i+1}$.
    This can only happen if $x_i$ knows a shortcut node $z$ with $d_M(z,x_i) < d_M(u, x_i)$ such that $d_M(z,x_i) < d_M(v,x_i)$.
    This implies that $z$ is a shortcut node with $d_M(u,z) < d_M(u,v)$, contradicting the fact that $v$ is $u$'s nearest shortcut node.
    
    Finally, we show that $u$ is also the closest shortcut node $\sigma(v)$ to $v$ in $G$ (not only in $M$).
    Assume to the contrary that there is a shortcut node $w \in V$ such that $d(v, w) < d(v,u)$.
    If there is a shortest path from $v$ to $w$ in $G$ that does contain different shortcut nodes, then $w$ cannot be closest in $G$.
    Otherwise, the path contains only edges of $M$, which implies that $w$ is also closer to $v$ in $M$ than $u$, which contradicts our choice of $u$.
\end{proof}

Finally, the following lemma, which we will use later, implies that for each node $v$ there is at most one additional edge in $D_T$ to an ancestor in $T_M$.

\begin{lemma}\label{lem:spa:one_parent}
    If our algorithm creates an additional edge from $s$ to $t$, where $s$ is an ancestor of $t$ in $T_M$, then no node on the path from $s$ to $t$ in $T_M$ creates an additional edge to $t$.
\end{lemma}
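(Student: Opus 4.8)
The plan is to prove the slightly sharper statement that every \emph{internal} vertex $p$ of the path from $s$ to $t$ in $T_M$ (i.e.\ every $p$ that is a proper $T_M$-descendant of $s$ and a proper $T_M$-ancestor of $t$) creates no additional edge to $t$. Since the only other vertices of that path are $s$ itself and $t$, this is exactly what the lemma asserts, and it gives the advertised corollary at once: if some node had additional edges to two ancestors $s_1,s_2$ with $s_1$ higher, then $s_2$ would be an internal vertex of the path from $s_1$ to $t$ creating an additional edge to $t$, contradicting the lemma.

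First I would restate the construction of $D_T$ (from the text around Lemma~\ref{lem:spa:distance_graph}) in a convenient form: a node $a$ creates an additional edge $\{a,b\}$, with $a$ a proper $T_M$-ancestor of $b$, exactly when $a$ owns a tree-edge $\{a,c\}\in E_T$, where $c$ is the child of $a$ on the $T_M$-path towards $b$, with $\{a,c\}\notin E'$, and the recursion triggered along $\{a,c\}$ reaches $b$; that recursion walks \emph{down} the $T_M$-path from $c$ toward the node $x_a$ (the neighbour of $a$ in $M$ lying on the $M$-path from $a$ to $c$), stops at $x_a$, and on each visited node $w$ installs the edge $\{a,w\}$ with the accumulated weight. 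Write this $T_M$-path as $\pi_a=(c=w_0,w_1,\ldots,w_L=x_a)$; so $b\in\{w_1,\ldots,w_L\}$. The structural heart of the proof is the claim that $\pi_a$ is a subsequence, in order, of the $M$-path from $c$ to $x_a$; in particular, every $w_j$ lies on the $M$-path from $a$ to $c$, and for consecutive $w_i,w_{i+1}$ the $M$-path from $w_i$ to $w_{i+1}$ is a contiguous stretch of the $M$-path from $c$ to $x_a$. I would derive this from the lemma that every edge of $D_T$ has weight equal to the $M$-distance of its endpoints: the recursion sets $w(\{a,w_j\})=w(\{a,c\})-\sum_{k\le j}w(\{w_{k-1},w_k\})=d_M(a,c)-\sum_{k\le j}d_M(w_{k-1},w_k)$ (tree-edge weights are already $M$-distances), and equating this with $d_M(a,w_j)$ forces, by two uses of the triangle inequality in the tree metric of $M$, both $d_M(a,c)=d_M(a,w_j)+d_M(w_j,c)$ and $\sum_{k\le j}d_M(w_{k-1},w_k)=d_M(c,w_j)$; the second is equality in the iterated triangle inequality along $w_0,\ldots,w_j$, which in a tree holds iff these nodes lie on the $M$-path from $c$ to $w_j$ in that order.

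Now the main argument. Suppose $s$ creates $\{s,t\}$ via tree-edge $\{s,c\}$ with recursion path $\pi_s=(c=w_0,\ldots,w_L=x_s)$ and $t=w_m$. Let $p$ be an internal vertex of the path from $s$ to $t$ in $T_M$; that path is $(s,c=w_0,\ldots,w_m=t)$, so $p=w_i$ for some $i<m$ and the child of $p$ towards $t$ is $w_{i+1}$. Node $p$ can reach $t$ by a recursion only through the tree-edge $\{p,w_{i+1}\}$, because the subtrees of $p$'s other $T_M$-children do not contain $t$. If $t=w_{i+1}$ then $\{p,t\}\in E_T$ and there is nothing to prove, so assume $m>i+1$. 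If $\{p,w_{i+1}\}\in E'$, then $p$ runs no recursion along it and creates no additional edge from it, and we are done. Otherwise $p=w_i$ runs a recursion along $\{w_i,w_{i+1}\}$ toward $x_p$, the neighbour of $w_i$ in $M$ on the $M$-path from $w_i$ to $w_{i+1}$; note $x_p$ lies in the subtree of $w_{i+1}$ in $T_M$, as it is one of the nodes skipped by $\{w_i,w_{i+1}\}$. By the structural claim, the $M$-path from $w_i$ to $w_{i+1}$ is a contiguous stretch of the $M$-path from $c$ to $x_s$, so on the latter $x_p$ lies strictly between $w_i$ and $w_{i+1}$, whereas $t=w_m$ lies strictly beyond $w_{i+1}$ in the $c\to x_s$ direction. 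Hence $w_{i+1}$ lies on the $M$-path between $x_p$ and $t$, so deleting $w_{i+1}$ separates $x_p$ from $t$ in $M$; consequently $x_p$ and $t$ fall in the subtrees of two different $T_M$-children of $w_{i+1}$, and in particular $t$ does not lie on the $T_M$-path from $w_{i+1}$ to $x_p$. But the recursion $p$ runs along $\{w_i,w_{i+1}\}$ visits only nodes on that $T_M$-path from $w_{i+1}$ to $x_p$, so it never reaches $t$. Thus $p$ creates no additional edge to $t$, which completes the proof.

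The hard part will be the structural claim: carefully arguing, via the distance-preservation lemma together with the tree-metric of $M$, that the recursion path $\pi_a$ really refines the corresponding $M$-path (and, relatedly, that the weight the recursion attaches to $\{a,w_j\}$ is $d_M(a,w_j)$ to begin with). The subsequent separation step in $M$ — placing $x_p$ strictly between $w_i$ and $w_{i+1}$ and $t$ strictly beyond $w_{i+1}$ on the same $M$-path — is then short, but must be stated with care about the orientation of the $M$-path and about the fact that skipped nodes land in the child's subtree.
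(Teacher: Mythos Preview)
Your proposal is correct and follows essentially the same route as the paper's proof. The paper's argument is a compressed version of yours: it asserts (without justification) that the $M$-path from $s$ to $v_1$ contains all of $v_1,\ldots,v_k=t$, then observes that the $M$-path between consecutive $v_i,v_{i+1}$ contains no other $v_j$, and concludes that $v_i$'s recursion toward $v_{i+1}$ cannot reach $t$. Your ``structural claim'' is exactly what fills the gap the paper leaves implicit, and deriving it from the distance-preservation lemma via the tree-metric equalities is a clean way to do so; the subsequent separation argument (placing $x_p$ strictly between $w_i$ and $w_{i+1}$, and $t$ beyond $w_{i+1}$, on the same $M$-path) is the same reasoning the paper uses, just made explicit.
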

}

\lng{
\begin{proof}
    Assume there exists an edge $\{s,t\} \in E_D \setminus E_T$ and let $P = (s,v_1,\ldots,v_k=t)$ be the unique path from $s$ to $t$ in $T_M$.
    Since $s$ established an edge to $t$, the path from $s$ to $v_1$ in $M$ must contain node $t$; furthermore, it contains all nodes $v_1, \ldots, v_k$.
    However, for each node $v_i$, $1 \le i \le k-1$, the path from $v_i$ to $v_{i+1}$ does not contain any other node of $P$.
    Therefore, our algorithm will not establish any additional edge in $E_D \setminus E_T$ from $v_i$ to any other node of $P$, including $t$.
\end{proof}
}

\subsection{Computing APSP between Shortcut Nodes} \label{sec:spa:sparse:apsp_cut_nodes}

In this section, we first describe how the shortcut nodes can compute their pairwise distances in $M$ by using $D_T$.
Then, we explain how the information can be used to compute all pairwise distances between shortcut nodes in $G$ by performing matrix multiplications.

\textbf{Compute Distances in $M$.}
First, each node learns the total number of shortcut nodes $n_c := |\Sigma|$, and each shortcut node is assigned a unique identifier from $[n_c]$.\shrt{\footnote{We denote $[k] = \{0, \ldots, k-1\}$.}}
The first part can easily achieved using Lemma~\ref{lem:spa:aggregation}.
For the second part, consider the Patricia trie $P$ on the node's identifiers, which, since each node knows all identifiers, is implicitly given to the nodes.
By performing a convergecast in $P$ (where each inner node is simulated by the leaf node in its subtree that has highest identifier), every inner node of $P$ can learn the number of shortcut nodes in its subtree in $P$.
This allows the root of $P$ to assign intervals of labels to its children in $P$, which further divide the interval according to the number of shortcut nodes in their children's subtrees, until every shortcut node is assigned a unique identifier.

Note that it is impossible for a shortcut node to explicitly learn all the distances to all other shortcut nodes in polylogarithmic time, since it may have to learn $\Omega(n^{1/3})$ many bits.
However, if we could distribute the distances of all $O(n^{2/3})$ pairs of shortcut nodes uniformly among all nodes of $V$, each node would only have to store $O(\log n)$ bits\footnote{In fact, for this we could even allow $n$ pairs, i.e., $n_c = O(\sqrt{n})$; the reason for our bound on $n_c$ will become clear later.}.
We make use of this in the following way.
To each pair $(i,j)$ of shortcut nodes we assign a \emph{representative} $h(i,j) \in V$, which is chosen using (pseudo-)random hash function $h: [n_c]^2 \rightarrow V$ that is known to all nodes and that satisfies $h(i,j) = h(j,i)$.\footnote{Note that sufficient shared randomness can be achieved in our model by broadcasting $\Theta(\log^2 n)$ random bits in time $O(\log n)$ \cite{AGG+19}.
Further, note that for a node $v \in V$ there can be up to $O(\log n)$ keys $(i,j)$ for which $h(i,j) = v$, w.h.p., thus $v$ has to act on behalf of at most $O(\log n)$ nodes.}
The goal of $h(i,j)$ is to infer $d_M(i,j)$ from learning all the edges on the path from $i$ to $j$ in $D_T$.

\lng{. 
To do so, the representative $h(i,j)$ first has to retrieve the labels of both $i$ and $j$ in $T_M$.
However, $i$ cannot send this information directly, as it would have to reach the representatives of every shortcut node pair $(i,k)$, of which there may be up to $\Omega(n^{1/3})$ many.
Instead, it performs a \emph{multicast} using techniques of \cite{AGG+19} to inform all these representatives.
To that end, $h(i,j)$ first joins the \emph{multicast groups} of $g(i)$ and $g(j)$.
Technically, it participates in the construction of \emph{multicast trees} in a simulated $\lfloor \log n\rfloor$-dimensional \emph{butterfly network} towards them.
More precisely, the nodes of the $i$-th column of the butterfly are simulated by the node with $i$-th highest identifier, and each source is a node of the butterfly's bottom level chosen uniformly and independently at random using a (pseudo)-random hash function $g$.
Since the construction is very technical, we leave out the details and defer the interested reader to \cite{AGG+19}.
By applying \cite[Theorem 2.3]{AGG+19} as a black box with parameters $\ell := O(\log n)$ and $L := O(n^{2/3})$ (since each node acts for at most $O(\log n)$ of the $O(n^{2/3})$ representatives, w.h.p., and each representative joins two multicast groups), we obtain multicast trees with \emph{congestion} $C = O(\log n)$ in time $O(\log n)$, w.h.p.
We then use \cite[Theorem 2.4]{AGG+19} to let each shortcut node $i$ multicast its label $l(i)$ to all representatives $h(i,k)$.
With parameter $\hat{\ell}$ as the maximum number of representatives simulated by the same node, multiplied by $2$ (which can easily be computed using Lemma~\ref{lem:spa:aggregation} on $M$), and congestion $C$, the theorem gives a runtime of $O(\log n)$, w.h.p.

From the knowledge of $l(i)$ and $l(j)$, $h(i,j)$ can easily infer the labels of all nodes on the path $P$ from $i$ to $j$ in $T_M$.
Specifically, it knows the label $l(x)$ of the highest ancestor $x$ of $i$ and $j$ in $T_M$, which is simply the longest common prefix of $l(i)$ and $l(j)$.\footnote{Technically, $h(i,j)$ could only infer the exact labels of the nodes of $P$ if it knew the degree of every node in $T_M$.
To circumvent this, $h(i,j)$ simply assumes that the tree is binary, which implies that some nodes of $P$ (apart from $i$, $j$, and $x$) may not actually exist.
However, as this is not a problem for the algorithm, we disregard this issue in the remainder of this section.}
The goal of $h(i,j)$ is to retrieve the edge from each node $v \in P \setminus \{x\}$ to its parent in $T_M$, as well as $v$'s additional edge to an ancestor in $D_T$, of which there can be at most one by Lemma~\ref{lem:spa:one_parent}.
Since by Lemma~\ref{lem:spa:distance_graph_correctness} these edges contain a shortest path from $i$ to $j$ that preserves the distance in $M$, $h(i,j)$ can easily compute $d_M(i,j)$ using this information.

To retrieve the edges, $h(i,j)$ joins the multicast groups of all nodes of $P \setminus \{x\}$ using \cite[Theorem 2.3]{AGG+19}.
Then, each inner node of $T_M$ performs a multicast using \cite[Theorem 2.4]{AGG+19} to inform all nodes in its multicast group about its at most two edges.
Since each node acts on behalf of at most $O(\log n)$ representatives, and each representative joins $O(\log n)$ multicast groups, the parameters of \cite[Theorem 2.3]{AGG+19} are $\ell := O(\log n)$ and $L := O(n^{2/3} \log n)$, and for \cite[Theorem 2.4]{AGG+19} we have $\hat{\ell} := O(\log n)$; therefore, all can be done in $O(\log n)$ rounds, w.h.p.
We conclude the following lemma.}
\shrt{ 
Due to space reasons, we defer a precise description to the full version.
From a high level, each $h(i,j)$ first needs to retrieve the labels of both $i$ and $j$ in $T_M$, which it cannot do directly, as the nodes may be contacted by many other nodes.
Instead, we use techniques from \cite{AGG+19} to distribute the load: $h(i,j)$  participates in the construction of a \emph{multicast tree} towards both $i$ and $j$.
Using randomization, these trees can be used to disseminate information from each node to all nodes in its multicast tree in a broadcast fashion with low congestion rather than communicating directly.
Afterwards, $h(i,j)$ can infer the labels of all nodes on the path from $i$ to $j$ in $D_M$, and learn their edge weights in a very similar way.
By observing that each node only has to learn $O(\log n)$ values, we can use Theorems 2.3 and Theorem 2.4 of \cite{AGG+19} in a straight-forward manner to obtain the following lemma. 
}

\begin{lemma} \label{lem:spa:representative_M-distance}
    Every representative $h(i,j)$ learns $d_M(i,j)$ in time $O(\log n)$, w.h.p.
\end{lemma}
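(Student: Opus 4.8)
The plan is to distribute the $O(n^{2/3})$ shortest-path queries between shortcut-node pairs across the network so that each node does only $O(\log n)$ work, and then to use the two multicast primitives of \cite{AGG+19} to let every representative collect the $O(\log n)$ edge weights it needs. Recall from Lemma~\ref{lem:spa:distance_graph_correctness} that for any two nodes $i,j$ there is a path of length $O(\log n)$ in $D_T$ realizing $d_M(i,j)$, and that this path only uses nodes of the unique $T_M$-path from $i$ to $j$; by Lemma~\ref{lem:spa:one_parent} each such node contributes at most two edges ($E_T$-parent edge plus at most one additional $E_D$-edge to an ancestor). Hence $h(i,j)$ can compute $d_M(i,j)$ from $O(\log n)$ edge weights, provided it learns (a) the labels $l(i), l(j)$ (from which the $O(\log n)$ intermediate node labels on the $T_M$-path follow, assuming a binary tree as noted in the excerpt), and (b) the at most two incident ``upward'' edge weights of each node on that path.

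First I would have every node learn $n_c = |\Sigma|$ via Lemma~\ref{lem:spa:aggregation}, and assign each shortcut node a distinct identifier in $[n_c]$ using a convergecast/broadcast on the (implicit, globally known) Patricia trie of the node identifiers, so the root can hand down label-intervals sized by the subtree shortcut-counts. Next, fix a shared (pseudo-)random hash $h:[n_c]^2 \to V$ with $h(i,j)=h(j,i)$; since $n_c = O(n^{1/3})$, there are $O(n^{2/3})$ pairs, so w.h.p.\ each node is the representative of at most $O(\log n)$ pairs. Then I would set up a simulated $\lfloor\log n\rfloor$-dimensional butterfly (columns simulated by nodes in identifier order, sources chosen by a hash $g$) and build multicast trees: each $h(i,j)$ joins the multicast groups of $i$ and $j$, so by \cite[Theorem~2.3]{AGG+19} with $\ell=O(\log n)$, $L=O(n^{2/3})$ we get congestion $C=O(\log n)$ in $O(\log n)$ rounds, w.h.p. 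Using \cite[Theorem~2.4]{AGG+19}, each shortcut node $i$ multicasts $l(i)$ to all its representatives in $O(\log n)$ rounds, w.h.p., giving each $h(i,j)$ the labels $l(i),l(j)$.

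With $l(i)$ and $l(j)$ in hand, $h(i,j)$ computes the longest common prefix (the label of the highest common ancestor $x$) and thus the labels of all $O(\log n)$ nodes on the $T_M$-path $P$. It then joins the multicast groups of all nodes in $P\setminus\{x\}$ — again via \cite[Theorem~2.3]{AGG+19}, now with $L=O(n^{2/3}\log n)$, still $O(\log n)$ rounds w.h.p. — and each inner node of $T_M$ multicasts its (at most two) upward edge weights to its group via \cite[Theorem~2.4]{AGG+19} with $\hat\ell=O(\log n)$, again $O(\log n)$ rounds w.h.p. Now $h(i,j)$ knows a superset of the edges of the length-$O(\log n)$ distance-preserving path of Lemma~\ref{lem:spa:distance_graph_correctness}, and can locally reconstruct it and sum the weights to obtain $d_M(i,j)$; correctness is exactly Lemma~\ref{lem:spa:distance_graph_correctness}(1)–(2).

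The main obstacle is bounding the load so the black-box applications of \cite[Theorems~2.3,~2.4]{AGG+19} go through: one must verify that each node is a representative for $O(\log n)$ pairs w.h.p. (standard balls-into-bins with $O(n^{2/3})$ balls into $n$ bins), that each node therefore joins only $O(\log n)$ multicast groups, and that the per-node message counts in both the label-dissemination and the edge-dissemination phases stay within the congestion/work parameters the theorems require — together with the minor bookkeeping that $h(i,j)$ may over-assume nodes on $P$ (since it treats $T_M$ as binary), which is harmless because nonexistent nodes simply contribute no edges.
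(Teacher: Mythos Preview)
Your proposal is correct and follows essentially the same approach as the paper: the two-phase multicast (first labels via groups of $i$ and $j$, then edge weights via groups of the $O(\log n)$ nodes on the $T_M$-path), the same parameter choices $L=O(n^{2/3})$ and $L=O(n^{2/3}\log n)$ for \cite[Theorems~2.3,~2.4]{AGG+19}, and the appeal to Lemmas~\ref{lem:spa:distance_graph_correctness} and~\ref{lem:spa:one_parent} for correctness all match the paper almost verbatim. You even anticipate the same bookkeeping caveat about treating $T_M$ as binary.
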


\textbf{Compute Distances in $G$.}
Let $A \in \mathbb{N}_0^{n_c \times n_c}$ be the \emph{distance matrix} of the shortcut nodes, where \shrt{$A_{i,j} = \min\{w(\{i,j\}), d_M(i,j)\}$, if $\{i,j\} \in E$, and $A_{i,j} = d_M(i,j)$, otherwise.}\lng{\[
 A_{i,j} = \begin{dcases*}
        \min\{w(\{i,j\}, d_M(i,j)\} & if $\{i,j\} \in E$\\
        d_M(i,j) & otherwise.
        \end{dcases*}
\]} 
Our goal is to square $A$ for $\lceil \log n \rceil + 2$ many iterations in the \emph{min-plus semiring}.
More precisely, we define $A^1 = A$, and for $t \ge 1$ we have that $A^{2^t}_{i,j} = \min_{k \in [n_c]}(A^{2^{t-1}}_{i,k} + A^{2^{t-1}}_{k,j})$.
The following lemma shows that after squaring the matrix $\lceil \log n \rceil + 2$ times, its entries give the distances in $G$.

\begin{lemma}\label{lem:spa:apsp}
    $A^{2^{\lceil \log n \rceil + 2}}_{i,j} = d(i,j)$ for each $i,j \in \Sigma$.
\end{lemma}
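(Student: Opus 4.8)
## Proof Proposal for Lemma~\ref{lem:spa:apsp}

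The plan is to show that for each $i,j \in \Sigma$ and each $t \ge 0$, the entry $A^{2^t}_{i,j}$ equals the length of a shortest path from $i$ to $j$ in $G$ among all paths whose intermediate shortcut nodes number at most $2^t - 1$. More precisely, I would define $B_t(i,j)$ to be the minimum, over all walks $i = u_0, u_1, \ldots, u_m = j$ with $u_1, \ldots, u_{m-1} \in \Sigma$ and $m \le 2^t$, of $\sum_{r} A_{u_{r-1}, u_r}$, where $A$ is the distance matrix as defined. The claim is that $A^{2^t}_{i,j} = B_t(i,j)$, which follows by an easy induction on $t$ using the min-plus recurrence: $B_0 = A$ by definition, and $B_t(i,j) = \min_{k \in \Sigma}(B_{t-1}(i,k) + B_{t-1}(k,j))$ because any walk with at most $2^t$ hops through shortcut nodes splits at its midpoint into two walks each with at most $2^{t-1}$ hops. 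This is the routine part.

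The substantive step is to argue that $B_t(i,j) = d(i,j)$ once $2^t \ge n$, i.e.\ for $t = \lceil \log n \rceil + 2$ (the constant slack absorbs any off-by-one issues and the fact that the shortcut nodes themselves can be endpoints). For this I would first establish that each entry $A_{k,\ell}$ of the base matrix equals the length of some path in $G$ from $k$ to $\ell$: this holds because $d_M(k,\ell)$ is realized by a path in $M \subseteq G$, and $w(\{k,\ell\})$ when $\{k,\ell\} \in E$ is realized by that edge. Consequently every entry $B_t(i,j)$ is the length of an actual walk in $G$, so $B_t(i,j) \ge d(i,j)$ for all $t$. For the reverse inequality, take a shortest path $P$ from $i$ to $j$ in $G$ and consider the subsequence of shortcut nodes appearing on $P$, say $i = z_0, z_1, \ldots, z_\ell = j$. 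Between two consecutive shortcut nodes $z_{r-1}$ and $z_r$, the subpath of $P$ contains no intermediate shortcut node, hence no endpoint of a non-tree edge; therefore all its edges lie in $E'$, i.e.\ this subpath runs entirely inside $M$, and its length is at least $d_M(z_{r-1}, z_r) \ge A_{z_{r-1}, z_r}$ (with equality since $P$ is shortest and $A$ takes the minimum). Summing, $d(i,j) = w(P) = \sum_r (\text{subpath length}) \ge \sum_r A_{z_{r-1}, z_r} \ge B_t(i,j)$, provided the number of hops $\ell$ in the shortcut-node subsequence is at most $2^t$.

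The only remaining point is to bound $\ell$. Since $P$ is a shortest path in $G$, it is a simple path, so it visits each node at most once; in particular it visits at most $|\Sigma| \le n_c = O(n^{1/3})$ distinct shortcut nodes, giving $\ell \le n_c \le n \le 2^{\lceil \log n \rceil} \le 2^{\lceil \log n \rceil + 2}$. (Even the crude bound $\ell \le n$ suffices.) Combining both inequalities yields $A^{2^{\lceil \log n \rceil + 2}}_{i,j} = B_{\lceil \log n \rceil + 2}(i,j) = d(i,j)$, which is the claim. I expect the main obstacle — really a bookkeeping subtlety rather than a genuine difficulty — to be the argument that each inter-shortcut-node subpath of $P$ lies entirely in $M$: one must be careful that "no intermediate shortcut node" indeed rules out every non-tree edge, which is exactly the definition of a shortcut node as an endpoint of a non-tree edge, so a subpath with no such interior vertex can only traverse tree edges, and by minimality of $P$ it realizes $d_M$ between its endpoints.
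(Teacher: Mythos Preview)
Your approach is essentially the same as the paper's: both establish $A^{2^t}_{i,j}\ge d(i,j)$ trivially and then bound $A^{2^t}_{i,j}$ from above by decomposing a shortest $i$--$j$ path at the shortcut nodes it visits. The paper phrases the induction in terms of the number of non-tree edges on the path (showing $A^{2^t}_{i,j}\le \min_{P\in\mathcal P(i,j,t-2)} w(P)$, where $\mathcal P(i,j,t)$ is the set of $i$--$j$ paths with at most $2^t$ non-tree edges), while you phrase it as the standard ``$t$-th min-plus power equals shortest $2^t$-hop walk'' statement over the shortcut-node graph with edge weights $A$. Your formulation is slightly cleaner and avoids the separate base case $t=2$ that the paper needs.

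There is one small slip. You argue that a subpath of $P$ between consecutive shortcut nodes $z_{r-1},z_r$ ``can only traverse tree edges'' because it has no interior shortcut node. This fails when the subpath is a single edge: then there are no interior vertices at all, and the edge $\{z_{r-1},z_r\}$ may well be a non-tree edge (both endpoints are shortcut nodes). In that case the subpath does \emph{not} lie in $M$, and its length need not be $\ge d_M(z_{r-1},z_r)$. The fix is immediate from the definition of $A$: if $\{z_{r-1},z_r\}\in E$ then $A_{z_{r-1},z_r}\le w(\{z_{r-1},z_r\})$, so the inequality (subpath length) $\ge A_{z_{r-1},z_r}$ still holds. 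You should split into the two cases ``single edge'' and ``length $\ge 2$'' explicitly; the latter case is exactly your argument, and the former uses the $\min$ in the definition of $A$ directly.
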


\lng{
\begin{proof}
    First, note that $A_{i,j}^t \ge d(i,j)$ for all $i,j \in \Sigma$ and all $t \ge 1$, since every entry corresponds to the length of an actual path in $G$.
    We show that for all $t \ge 2$, $A^{2^t}_{i,j} \le \min_{P \in \mathcal{P}(i,j,t-2)} w(P)$, where $\mathcal{P}(i,j,t)$ is the set of all paths from $i$ to $j$ in $G$ that contain at most $2^{t}$ non-tree edges.
    Since any shortest path between two shortcut nodes $i$ and $j$ contains at most $n-1$ non-tree edges, $A^{2^{\lceil \log n \rceil + 2}}_{i,j} = d(i,j)$.
    
    To establish the induction base, we first show that for $t=2$, $A^{4}_{i,j} \le \min_{P \in \mathcal{P}(i,j,0)} w(P)$.
    Let $P$ be a path from $i$ to $j$ that contains at most $2^0 = 1$ non-tree edge $\{v_1, v_2\}$ such that $w(P) = \min_{P \in \mathcal{P}(i,j,0)}$.
    W.l.o.g., assume that $v_1$ appears first in $P$ ($v_1$ might be $i$).
    Let $P_1$ be the subpath of $P$ from $i$ to $v_1$, and $P_2$ be the subpath from $v_2$ to $j$.
    Note that $A_{i,v_1}^1 \le w(P_1)$, $A_{v_1, v_2}^1 \le w(e)$, and $A_{v_2, j}^1 \le w(P_2)$.
    Therefore, $A_{i,v_2}^2 \le w(P_1) + w(e)$ and $A_{v_2, j}^2 \le w(P_2)$, which implies that $A_{i,v_2}^4 \le w(P_1) + w(e) + w(P_2) = w(P) = \min_{P \in \mathcal{P}(i,j,0)}$.
    
    Now let $t > 2$ and consider a path $P \in \mathcal{P}(i,j,t-2)$ such that $w(P) = \min_{P \in \mathcal{P}(i,j,t-2)}$.
    Since $P$ contains at most $2^{t-2} \ge 2$ non-tree edges, we can divide $P$ at some shortcut node $k \in \Sigma$ into two paths $P_1$ and $P_2$ that both contain at most $2^{t-3}$ non-tree edges.
    We have that
    \begin{align*}
        A_{i,j}^{2^t} &\le A_{i,k}^{2^{t-1}} +  A_{i,k}^{2^{t-1}} \\
        &\le \min_{P \in \mathcal{P}(i,k,t-3)} w(P) + \min_{P \in \mathcal{P}(k,j,t-3)} w(P) \\
        &\le w(P_1) + w(P_2) \\
        &= w(P) = \min_{P \in \mathcal{P}(i,j,t-2)} w(P),
    \end{align*}
    which concludes the proof.

\end{proof}
}

We now describe how the matrix can efficiently be multiplied.
As an invariant to our algorithm, we show that at the beginning of the $t$-th multiplication, every representative $h(i,j)$ stores $A^{2^{t-1}}_{i,j}$.
Thus, for the induction basis we first need to ensure that every representative $h(i,j)$ learns $A_{i,j}$.
By Lemma~\ref{lem:spa:representative_M-distance}, $h(i,j)$ already knows $d_M(i,j)$, thus it only needs to retrieve $w(\{i,j\})$, if that edge exists.
To do so, we first compute an orientation with outdegree $O(\log n)$ in time $O(\log n)$ using \cite[Corollary 3.12]{BE10} in the local network.
For every edge $\{i,j\}$ that is directed from $i$ to $j$, $i$ sends a message containing $w(\{i,j\})$ to $h(i,j)$; since the arboricity of $G$ is $O(\log n)$, every node only has to send at most $O(\log n)$ messages.

The $t$-th multiplication is then done in the following way.
We use a (pseudo-)random hash function $h: [n_c]^3 \rightarrow V$, where $h(i,j,k) = h(j,i,k)$.
First, every node $h(i,j,k) \in V$ needs to learn $A^{2^{t-1}}_{i,j}$.\footnote{We will again ignore the fact that a node may have to act on behalf of at most $O(\log n)$ nodes $h(i,j,k)$.}
To do so, $h(i,j,k)$ joins the multicast group of $h(i,j)$ using \cite[Theorem 2.3]{AGG+19}.
With the help of \cite[Theorem 2.4]{AGG+19}, $h(i,j)$ can then multicast $A^{t-1}_{i,j}$ to all $h(i,j,k)$.
Since there are $L \le [n_c]^3 = O(n)$ nodes $h(i,j,k)$ that each join a multicast group, and each node needs to send and receive at most $\ell = O(\log n)$ values, w.h.p., the theorems imply a runtime of $O(\log n)$, w.h.p.

After $h(i,j,k)$ has received $A^{2^{t-1}}_{i,j}$, it sends it to both $h(i,k,j)$ and $h(j,k,i)$.
It is easy to see that thereby $h(i,j,k)$ will receive $A^{2^{t-1}}_{i,k}$ from $h(i,k,j)$ and $A^{2^{t-1}}_{k,j}$ from $h(k,j,i)$.
Afterwards, $h(i,j,k)$ sends the value $A^{2^{t-1}}_{i,k} + A^{2^{t-1}}_{k,j}$ to $h(i,j)$ by participating in an aggregation using \cite[Theorem 2.2]{AGG+19} and the minimum function, whereby $h(i,j)$ receives $A^{2t}_{i,j}$.
By the same arguments as before, $L = O(n)$, and $\ell = O(\log n)$, which implies a runtime of $O(\log n)$, w.h.p.
\lng{We conclude the following lemma.}

\begin{lemma} \label{lem:spa:matrix_mult}
    After $\lceil \log n \rceil + 2$ many matrix multiplications, $h(i,j)$ stores $d(i,j)$ for every $i, j \in [n_c]$.
    The total number of rounds is $O(\log^2 n)$, w.h.p.
\end{lemma}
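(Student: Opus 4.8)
The plan is to combine the min-plus correctness statement of Lemma~\ref{lem:spa:apsp} with an inductive invariant about what each representative stores, and then to sum up the per-phase round complexities.

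\textbf{Correctness.} I would prove by induction on $t \in \{1, \ldots, \lceil \log n \rceil + 2\}$ that at the start of the $t$-th multiplication every representative $h(i,j)$ holds the entry $A^{2^{t-1}}_{i,j}$. The base case $t=1$ is exactly the initialization: $h(i,j)$ already knows $d_M(i,j)$ by Lemma~\ref{lem:spa:representative_M-distance}, and it obtains $w(\{i,j\})$ whenever that edge exists by having the endpoint directed toward the other in the $O(\log n)$-outdegree orientation (\cite[Corollary 3.12]{BE10}) send it $w(\{i,j\})$; combining the two yields $A_{i,j} = A^{2^0}_{i,j}$. For the step, I would verify that the described message pattern indeed realizes one min-plus squaring: $h(i,j,k)$ first receives $A^{2^{t-1}}_{i,j}$ via the multicast from $h(i,j)$; exchanging with $h(i,k,j)$ and $h(j,k,i)$ it then learns $A^{2^{t-1}}_{i,k}$ and $A^{2^{t-1}}_{k,j}$ (here the symmetry of $h(\cdot,\cdot,\cdot)$ under swapping its first two arguments is what makes the three nodes mutually reachable); finally the min-aggregation of $A^{2^{t-1}}_{i,k} + A^{2^{t-1}}_{k,j}$ over all $k \in [n_c]$ delivers $A^{2^t}_{i,j}$ to $h(i,j)$. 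After the last iteration $h(i,j)$ stores $A^{2^{\lceil\log n\rceil+2}}_{i,j}$, which equals $d(i,j)$ by Lemma~\ref{lem:spa:apsp}.

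\textbf{Runtime.} I would bound the cost in two parts. The preprocessing --- building $D_T$ (Lemma~\ref{lem:spa:distance_graph}), assigning the shortcut-node identifiers, having every $h(i,j)$ learn $d_M(i,j)$ (Lemma~\ref{lem:spa:representative_M-distance}), and distributing the initial entries of $A$ --- costs $O(\log^2 n)$ rounds, with the decomposition/distance-graph construction dominating. Each of the $\lceil \log n \rceil + 2 = O(\log n)$ multiplications consists of a constant number of multicast, broadcast, and aggregation primitives on problem instances of size $L = O(n_c^3) = O(n)$ (using $n_c = O(n^{1/3})$), where each node acts for at most $O(\log n)$ of the triples $h(i,j,k)$ w.h.p.\ and thus has per-node load $\ell = O(\log n)$; by \cite[Theorems 2.2--2.4]{AGG+19} each such primitive, hence each multiplication, runs in $O(\log n)$ rounds w.h.p. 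Multiplying by the $O(\log n)$ iterations gives $O(\log^2 n)$ rounds for the multiplication phase, so $O(\log^2 n)$ overall.

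The step I expect to require the most care is the probabilistic bookkeeping. The (pseudo)random hash functions $h:[n_c]^2 \to V$ and $h:[n_c]^3 \to V$ must be load-balanced so that no node represents more than $O(\log n)$ keys, and the primitives of \cite{AGG+19} succeed only w.h.p.; since there are $O(\log n)$ iterations and only polynomially many keys, a union bound over all iterations and all primitive invocations keeps the overall failure probability at $1/n^{c}$ for an arbitrarily large constant $c$, provided each individual event fails with probability at most $1/n^{c'}$ for a suitably larger $c'$ (which we may assume by boosting the constants in the \cite{AGG+19} lemmas). Everything else is routine: checking that the routing of $A^{2^{t-1}}_{i,k}$ and $A^{2^{t-1}}_{k,j}$ to $h(i,j,k)$ is consistent with the hash symmetry, and that the min-aggregation ranges over exactly the $k \in [n_c]$ needed for the semiring product.
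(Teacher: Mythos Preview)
Your proposal is correct and follows essentially the same approach as the paper: the paper's ``proof'' is the discussion immediately preceding the lemma, which establishes the same invariant (that $h(i,j)$ holds $A^{2^{t-1}}_{i,j}$ at the start of iteration $t$), the same initialization via Lemma~\ref{lem:spa:representative_M-distance} and the orientation of \cite[Corollary 3.12]{BE10}, the same per-iteration multicast/exchange/aggregation using \cite[Theorems 2.2--2.4]{AGG+19} with parameters $L=O(n)$ and $\ell=O(\log n)$, and then invokes Lemma~\ref{lem:spa:apsp} for correctness. Your write-up is slightly more explicit about the induction and the union bound over iterations, but there is no substantive difference.
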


\subsection{Approximating SSSP and the Diameter} \label{sec:spa:sparse:approx}
We are now all set in order to compute approximate distances between any two nodes $s,t \in V$.
Specifically, we approximate $d(s,t)$ by
\[
    \widetilde{d}(s,t) = \min\{d_{M}(s,t), d(s,\sigma(s)) + d(\sigma(s),\sigma(t)) + d(\sigma(t),t)\}.
\]
We now show that $\widetilde{d}(s,t)$ gives a $3$-approximation for $d(s,t)$.

\begin{lemma}\label{lem:spa:sparse:sssp_approx}
    Let $s,t \in V$ and $d(s,t)$ be the length of the shortest path from $s$ to $t$.
    It holds that $d(s,t) \leq \widetilde{d}(s,t) \leq 3d(s,t)$.
\end{lemma}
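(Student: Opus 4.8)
The plan is to verify the two inequalities $d(s,t)\le\widetilde d(s,t)$ and $\widetilde d(s,t)\le 3d(s,t)$ separately. The lower bound is essentially immediate, while the upper bound requires choosing the right intermediate shortcut nodes on a shortest path.

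For the lower bound I would note that each of the two quantities inside the minimum defining $\widetilde d(s,t)$ is the length of some walk in $G$ from $s$ to $t$: the value $d_M(s,t)$ is attained by the unique $s$--$t$ path in the spanning tree $M\subseteq G$, and $d(s,\sigma(s))+d(\sigma(s),\sigma(t))+d(\sigma(t),t)$ is attained by concatenating shortest paths $s\to\sigma(s)\to\sigma(t)\to t$. Since no walk from $s$ to $t$ can be shorter than $d(s,t)$, both terms are at least $d(s,t)$, and hence so is $\widetilde d(s,t)$.

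For the upper bound I would fix a shortest path $P$ from $s$ to $t$ in $G$ and split into two cases. If $P$ uses only edges of $M$, then $d_M(s,t)=w(P)=d(s,t)$ (this also covers the degenerate case $\Sigma=\emptyset$, i.e.\ $G=M$), so $\widetilde d(s,t)\le d_M(s,t)=d(s,t)$ and we are done. Otherwise $P$ contains a non-tree edge; since both endpoints of a non-tree edge lie in $\Sigma$, the path $P$ contains at least one shortcut node. Let $a$ be the first and $b$ the last shortcut node along $P$ (possibly $a=b$). As every subpath of a shortest path is shortest and $a,b$ occur on $P$ in this order, $d(s,a)+d(a,b)+d(b,t)=d(s,t)$. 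Because $a,b\in\Sigma$ and $\sigma(\cdot)$ selects a nearest shortcut node, $d(s,\sigma(s))\le d(s,a)$ and $d(\sigma(t),t)\le d(b,t)$; and routing through $s,a,b,t$ the triangle inequality gives $d(\sigma(s),\sigma(t))\le d(\sigma(s),s)+d(s,a)+d(a,b)+d(b,t)+d(t,\sigma(t))\le 2d(s,a)+d(a,b)+2d(b,t)$. Adding the three contributions,
\[
    d(s,\sigma(s))+d(\sigma(s),\sigma(t))+d(\sigma(t),t)\;\le\;3\,d(s,a)+d(a,b)+3\,d(b,t)\;\le\;3\bigl(d(s,a)+d(a,b)+d(b,t)\bigr)\;=\;3\,d(s,t),
\]
so $\widetilde d(s,t)\le 3d(s,t)$, which finishes the proof.

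The main obstacle is pinning the approximation factor down to exactly $3$: the naïve estimate $d(\sigma(s),\sigma(t))\le d(\sigma(s),s)+d(s,t)+d(t,\sigma(t))$ combined with the crude bounds $d(s,\sigma(s)),d(\sigma(t),t)\le d(s,t)$ only yields a $5$-approximation. The key is instead to charge $d(s,\sigma(s))$ and $d(\sigma(t),t)$ against $d(s,a)$ and $d(b,t)$ for shortcut nodes $a,b$ that actually lie \emph{on} the shortest path $P$, so that after collecting terms each piece is paid for by the corresponding segment of $P$ and the bound telescopes cleanly to $3\,d(s,t)$.
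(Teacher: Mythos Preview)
Your proof is correct and follows essentially the same approach as the paper: both argue the lower bound from the fact that each term in the minimum is the length of an actual $s$--$t$ walk, and for the upper bound both pick shortcut nodes on a shortest $s$--$t$ path (the paper uses the endpoints $x,y$ of some non-tree edge on $P$, you use the first and last shortcut nodes $a,b$), bound $d(s,\sigma(s))\le d(s,a)$ and $d(\sigma(t),t)\le d(b,t)$, and apply the triangle inequality to $d(\sigma(s),\sigma(t))$ to collect a factor of $3$.
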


\lng{
\begin{proof}
    If the shortest path between $s$ and $t$ does not contain any shortcut node, then $\widetilde{d}(s,t) = d_{M}(s,t)=d(s,t)$.
    Assume the shortest path $P$ from $s$ to $t$ contains at least one non-tree edge $(x,y)$, such that $\widetilde{d}(s,t) = d(s,\sigma(s)) + d(\sigma(s),\sigma(t)) + d(\sigma(t),t)$.
    Then $P$ contains at least two shortcut nodes $x$ and $y$ with $d(s,t)=d(s,x)+d(x,y)+d(y,t)$.
    Consider Figure~\ref{fig:routing_example} for an illustration.
    \begin{figure}[t]
        \centering
        \includegraphics[]{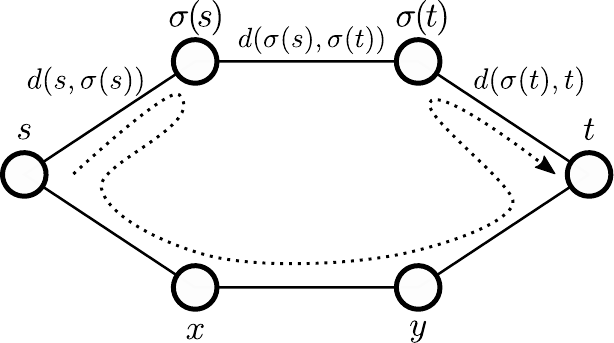}
    \caption{Illustration for the computation of the approximate distance between nodes $s$ and $t$.
    By the triangle inequality, the length $d(s,\sigma(s)) + d(\sigma(s), \sigma(t)) + d(\sigma(t), t)$ is at most the length of the dotted path, which, since $d(s,\sigma(s)) \leq d(s,x)$ and $d(\sigma(t),t) \leq d(y,t)$ has length at most $3d(s,t)$.}
    \label{fig:routing_example}
    \end{figure}

    Obviously, $\widetilde{d}(s,t)$ represents the distance of a path from $s$ to $t$, so $\widetilde{d}(s,t) \geq d(s,t)$ holds.
    Since $\sigma(s)$ is the nearest shortcut node of $s$ we get $d(s,\sigma(s)) \leq d(s,x)$ and, analogously, $d(\sigma(t),t) \leq d(y,t)$.
    Also, it holds
    \begin{align*}
        d(\sigma(s),\sigma(t)) &\leq d(\sigma(s),s) + d(s,x) + d(x,y) + d(y,t) + d(t,\sigma(t)) \\
        &= d(s,\sigma(s)) + d(s,t) + d(t,\sigma(t)).
    \end{align*}

    Putting all pieces together, we get
    \begin{align*}
        \widetilde{d}(s,t) &= d(s,\sigma(s)) + d(\sigma(s),\sigma(t)) + d(\sigma(t),t) \\
        &\leq d(s,\sigma(s)) + d(s,\sigma(s)) + d(s,t) + d(t,\sigma(t)) + d(\sigma(t),t) \\
        &= 2(d(s,\sigma(s)) + d(\sigma(t),t)) + d(s,t) \\
        &\leq 2(d(s,x) + d(y,t)) + d(s,t) \\
        &\leq 2d(s,t) + d(s,t) \\
        &= 3d(s,t).\qedhere
    \end{align*}
\end{proof}
}

To approximate SSSP, every node $v$ needs to learn $\widetilde{d}(s,v)$ for a given source $s$.
To do so, the nodes first have to compute $d_M(s,v)$, which can be done in time $O(\log n)$ by performing SSSP in $M$ using Theorem~\ref{thm:spa:sssp:tree}.
Then, the nodes construct $D_T$ in time $O(\log^2 n)$ using Lemma~\ref{lem:spa:distance_graph}.
With the help of $D_T$ and Lemma~\ref{lem:spa:nearest_shortcut}, $s$ can compute $d(s, \sigma(s))$, which is then broadcast to all nodes in time $O(\log n)$ using Lemma~\ref{lem:spa:aggregation}.
Then, we compute all pairwise distances in $G$ between all shortcut nodes in time $O(\log^2 n)$, w.h.p., using Lemma~\ref{lem:spa:matrix_mult}; specifically, every shortcut node $v$ learns $d(\sigma(s), v)$.
By performing a slight variant of the algorithm of Lemma~\ref{lem:spa:nearest_shortcut}, we can make sure that every node $t$ not only learns its closest shortcut node $\sigma(t)$ in $M$, but also retrieves $d(\sigma(s), \sigma(t))$ from $\sigma(t)$ within $O(\log n)$ rounds.
Since $t$ is now able to compute $\widetilde{d}(s,t)$, we conclude the following theorem.

\begin{theorem} \label{thm:spa:sssp_sparse}
    $3$-approximate SSSP can be computed in graphs that contain at most $n + O(n^{1/3})$ edges and have  arboricity $O(\log n)$ in time $O(\log^2 n)$, w.h.p.
\end{theorem}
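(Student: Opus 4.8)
The plan is to assemble the building blocks of Sections~\ref{sec:spa:sparse:tree_decomposition}--\ref{sec:spa:sparse:approx} into a single pipeline and to argue that (i) the composed running time is $O(\log^2 n)$, w.h.p., and (ii) the output $\widetilde{d}(s,\cdot)$ is a $3$-approximation by Lemma~\ref{lem:spa:sparse:sssp_approx}. First I would compute an MST $M$ of $G$ deterministically in $O(\log^2 n)$ rounds via Lemma~\ref{lem:spa:mst}, and transform it into a distance-preserving tree of degree $3$ as described; every node then knows which of its incident edges are non-tree edges, so the set $\Sigma$ of shortcut nodes is known locally, and since $|E\setminus E'| = O(n^{1/3})$ we have $n_c := |\Sigma| = O(n^{1/3})$. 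Next I would build the balanced decomposition tree $T_M$ (Theorem~\ref{thm:spa:decomposition_tree}) and extend it to the distance graph $D_T$ (Lemma~\ref{lem:spa:distance_graph}), both in $O(\log^2 n)$ rounds; recall $D_T$ has degree $O(\log n)$ and, by Lemma~\ref{lem:spa:distance_graph_correctness}, realizes every $M$-distance along an $O(\log n)$-hop path.

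Then, for the given source $s$, I would proceed in four phases. (1) Run the tree-SSSP algorithm of Theorem~\ref{thm:spa:sssp:tree} on $M$, so every $v$ learns $d_M(s,v)$ in $O(\log n)$ rounds. (2) Run the multi-source Bellman-Ford procedure of Lemma~\ref{lem:spa:nearest_shortcut} on $D_T$, so every $v$ learns $\sigma(v)$ and $d(v,\sigma(v))$ in $O(\log n)$ rounds; in particular $s$ learns $d(s,\sigma(s))$ and broadcasts it to all nodes using Lemma~\ref{lem:spa:aggregation}. (3) Compute all pairwise distances in $G$ between the shortcut nodes using Lemma~\ref{lem:spa:matrix_mult}, which internally combines Lemma~\ref{lem:spa:representative_M-distance} with $O(\log n)$ min-plus squarings of the $n_c\times n_c$ distance matrix $A$, load-balanced over $V$ via the multicast and aggregation primitives of \cite{AGG+19}; this takes $O(\log^2 n)$ rounds, w.h.p., and here the bound $n_c = O(n^{1/3})$ is essential, as each squaring involves $\Theta(n_c^3) = O(n)$ operations, exactly enough to be spread one per node over $V$. (4) By a slight variant of Lemma~\ref{lem:spa:nearest_shortcut}, let every node $t$ additionally retrieve $d(\sigma(s),\sigma(t))$ from $\sigma(t)$ within $O(\log n)$ rounds. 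Finally, each $t$ locally outputs
\[
  \widetilde{d}(s,t) = \min\{\, d_M(s,t),\ d(s,\sigma(s)) + d(\sigma(s),\sigma(t)) + d(\sigma(t),t)\,\},
\]
which by Lemma~\ref{lem:spa:sparse:sssp_approx} satisfies $d(s,t)\le\widetilde{d}(s,t)\le 3\,d(s,t)$. Summing the phase costs yields $O(\log^2 n)$ rounds, w.h.p.

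The \emph{main obstacle} is justifying phase (3) under the $O(\log n)$-message global-communication cap: one must check that the $\lceil\log n\rceil+2$ min-plus squarings guaranteed correct by Lemma~\ref{lem:spa:apsp} can be realized within the budget. This hinges on the $(i,j,k)$-indexed hash function making each node responsible for only $O(\log n)$ triples w.h.p., so that routing each $A^{2^{t-1}}_{i,k}$ and $A^{2^{t-1}}_{k,j}$ to $h(i,j,k)$ and aggregating the $\min_k$ back to $h(i,j)$ fits the black-box multicast and aggregation theorems of \cite{AGG+19} with parameters $L = O(n)$ and $\ell = O(\log n)$. Everything else is a routine composition of the previously proven lemmas, and the approximation guarantee follows immediately from Lemma~\ref{lem:spa:sparse:sssp_approx}.
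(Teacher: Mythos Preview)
Your proposal is correct and follows essentially the same approach as the paper: assemble the MST, the decomposition tree $T_M$ and distance graph $D_T$, compute $d_M(s,\cdot)$ via tree SSSP, find nearest shortcut nodes, run the $O(\log n)$ min-plus squarings distributed via the hash-based representatives, and combine using Lemma~\ref{lem:spa:sparse:sssp_approx}. The paper's argument is exactly this pipeline, with the same justification for the $O(\log^2 n)$ bound and the same reliance on $n_c=O(n^{1/3})$ to keep each squaring within the global-communication budget.
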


For a $3$-approximation of the diameter, consider 
$
    \widetilde{D} = 2 \max_{s \in V} d(s, \sigma(s)) + \max_{x,y \in \Sigma}d(x,y).
$
$\widetilde{D}$ can easily be computed using Lemmas~\ref{lem:spa:distance_graph}, \ref{lem:spa:nearest_shortcut}, and \ref{lem:spa:matrix_mult}, and by using Lemma~\ref{lem:spa:aggregation} on $M$ to determine the maxima of the obtained values.
By the triangle inequality, we have that $D \le \widetilde{D}$.
Furthermore, since $d(s, \sigma(s)) \le D$ and $\max_{x,y \in \Sigma} d(x,y) \le D$, we have that $\widetilde{D} \le 3 D$.

\begin{theorem} \label{thm:spa:diameter_sparse}
    A $3$-approximation of the diameter can be computed in graphs that contain at most $n + O(n^{1/3})$ edges and have arboricity $O(\log n)$ in time $O(\log^2 n)$, w.h.p.
\end{theorem}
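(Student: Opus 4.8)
The plan is to have every node output the quantity
\[
    \widetilde{D} \;=\; 2 \max_{s \in V} d(s, \sigma(s)) \;+\; \max_{x,y \in \Sigma} d(x,y),
\]
and to establish that $D \le \widetilde{D} \le 3D$, so that $\widetilde{D}$ is a valid $3$-approximation.

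First I would run the full pipeline developed in Sections~\ref{sec:spa:sparse:tree_decomposition}--\ref{sec:spa:sparse:apsp_cut_nodes}. Compute an MST $M$ (Lemma~\ref{lem:spa:mst}) in time $O(\log^2 n)$; build the balanced decomposition tree $T_M$ and the distance graph $D_T$ (Theorem~\ref{thm:spa:decomposition_tree} and Lemma~\ref{lem:spa:distance_graph}) in time $O(\log^2 n)$; let every node $v$ learn $\sigma(v)$ and $d(v,\sigma(v))$ via the distributed Bellman--Ford computation on $D_T$ (Lemma~\ref{lem:spa:nearest_shortcut}) in time $O(\log n)$; and compute all pairwise distances $d(x,y)$ for $x,y \in \Sigma$ via $\lceil\log n\rceil + 2$ min-plus matrix squarings (Lemma~\ref{lem:spa:matrix_mult}) in time $O(\log^2 n)$, w.h.p. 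After this, $\max_{x,y\in\Sigma} d(x,y)$ is the maximum over the values held by the representatives $h(i,j)$ (at most $O(\log n)$ per node, w.h.p.), and $\max_{s\in V} d(s,\sigma(s))$ is a maximum over one value per node; both maxima can be computed and broadcast to all nodes in time $O(\log n)$ by a single application of Lemma~\ref{lem:spa:aggregation} on $M$, with MAX as the aggregate function. Summing the phases gives total time $O(\log^2 n)$, w.h.p. The only corner case is $\Sigma = \emptyset$, i.e.\ $G$ is itself a tree, which every node recognizes right after computing $M$; in that case we instead invoke Theorem~\ref{thm:spa:diameter:tree} and finish in time $O(\log n)$.

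For correctness, $D \le \widetilde{D}$ follows from the triangle inequality: taking $s,t$ with $d(s,t)=D$ (and $\Sigma\neq\emptyset$, so $\sigma(s),\sigma(t)$ exist), $d(s,t) \le d(s,\sigma(s)) + d(\sigma(s),\sigma(t)) + d(\sigma(t),t)$, and each summand is at most the corresponding maximum appearing in $\widetilde{D}$. Conversely $\widetilde{D} \le 3D$ because $d(s,\sigma(s)) \le D$ for every $s$ (since $D$ is the diameter) and $\max_{x,y\in\Sigma} d(x,y) \le D$, hence $\widetilde{D} \le 2D + D = 3D$.

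I do not expect a genuine obstacle: all the heavy machinery — the hierarchical decomposition, the distance graph, and the load-balanced min-plus matrix multiplication over $\Sigma$ built on the multicast primitives of~\cite{AGG+19} — is already provided by the preceding lemmas, and the only new content is the elementary two-sided bound on $\widetilde{D}$ together with the observation that the two maxima can be harvested by one aggregation over $M$. The one point that deserves a sentence of care is that $\max_{x,y\in\Sigma} d(x,y)$ lives at the (randomly chosen) representatives rather than at the shortcut nodes; but since each representative is an ordinary node of $V$ holding only $O(\log n)$ such values w.h.p., feeding these into Lemma~\ref{lem:spa:aggregation} on $M$ handles it directly.
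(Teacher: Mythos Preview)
Your proposal is correct and follows essentially the same approach as the paper: the same estimator $\widetilde{D} = 2\max_{s\in V} d(s,\sigma(s)) + \max_{x,y\in\Sigma} d(x,y)$, the same pipeline of Lemmas~\ref{lem:spa:distance_graph}, \ref{lem:spa:nearest_shortcut}, and \ref{lem:spa:matrix_mult} followed by Lemma~\ref{lem:spa:aggregation} on $M$, and the same two-line correctness argument via the triangle inequality and the trivial bounds $d(s,\sigma(s)),\ \max_{x,y\in\Sigma}d(x,y)\le D$. Your explicit handling of the $\Sigma=\emptyset$ case and of where the pairwise distances reside (at the representatives) only adds detail the paper leaves implicit.
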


\bibliography{literature}

\lng{
\newpage 
\appendix

\section{PRAM Simulation} \label{sec:spa:pram_simulation}

Let $G$ be a graph with arboricity $a$ and let $\mathcal{A}$ be a PRAM algorithm that solves a graph problem on $G$ using $N$ processors with depth $T$.
Obviously, the total size of the input is $O(|E|)$.

\begin{lemma} \label{lem:spa:pram_sim}
    An EREW PRAM algorithm $\mathcal{A}$ can be simulated in the hybrid model in time $O(a/(\log n) + T \cdot (N/(n \log n) + 1) + \log n)$, w.h.p.
    A CRCW PRAM algorithm $\mathcal{A}$ can be simulated in time $O(a/(\log n) + T \cdot (N/n + \log n))$, w.h.p.
\end{lemma}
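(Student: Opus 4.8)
The plan is to place the memory cells and processors of $\mathcal{A}$ onto the $n$ nodes and simulate $\mathcal{A}$ step by step, using the global network for the memory accesses. Concretely, I would fix a (pseudo-)random hash function $h$ mapping each memory cell to a node of $V$ (the shared randomness being obtained by broadcasting $\Theta(\log^2 n)$ random bits in $O(\log n)$ rounds, see \cite{AGG+19}), let node $h(c)$ store cell $c$, and spread the $N$ processors evenly and deterministically over $V$ so that each node simulates $O(N/n)$ of them. The setup then amounts only to moving the input, i.e.\ the edge set of $G$ (of total size $O(|E|)=O(an)$, but possibly stored in a very skewed way), into the hashed memory: first compute, in the local network, an orientation of $G$ of outdegree $O(a)$ via the Nash--Williams-based $H$-partition algorithm of \cite{BE10} in $O(\log n)$ rounds, so that afterwards every node is the unique owner of only $O(a)$ edges; then each node sends each of its owned edges, under a canonical identifier formed from the ids of its two endpoints, to that edge's hashed location. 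Each node sends at most $O(a)$ of these messages (its owned edges) and, since the $O(an)$ messages land in $n$ random bins, receives at most $O(a+\log n)$ of them w.h.p., so this costs $O(a/\log n + \log n)$ rounds; untouched cells keep their default value for free. Hence setup takes $O(a/\log n + \log n)$ rounds.

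For the EREW simulation, in each PRAM step every simulated processor sends a single request --- a read, or a write carrying the new value, tagged with the requester --- over the global network to the node holding the target cell, which answers reads and applies writes. In an EREW step the $N$ processors access $O(N)$ \emph{distinct} cells, whose holders are random, so a Chernoff bound gives that each node is hit by $O(N/n+\log n)$ requests w.h.p., while each node issues only $O(N/n)$ requests; thus one step costs $O(N/(n\log n)+1)$ rounds and $T$ steps cost $O(T\cdot(N/(n\log n)+1))$, which together with the setup gives the first bound. One has to be slightly careful that the set of accessed cells itself depends on $h$; this is handled, for instance by taking $h$ to be $\Theta(\log n)$-wise independent and/or union-bounding the concentration argument over the $T$ steps, and it does not change the round count.

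For CRCW the same scheme works, except that in a single step up to $\Theta(N)$ requests may target the same cell, so its holder cannot receive them all directly. Instead I would route and \emph{combine} the $O(N)$ requests of a step using the aggregation and multicast primitives of \cite[Theorems~2.2 and~2.4]{AGG+19}: concurrent writes to a cell are merged while being routed, according to $\mathcal{A}$'s write-conflict rule, and a concurrent read is served by letting one representative per cell read the holder's value and multicast it back to all requesters of that cell. With $O(N)$ distinct keys (the accessed cells) and $O(N/n+\log n)$ requests per node, these primitives run in $O(N/n+\log n)$ rounds per step, which yields the claimed $O(a/\log n + T\cdot(N/n+\log n))$ bound in total.

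The hard part will not be the per-step routing but the setup: the input graph may be stored so unevenly that a single node holds $\Omega(n)$ edges, yet the bound demands that no node spend more than roughly its fair share $O(a/\log n)$ of rounds moving data --- which is exactly what the $O(a)$-outdegree orientation buys, and the reason $a/\log n$ enters additively and is incurred only once rather than multiplied by $T$. The remaining points that need care are the circularity in the EREW concentration bound mentioned above and, for CRCW, checking that the request-combining machinery does not itself incur congestion beyond $O(N/n+\log n)$ per step; both are standard but have to be spelled out.
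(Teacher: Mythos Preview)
Your proposal is correct and follows essentially the same approach as the paper: hash memory cells to nodes via shared randomness, load the input using an $O(a)$-orientation from \cite{BE10}, and simulate each PRAM step by routing requests over the global network, directly for EREW and through the aggregation/multicast machinery of \cite{AGG+19} for CRCW. The one presentational difference is that the paper explicitly splits each step into $\lceil N/(n\log n)\rceil$ sub-steps with $O(\log n)$ active processors per node, which makes the per-round receiver congestion bound immediate; you instead let each node run all its $O(N/n)$ processors and batch the messages, which amounts to the same thing but needs the per-round balls-in-bins argument you sketch. Your remark about the adaptivity of the accessed cell set to $h$ is a point the paper does not address.
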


\begin{proof}
    Since in a PRAM the processes work over a set of shared memory cells $M$, we first need to map all of these cells uniformly onto the nodes.
    The total number of memory cells $|M|$ is arbitrary but polynomial and each memory cell is identified by a unique address $x$ and is mapped to a node $h(x)$, where $h: M \rightarrow V$ is a pseudo-random hash function.
    For this, we need shared randomness. 
    It suffices to have $\Theta(\log n)$-independence, for which only $\Theta(\log^2 n)$ bits suffice. 
    Broadcasting these $\Theta(\log^2 n)$ bits to all nodes takes time $O(\log n)$.
    
    To deliver $x$ to $h(x)$, the nodes compute an $O(a)$-orientation in time $O(\log n)$ \cite[Corollary 3.12]{BE10}.
    Note that each edge in $G$ can be represented by a constant amount of memory cells.
    When the edge $\{v, w\}$ that corresponds to $v$'s memory cell with address $x$ is directed towards $v$, $v$ fills in the part of the input that corresponds to $\{v, w\}$ by sending messages to all nodes that hold the corresponding memory cells (of which there can only be constantly many).
    Since each node has to send at most $O(a)$ messages, it can send them out in time $O(a / \log n)$ by sending them in batches of size $\lceil \log n \rceil$.
        
    We are now able to describe the simulation of $\mathcal A$: Let $k = n \lceil \log n \rceil$.
    Each step of $\mathcal{A}$ is divided into $\lceil N / k\rceil$ sub-steps, where in sub-step $t$ the processors $(t-1)k + 1, (t-1)k+2, \ldots, \min\{N, t k\}$ are active.
    Each node simulates $O(\log n)$ processors.
    Specifically, node $i$ simulates the processors $(t-1)k + (i-1)\lceil \log n \rceil + 1$ to $\min\{N, (t-1)k + i \lceil \log n \rceil\}$.
    When a processor attempts to access memory cell $x$ in some sub-step, the node that simulates it sends a message to the node $h(x)$, which returns the requested data in the next round.
    Since each node simulates $O(\log n)$ processors, each node only sends $O(\log n)$ requests in each sub-step.
    Also, in each sub-step at most $n \lceil \log n \rceil$ requests to distinct memory cells are sent in total as at most $n \lceil \log n \rceil$ are active in each sub-step.
    These requests are stored at positions chosen uniformly and independently at random, so each node only has to respond to $O(\log n)$ requests, w.h.p.
        
    In an EREW PRAM algorithm, the requests and responses can be sent immediately, since each memory location will only be accessed by at most one processor at a time.
    In this case, one round of the simulation takes time $O(N/(n \log n) + 1)$.
        
    In a CRCW PRAM algorithm, it may happen that the same cell is read or written by multiple processors.
    Thus, the processors cannot sent requests directly, but need to participate in aggregations towards the respective memory cells using techniques from \cite{AGG+19}.
    In case of a write, the aggregation determines which value is actually written; in case of a read, the aggregation is used to construct a multicast tree which is used to inform all nodes that are interested in the particular memory cell about its value.
    Since there can be only $O(n \log n)$ members of aggregation/multicast groups, and by the argument above each node only participates and is target of $O(\log n)$ aggregations (at most one for each processor it simulates), performing a sub-step takes time $O(\log n)$, w.h.p., by \cite{AGG+19}.
    Thus, each step can be performed in time $O(N/n + \log n)$, w.h.p. (note that the additional $\log n$-overhead stems from the fact in case $N > n$, one single node still needs time $O(\log n)$ to simulate a sub-step).
\end{proof}
}

\end{document}